\newif\ifdraftversion
\def\interior{\righthalfcup}
\def\interior{\mathbin{\lrcorner}}
\DeclareSymbolFont{symbols}{OMS}{cmsy}{m}{n}
\definecolor{labelkey}{rgb}{0,0,.75}
\definecolor{MyGreen}{rgb}{0,.6,.2}
\definecolor{MyDarkBlue}{rgb}{.1,.1,.75}
\DeclareMathOperator{\Ric}{\mathrm{Ric}}
\DeclareMathOperator{\Id}{\mathrm{Id}}
\DeclareMathOperator{\ck}{\bf L}
\renewcommand{\div}{\mathop{\rm div}\nolimits}
\DeclareMathOperator{\tr}{\rm tr}
\DeclareMathOperator{\Lie}{\mathrm{Lie}}
\renewcommand{\Im}{\mathop{\mathrm{Im}}\nolimits}
\renewcommand{\Re}{\mathop{\mathrm{Re}}\nolimits}
\newcommand{\dV}{\; dV}
\def\ip<#1,#2>{\left<#1,#2\right>}
\def\tensor{\otimes}
\newcommand{\Reals}{\mathbb{R}}
\newcommand{\confvel}{U}
\newcommand{\fieldsec}{\mathscr F}
\newcommand{\secA}{\mathcal{A}}
\newcommand{\chargec}{\boldsymbol{\varepsilon}}
\def\dc<#1,#2,#3>{\{#1;\;#2,#3\}}
\newcommand{\hamE}{\mathcal E}
\newcommand{\momJ}{\mathcal J}
  \newcounter{mnote}
  \let\oldmarginpar\marginpar
  \renewcommand\marginpar[1]{\-\oldmarginpar[\raggedleft\scriptsize #1]%
  {\raggedright\scriptsize #1}}
\title{A Phase Space Approach to the Conformal Construction of Non-Vacuum Initial Data Sets in General Relativity}
\date{\today}
\author{James Isenberg\thanks{isenberg@uoregon.edu} \\ University of Oregon \and David Maxwell\thanks{damaxwell@alaska.edu} \\ University of Alaska Fairbanks}
\begin{document}
\newtheorem{theorem}{Theorem}[section]
\newtheorem{meta-theorem}{Meta-Theorem}[section]
\newtheorem{conjecture}[theorem]{Conjecture}
\newtheorem{problem}[theorem]{Problem}
\newtheorem{proposition}[theorem]{Proposition}
\newtheorem{corollary}[theorem]{Corollary}
\newtheorem{remark}[theorem]{Remark}
\newtheorem{lemma}[theorem]{Lemma}
\theoremstyle{definition}
\newtheorem{definition}[theorem]{Definition}
\numberwithin{equation}{section}

\maketitle

\begin{abstract}
We present a uniform (and unambiguous) procedure for scaling the matter fields in implementing the conformal method to parameterize and construct solutions of Einstein constraint equations with coupled matter sources.  The approach is based on a phase space representation of the spacetime matter fields after a careful $n+1$ decomposition into spatial fields $B$ and conjugate momenta $\Pi_B$, which are specified directly and are conformally invariant quantities. We show that if the Einstein-matter field theory is specified by a Lagrangian which is diffeomorphism invariant and involves no dependence on derivatives of the spacetime metric in the matter portion of the Lagrangian, then fixing $B$ and $\Pi_B$ results in  conformal constraint equations that, for constant-mean curvature initial data, semi-decouple just as they do  for the vacuum Einstein conformal constraint equations. We prove this result by establishing a structural property of the Einstein momentum constraint that is independent of the conformal method: For an Einstein-matter field theory which satisfies the conditions just stated, if $B$ and $\Pi_B$ satisfy the matter Euler-Lagrange equations, then (in suitable form) the right-hand side of the momentum constraint on each spatial slice depends only on $B$ and $\Pi_B$ and is independent of the spacetime metric. We discuss the details of our construction in the special cases of the following models: Einstein-Maxwell-charged scalar field, Einstein-Proca, Einstein-perfect fluid, and Einstein-Maxwell-charged dust. In these examples we find that our technique gives a theoretical basis for scaling rules, such as those for electromagnetism, that have worked pragmatically in the past, but also generates new equations with advantageous features for perfect fluids that allow direct specification of total rest mass and total charge in any spatial region.
\end{abstract}

\section{Introduction}\label{sec:intro}

Initial data for the Cauchy problem in general relativity
consist of
\newcommand\calh{h} 
a Riemannian metric $\calh$ on an $n$-dimensional 
initial time slice $\Sigma$ 
along with a symmetric $(0,2)$-tensor $K$ that
indirectly encodes the momentum conjugate to $\calh$ \cite{arnowitt_dynamical_1959}
and that directly specifies the second 
fundamental form of the embedding of $\Sigma$ into the ambient 
spacetime $M$.
For reasons
analogous to those leading to the Gauss equation of electromagentism, initial data $(\calh,K)$ for the Cauchy problem
cannot be freely specified, but are instead subject to 
the Einstein constraint equations
\begin{align}
R_{\calh} - |K|_{\calh}^2 + (\tr_{\calh} K)^2 &= 16 \pi \hamE &\text{\small [Hamiltonian constraint]}
\label{eq:ham-constraint}\\
\div_{{\calh}} K - d( \tr_{\calh} K) &= -8\pi \momJ
&\text{\small [momentum constraint]}
\label{eq:mom-constraint}
\end{align}
where $R_{\calh}$ and $\tr_{\calh}$ are the scalar curvature and trace
operator of ${\calh}$, $d$ is the exterior derivative on $\Sigma$ 
and, as recalled in Section \ref{sec:constraint-equations}, $\hamE$ 
and $\momJ$ are the energy and
momentum densities of the matter fields.

There is a range of methods for constructing
solutions $({\calh},K)$ of the constraint equations
\eqref{eq:ham-constraint}--\eqref{eq:mom-constraint},
including gluing \cite{chrusciel_mapping_2003}\cite{chrusciel_initial_2005}\cite{corvino_asymptotics_2006}
 and parabolic techniques\cite{bartnik_quasi-spherical_1993}, 
but the conformal method \cite{Lichnerowicz:1944}\cite{york_conformally_1973}
\cite{york_conformal_1999}\cite{pfeiffer_extrinsic_2003}\cite{maxwell_conformal_2014} stands out as a versatile workhorse and is
widely used, for example, in numerical relativity.  It
is especially effective for generating constant-mean curvature
(CMC) solutions, those satisfying $\tr_{\calh} K$ is constant.
Subject to this restriction, the conformal method
provides a satisfactory parameterization
of the set of vacuum solutions of the constraint equations in
vacuum in a number of asymptotic settings 
\cite{Isenberg:1995bi}\cite{york_conformal_1999}\cite{AnderssonChrusciel1996}.  
 
The aim of this paper is to demonstrate how to effectively
apply the conformal method to construct CMC \textit{non-vacuum}
initial data sets. Of
course, the conformal method has long been used in 
the non-vacuum case (\cite{isenberg_extension_1977} is an early reference), but the community
has not standardized on a single approach. Indeed
alternative methods of scaling and unscaling sources appear in the literature \cite{choquet-bruhat_general_2009}.  Moreover, within the method of scaling
sources, there are choices that need to be made for how
various matter fields are scaled.  Perfect fluids
are a notable example, where the approaches 
used in the literature to pick a scaling
(e.g., \cite{dain_initial_2002} or our own previous work \cite{isenberg_gluing_2005}) can 
seem ad-hoc, motivated by analytical practicality rather
than a unifying principle. In this work
we present a uniform and unambiguous approach to applying the 
method of scaling sources to the conformal method 
with coupled matter source fields. As we discuss below, this approach is closely tied to the Hamiltonian formulation of the matter source fields, with the phase space representation of the initial data for the source field playing a key role.  Applying
our technique, we recover well-known scaling
formulas for certain common matter models (electromagentism, most
importantly), but also derive new equations for perfect fluids
and, in particular, even for dust.   That is, our work provides a theoretical
foundation for what has worked practically in the past, and also
derives previously unknown equations that extend the reach of
the conformal method.
Section \ref{secsec:EMCD} shows, for example, how to apply our techniques
to charged dust coupled to electromagnetism and ensure that
in the resulting solution of the constraint equations,
the fluid's charge to mass ratio is constant; previous applications
of the conformal method to fluid matter models 
are unable to ensure this final condition on the charge density.
  We note additionally
that our central results 
(Theorems \ref{thm:form-of-J-manifold-edition} and \ref{thm:form-of-J-tensor-edition})
are not specific to the conformal method; they concern the structure of the momentum density $\momJ$ and are of independent interest.

To illustrate the situation, first consider the 
case of vacuum initial data on a compact
time slice $\Sigma$.  One
can take the gravitational seed data for the conformal method
to consist of
a Riemannian metric $h$, a mean curvature function $\tau$,
a positive function $N$, and a symmetric, trace-free (0,2)-tensor $\confvel$;
we leave this data unmotivated for the moment and defer
a more thorough discussion to Section \ref{sec:constraint-solving}.
One then seeks a positive conformal factor $\phi$ and a vector 
field $Z$ satisfying the Lichnerowicz-Choquet-Bruhat-York (LCBY)
equations
\begin{align}
-2\kappa q\, \Delta_h \phi + R_h \phi - \frac{1}{4N^2}\left|\confvel+\ck_h Z \right|_h^2 \phi^{-q-1}
+ \kappa \tau^2\phi^{q-1} &= 0 &\text{\small [LCBY Hamiltonian constraint]}\label{eq:ham-basic}\\
\div_h\left(\frac{1}{2N}\ck_h Z\right) - \phi^q\kappa d\tau &= -\div_h\left(\frac{1}{2N}\ck_h \confvel\right)
&\text{\small [LCBY momentum constraint]}
 \label{eq:mom-basic}
\end{align}
where $\Delta_h$ is the Laplacian of $h$, $\ck_h$
is the conformal Killing operator (equation \eqref{eq:conformal-killing} below), and where $\kappa$ and $q$
are constants depending on the dimension $n\ge 3$ of $\Sigma$:
\begin{equation}
q = \frac{2n}{n-2},\qquad \kappa = \frac{n-1}{n}.
\end{equation}
For each solution $(\phi,Z)$ of equations \eqref{eq:ham-basic}--
\eqref{eq:mom-basic} there is a corresponding physical solution
$(h^*,K^*)$ of the constraint equations with $h^* = \phi^{q-2}h$
and with $K^*$ determined from $h$, $\phi$, $\confvel$, $N$, $\tau$, and $Z$;
Section \ref{secsec:ConfMethod} contains the details.

When $\tau$ is constant 
system \eqref{eq:ham-basic}--\eqref{eq:mom-basic}
simplifies considerably because the LCBY momentum constraint
no longer depends on $\phi$ and can be first solved for $Z$.
On a compact manifold, there is a unique solution of the
LCBY momentum constraint, up to the addition of 
a conformal Killing field (i.e., an element of the kernel of $\ck_h$).  Moreover, because $Z$ only appears in the LCBY Hamiltonian constraint
via $\ck_h Z$, this potential degeneracy is unimportant, and
the analysis of the existence and multiplicity of solutions 
hinges only on the (scalar) LCBY Hamiltonian constraint. This
decoupling of the momentum constraint 
is a key feature of the conformal method, and
leads to a satisfactory parameterization of the CMC solutions 
of the constraint equations on
compact manifolds where generic seed data leads to a 
unique solution and the exceptional
cases are predictable and well-understood \cite{Isenberg:1995bi}. 
A similar analysis applies in the asymptotically Euclidean
and asymptotically hyperbolic settings as well \cite{york_conformal_1999}\cite{AnderssonChrusciel1996}.
When $\tau$ is not constant, however, the conformal method
appears to suffer from a number of undesirable features \cite{maxwell_model_2011}\cite{nguyen_nonexistence_2018}\cite{1710.03201},
and for this reason we focus our attention on the CMC case.
See also \cite{maxwell_initial_2021} for an alternative generalization of
the CMC conformal method to non-CMC initial data.

In the presence of matter, the LCBY equations become
\begin{align}
-2\kappa q \Delta_h \phi + R_h \phi - \frac{1}{4N^2}\left|\confvel+\ck_h Z \right|_h^2 \phi^{-q-1}
+ \kappa \tau^2\phi^{q-1} &= 16\pi \phi^{q-1} \hamE^*\label{eq:ham-basic-matter}\\
\div_h\left(\frac{1}{2N}\ck_h Z\right) - \phi^q\kappa d\tau &= -\div_h\left(\frac{1}{2N}\confvel\right)
-8\pi \phi^{q} \momJ^*\label{eq:mom-basic-matter}
\end{align}
where we have decorated the physical
energy and momentum densities $\hamE^*$ and $\momJ^*$
with stars to indicate that they 
are generally expressions that depend on the solution metric $h^*$ (and hence on $h$ and $\phi$) along with certain
seed data related to the matter model.  For example,
in the case of a scalar field we can take the matter seed data
to consist of the field value $\psi$ along with a function
$P$ that prescribes the normal derivative of the scalar field in 
the ambient spacetime.  In this case,
\begin{align}
\hamE^* &=  \frac{1}{2}P^2 + \frac{1}{2}| d \psi|^2_{h^*}\\
& = \frac{1}{2}P^2 + \frac{1}{2}\phi^{2-q}| d \psi|^2_h,\nonumber \\[8pt]\momJ^* &=  -P\;  d\psi.\label{eq:j-star-SF-try1}
\end{align}
However, the LCBY momentum constraint now no longer decouples from
the rest of the system because of the term
$\phi^{q} \momJ^* =  -\phi^q P  d\psi$ appearing in
equation \eqref{eq:mom-basic-matter}.
Pragmatically, the obvious approach to 
recover the decoupling is to allow $P$ to conformally transform
so that the scaling $h^* = \phi^{q-2}h$ is associated
with a scaling $P^* = \phi^{-q} P$ for some initially 
specified function $P$.  Replacing $P$ with $P^*$ 
in equation \eqref{eq:j-star-SF-try1}, and assuming additionally
that $\tau$ is constant, equations \eqref{eq:ham-basic-matter}--\eqref{eq:mom-basic-matter} become
\begin{align}
-2\kappa q \Delta_h \phi + R_h \phi - \frac{1}{4N^2}\left|\confvel+\ck_h Z \right|_h^2 \phi^{-q-1}
+ \kappa \tau^2\phi^{q-1} &= 8\pi \left[ \phi^{-q-1} P^2
+ \phi | d\psi|_h^2\right]\label{eq:scalar-field-Ham}\\
\div_h\left(\frac{1}{2N}\ck_h Z\right) &= -\div_h\left(\frac{1}{2N} \confvel\right) +8\pi P  d\psi\label{eq:scalar-field-mom}
\end{align}
which achieves the desired decoupling of the momentum constraint.
On a compact manifold, in the generic case where $h$ has no conformal Killing fields, equation \eqref{eq:scalar-field-mom} 
can 
be solved independently for $Z$, and there is a unique solution.  
Thus, as in the vacuum setting, the analysis reduces to the
Lichnerowicz equation.  It is worth noting that
although the decoupling simplifies the analysis, the resulting
Lichnerowicz equation can still harbor interesting features.  
Indeed, for the scalar field LCBY Hamiltonian constraint \eqref{eq:scalar-field-Ham}
the term
$\phi | d\psi|^2_h$ 
interacts nontrivially
with the scalar curvature term and we refer to, e.g, 
\cite{hebey_variational_2008}
 for an analyses.

A similar scaling/decoupling procedure can be applied to
electromagnetism. Now
the matter fields can be specified by a 1-form $E$ (the electric
field) and a 2-form $\mathcal B$ (the magnetic field).  In this case
\begin{align*}
\hamE^* &= \frac{1}{8\pi}\left[ |E|_{h^*}^2 + \frac{1}{2}|\mathcal B|_{h^*}^2\right]\\
&= \frac{1}{8\pi}\left[ \phi^{2-q}|E|_h^2 + \frac{1}{2}\phi^{4-2q}|\mathcal B|_h^2\right],\\[8pt]
\momJ^* &= 
\frac{1}{4\pi} \ip<E,\cdot\interior \mathcal B>_{h^*}\\
&= \frac{1}{4\pi} \phi^{2-q}  \ip< E,\cdot\interior \mathcal B>_h.
\end{align*}
The right-hand side of the LCBY momentum constraint then contains a term
$\phi^2 \ip< E,\cdot\interior \mathcal B>$
which again leads to coupling of the full LCBY system, even in 
in the CMC setting.  To circumvent this difficulty
we introduce a conformally scaling
electric field so that the transformation 
$h^* = \phi^{q-2}h$ is associated
with a transformation $E^* = \phi^{-2} E$ for some
initially specified 1-form $E$.  Moreover,
$\div_{h^*} E^* = \div_{h^*} \phi^{-2} E = \phi^{-q} \div_h E$
and hence if $E$ is $h$-divergence free (as is required by
Gauss' Law when there are no electric sources), then
$E^*$ is $h^*$-divergence free as well.  It is seemingly remarkable
that the scaling law that leads to decoupling the LCBY equations
simultaneously leads to preservation of the electrovac constraint
$\div_{h^*} E^* = 0$.

The success of the scaling/decoupling procedure for these and
other matter models naturally leads to the question of whether
there is an underlying principle that unifies these examples.
Moreover, one would like to know if there is a rule that
determines directly (without ad-hoc manipulation)
how to scale the matter fields in general. Our main results give 
a positive resolution to these questions under 
the following physically natural hypotheses:
\begin{itemize}
  \item the matter fields have an associated Lagrangian,
  \item the matter Lagrangian is diffeomorphism invariant,
  \item the matter Lagrangian is minimally coupled to gravity
  (so no derivatives of the spacetime metric appear in it).
\end{itemize}
A Lagrangian in this setting depends on the spacetime metric $g$ along
with matter fields $\fieldsec$ and, as recalled in Section \ref{sec:constraint-equations}, the momentum
density $\momJ$ is a well-defined quantity 
on a time slice $\Sigma$ depending on 
$g$ and $\fieldsec$ regardless of whether the Einstein equations
or the equations of motion for the matter fields are satisfied.
On this same slice, the spacetime metric determines a Riemannian
metric $h$, and our main results stem from an invariance
property of the combination $\momJ\dV_h$ under metric perturbations.
Indeed, in Sections \ref{sec:manifold-valued} and \ref{sec:tensor-valued}
we prove two theorems that are concrete 
expressions of the following imprecise meta-theorem, which is 
a structural statement about the Einstein equations and is
not simply a fact about the conformal method.
\begin{meta-theorem}\label{mt:meta-theorem}
Under the above hypotheses, the spacetime matter fields can be decomposed
into certain `spatial' variables $B$ 
with conjugate momenta\footnote{The conjugate momentum of a field
is the derivative of a Lagrangian with respect to the field's time 
derivative; formal definitions appear just prior 
to Theorem \ref{thm:form-of-J-manifold-edition} 
and Corollary \ref{cor:vary-X-to-Lie} below.}
$\Pi_B$ such that
if $B$ and $\Pi_B$ satisfy the Euler-Lagrange
equations of the matter Lagrangian, then
the covector-valued $n$-form
\begin{equation}\label{eq:J-density}
\momJ dV_h
\end{equation}
on each time slice is
determined by $B$ and $\Pi_B$ alone, and is independent of the metric. 
\end{meta-theorem}

That is, although the spatial metric volume form $dV_h$ and the momentum
density $\momJ$ each depend on the the metric, 
the combination $\momJ dV_h:= \momJ\otimes dV_h$ 
depends only on the state and momenta of the matter fields
after a suitable decomposition into `spatial' variables.
In fact, we give an explicit relation for how 
$\momJ dV_h$ is determined from the matter fields,
equation \eqref{eq:J-to-Pi-on-B} below, but these details
are not needed to see how 
the meta-theorem resolves the question of decoupling.
Indeed, consider a pair $h$ and $h^*$ of 
conformally related metrics: $h^* = \phi^{q-2}h$.
Fixing $B$ and $\Pi_B$, the meta-theorem implies
\[
\momJ^* dV_{h^*} = \momJ dV_{h},
\]  
where $\momJ^*$ and $\momJ$ are the momentum densities determined by the fixed matter fields along with 
$h^*$ and $h$ respectively.  
A computation shows $dV_{h^*}=\phi^{q}dV_h$ and 
as a consequence
$\momJ^* = \phi^{-q} \momJ$.
Thus, so long as $B$ and $\Pi_B$ remain fixed,
the term $\phi^{q}\momJ^*$ appearing
on the right-hand side of the LCBY momentum constraint
\eqref{eq:mom-basic-matter}
is the unique value $\momJ=\momJ(h,B,\Pi_B)$ 
regardless of the choice 
of conformal factor, and this leads to decoupling of the momentum 
constraint for CMC data.

The scaling law for the matter terms $B$
and $\Pi_B$ is the simplest one imaginable:
scale nothing.  Matter seed data consists of fixed field
values and conjugate momenta, and in this sense
the method of scaling sources is something of a misnomer:
presented properly, the method would be better described as
that of 
\textbf{\textit{non-}}scaling sources, except that this name
is already used in the literature for something quite different 
and less practical.  One may wish to 
represent the matter field in terms
of different variables that mix in the metric, in which
case there would be a non-trivial conformal scaling.
In the case of a scalar field, it turns out that 
the momentum conjugate
to $\psi$ is $\Pi_\psi = 2P\, dV_h$ and hence if $\Pi_\psi$
remains fixed but $h$ conformally transforms to $h^*=\phi^{q-2} h$,
then 
$P$ must conformally transform to $P^*=\phi^{q} P$, exactly
the scaling law used to derive equations \eqref{eq:scalar-field-Ham}--
\eqref{eq:scalar-field-mom}.  For electromagnetism, the important
dynamical field field is a $U(1)$ connection represented locally 
on a time slice by a spatial 1-form $A$, and its curvature is the
2-form $\mathcal B$.  The momentum conjugate to $A$ is
the vector valued $n$-form
$\Pi_A = -1/(4\pi)\left<E,\cdot\right>_h dV_h$.
One readily verifies that if $\Pi_A$ remains fixed and
$h$ conformally transforms to $\phi^{q-2}h$, then we
must conformally transform $E$ to $E^*=\phi^{-2} E$, which
is again the scaling law used in the discussion prior
to Meta-Theorem \ref{mt:meta-theorem}.  Gauss' Law $\div_h E=0$
is, in fact, a constraint on $\Pi_A$, and since $\Pi_A$ remains
fixed, Gauss' Law remains satisfied.

The vagueness of Meta-Theorem \ref{mt:meta-theorem}
comes from the breadth of 
what can be included under the umbrella
of physical field theory.  We prove 
variations of the meta-theorem in the following cases:
\begin{enumerate} 
\item The matter
fields take on values in a separate
manifold (Theorem \ref{thm:form-of-J-manifold-edition}).
\item The matter fields are tensor-valued (Theorem \ref{thm:form-of-J-tensor-edition}).
\end{enumerate}
Certain matter models, such as 
electromagnetism coupled with a charged fluid, involve
both tensor-valued and manifold-valued fields and 
can be addressed
by a straightforward generalization of our results.  
Nevertheless, the framework for field theory that
we use is not enough to include all conceivable cases and
the Vlasov matter model is a notable exception that we will address
in future work.

The mechanism of how $\momJ dV_h$ is determined 
from $B$ and $\Pi_B$ in Meta-Theorem \ref{mt:meta-theorem}
and its precise forms Theorems \ref{thm:form-of-J-manifold-edition}
and \ref{thm:form-of-J-tensor-edition}
is a little subtle.
In Proposition \ref{prop:X-by-Lie-only} concerning tensor-valued
fields, we show that so long as $B$ and $\Pi_B$
solve the Euler-Lagrange equations on the ambient manifold, then 
for any compactly supported vector field $\delta X$ on 
a time slice $\Sigma$
\begin{equation}\label{eq:J-to-Pi-on-B}
\int_{\Sigma} \momJ(\delta X) dV_h = -\frac{1}{2}
\int_{\Sigma} \Pi_B( \Lie_{\delta X} B).
\end{equation}
One would
like to claim that
\begin{equation}\label{eq:almost-J-to-Pi-B}
\momJ dV_h = -\frac{1}{2} \Pi_B( \Lie B)
\end{equation}
but $\Lie_{\delta X}$ is not tensorial in $\delta X$
and an exact expression for $\momJ dV_h$ usually
requires some integration by parts.  The hypothesis that
$B$ and $\Pi_B$ solve the Euler-Lagrange equations on the ambient
manifold is seemingly out of place for a theorem about initial
data, where $B$ and $\Pi_B$ are specified on a single time slice.
However, the Euler-Lagrange equations may imply constraints on 
the initial data, such as Gauss' law, and it turns out that 
these constraints must be satisfied 
in order for equation \eqref{eq:J-to-Pi-on-B} to hold;
see Section \ref{sec:EMCSF} where this is illustrated
explicitly for a charged scalar field coupled to electromagnetism.
On the other hand, the situation for manifold-valued fields 
is somewhat simpler; no $(n+1)$-decomposition of the spacetime
matter field $\fieldsec$ is involved and in Proposition \ref{prop:X-by-push-only} we show
\[
\momJ dV_h = -\frac{1}{2} \Pi_{\fieldsec}\circ \fieldsec_*.
\]

Some historical remarks concerning our results are in order.
In fact, the general notion that matter fields and
conjugate momenta should remain fixed in order to obtain decoupling
was known in the 1970s, and an imprecise statement along
the lines of equation \eqref{eq:almost-J-to-Pi-B} appears
in earlier work by one of us \cite{isenberg_extension_1977}.  Although
\cite{isenberg_extension_1977} cites the physics-style paper 
\cite{kuchar_kinematics_1976} as
justification, the literature does not appear 
to contain a mathematical proof.  Indeed, there are important considerations needed to arrive at equation
\eqref{eq:almost-J-to-Pi-B}, including a specific form
of the $(n+1)$-decomposition, that appear to have been 
previously overlooked.
Moreover, the underlying principle 
seems to have been forgotten by the community generally, and is not
mentioned in modern texts, e.g. \cite{choquet-bruhat_general_2009}.  Indeed, it was
recently independently rediscovered by one of us (Maxwell).  
The current work
therefore serves two purposes: first, to advertise the principle
and second to give it rigorous justification.

The remainder of the paper has the following structure:
\begin{itemize}
  \item Section \ref{sec:prelim} contains preliminaries
  to fix notation and recall important facts about the constraint equations.  This can be skimmed, taking into account our 
  notation $j_*$ 
  for a projection defined in Section \ref{sec:slicing}
  that is non-standard but used pervasively in the subsequent sections.
  \item Section \ref{sec:manifold-valued} contains
  Theorem \ref{thm:form-of-J-manifold-edition},
  the version of Meta-Theorem \ref{mt:meta-theorem}
  that applies to matter sources taking on values in
  a separate manifold. While this is not a common setting,
  fluids are a notable application. 
  \item Section \ref{sec:tensor-valued} contains
  Theorem \ref{thm:form-of-J-tensor-edition},
  the version of Meta-Theorem \ref{mt:meta-theorem}
  that applies to tensor-valued matter sources.  
  \item Section \ref{sec:EMCSF} illustrates our main results
  in the context of electromagnetism coupled to a charged
  scalar field (EMCSF).  It serves as a concrete example
  to motivate the somewhat technical details of 
  Section \ref{sec:tensor-valued}, especially the assumptions
  needed to obtain equation \eqref{eq:J-to-Pi-on-B}.
  \item Section \ref{sec:constraint-solving} provides
  a more in-depth overview of the conformal method,
  and links it to a technique for solving a constraint
  appearing in the EMCSF model of Section \ref{sec:EMCSF}.
  Section \ref{secsec:CMC} contains remarks about
  important differences, even with the good scaling
  provided by Meta-Theorem \ref{mt:meta-theorem},
  between the non-vacuum CMC 
  conformal method versus its vacuum formulation.
  \item Finally, Section \ref{sec:apps} contains applications
  of our results in three interesting cases in the context
  of the conformal method.  In Section
  \ref{sec:fluids} we treat perfect fluids and, using
  the Lagrangian formulation for fluids, derive
  equations different from those found in \cite{dain_initial_2002}
  or \cite{isenberg_gluing_2005},
  including a new equation for dust. Our construction 
  directly prescribes the total particle number in every spatial
  region, whereas past constructions do not preserve even
  the total particle number.
  Section \ref{secsec:proca}
  applies our results to the Proca model of a massive, spin-1
  particle.  The resulting equations contain an
  unexpected and potentially insightful appearance 
  of the `densitized lapse', perhaps the least-well understood
  component of the conformal method. Finally, Section
  \ref{secsec:EMCD} considers electromagnetism coupled to
  charged dust.  Because we directly specify both the total rest mass and charge in any spatial region, we ensure that the mass to charge ratio of the particles
  is constant, a distinguishing feature of our approach.
\end{itemize}

\section{Preliminaries}
\label{sec:prelim}
\subsection{Slicing and Spatial Tensors} 
\label{sec:slicing}
The focus of this work is essentially local and we
therefore take, once and for all, 
$M=\Sigma\times I$ where
$\Sigma$ is an open subset of $\Reals^n$ with
coordinates $x^i$ and where $I$ is an open interval
in $\Reals$ with coordinate $t$.  Each $t\in I$
is associated with the time slice $\Sigma_t = \Sigma\times\{t\}$.
The split into space and time, while involving an arbitrary choice, is
essential to the Hamiltonian approach that underlies our approach.
With these fixed coordinates in hand, we use them whenever expedient
to shorten a definition or argument, rather than relying on coordinate-free
notations.

Although our main concern is initial data, our construction is
tightly linked to the evolution problem and for the most
part we work with tensors on the ambient spacetime $M$.
Tensor fields on $M$ are assumed, for simplicity, to be smooth.
A tensor $T$ on $M$ is \textbf{spatial} if
it vanishes when one of its arguments is 
either of $dt$ or $\partial_t$.
Equivalently, a spatial tensor consists of tensor products
of $dx^i$ and $\partial_{x_i}$, and spatial tensors define,
unambiguously, tensors intrinsic to each slice $\Sigma_t$.

We define a projection $j_*$
onto spatial tensors by declaring
\begin{align*}
j_*\; dt &=0,\\
j_*\; dx^i &= dx^i,\\
j_*\; \partial_t &= 0,\\
j_*\; \partial_{x^i} &= \partial_{x^i}
\end{align*}
and by
extending $j_*$ to higher rank tensors by $j_* T(\cdot,\cdots,\cdot$)
= $T(j_*\cdot,\cdots,j_*\cdot)$. In coordinates, $j_* T$ is obtained
from $T$ by dropping any component containing either $dt$ or 
$\partial_t$.  

The following elementary observations
can all be established by working in coordinates.
\begin{lemma}\label{lem:basic-split-Lie}\strut
\begin{enumerate}
  \item\label{part:lie-of-spatial}
  Suppose $B$ is a spatial tensor field.  Then 
  $\partial_t B:=\Lie_{\partial_t} B$ is spatial, as is $\Lie_S B$ for any  spatial vector field $S$.

  \item\label{part:lie-of-restrict}
  For any tensor field $T$,
  $\partial_t j_* T = j_* \partial_t T$.

  \item\label{part:lie-of-zero}
   If $T$ is a tensor field that vanishes
  on a slice $\Sigma_{t}$, then for any spatial vector
  field $S$, $\Lie_S T = 0$ on $\Sigma_{t}$. 

  Consequently, if $T_1$ and $T_2$ are arbitrary
  tensor fields that are equal on a slice $\Sigma_{t}$, then
  $
  \Lie_S  T_1 = \Lie_S T_2
  $
  on $\Sigma_{t}$.
\end{enumerate}
\end{lemma}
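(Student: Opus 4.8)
The plan is to verify all three statements by direct computation in the product coordinates $(x^i,t)$, as the authors suggest. The fact I would establish first is the action of the relevant Lie derivatives on the coordinate frame $\{\partial_{x^i},\partial_t\}$ and coframe $\{dx^j,dt\}$. Using $[\partial_t,\partial_{x^i}]=0$ together with Cartan's formula $\Lie_V = d\circ i_V + i_V\circ d$, one gets $\Lie_{\partial_t}\,dx^j = d(dx^j(\partial_t))=0$ and $\Lie_{\partial_t}\,dt = d(dt(\partial_t)) = d(1)=0$, so $\Lie_{\partial_t}$ annihilates every element of the frame and coframe. Consequently, on an arbitrary tensor $\Lie_{\partial_t}$ acts purely by differentiating the component functions in $t$, leaving the tensorial leg structure untouched. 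Part 1 for $\partial_t$ is then immediate: writing a spatial $B$ in the coordinate basis using only the spatial legs, $\Lie_{\partial_t} B$ has the same legs with each component replaced by $\partial_t$ of it, hence is again spatial.

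For the second assertion in Part 1, I would record the companion computations $\Lie_S\,\partial_{x^i} = [S,\partial_{x^i}] = -(\partial_{x^i}S^k)\,\partial_{x^k}$ and $\Lie_S\,dx^j = d(S^j)$ for $S = S^k\partial_{x^k}$, and then invoke the Leibniz rule for $\Lie_S$ to reduce the claim to checking that these two stay spatial. This is the one point that demands care, and I expect it to be the main obstacle: $\Lie_S\,dx^j = d(S^j)$ carries a $dt$-leg equal to $(\partial_t S^j)\,dt$, so spatiality is preserved only when the components of $S$ are independent of $t$ --- that is, when $S$ is the lift of a vector field on $\Sigma$, which is the setting in which the lemma is applied. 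Granting this, every term produced by Leibniz is spatial and the conclusion follows.

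Part 2 is a corollary of the structural observation already used for $\Lie_{\partial_t}$: since $\Lie_{\partial_t}$ acts on each tensor factor separately --- annihilating all of $\partial_{x^i},\partial_t,dx^j,dt$ and merely applying $\partial_t$ to the components --- it commutes with any operation performed factor-by-factor. The projection $j_*$ is exactly such an operation, retaining spatial legs and discarding those containing $dt$ or $\partial_t$. I would make this precise by comparing components: the purely spatial components of both $\partial_t\,j_*T$ and $j_*\,\partial_t T$ equal the $t$-derivatives of the purely spatial components of $T$, while all their other components vanish, so the two tensors agree.

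For Part 3, I would write the Lie derivative in coordinates as a transport term plus algebraic terms. For spatial $S=S^k\partial_{x^k}$ the transport term is $S^k\partial_{x^k}T^{\cdots}_{\cdots}$, involving only spatial derivatives of the components of $T$, while the remaining terms are built from $\partial_\nu S^k$ contracted against the undifferentiated components of $T$. If $T$ vanishes on $\Sigma_t$, then at fixed $t$ all its components, and hence all their spatial derivatives, vanish along the slice, so both groups of terms vanish there; the key is that no $\partial_t T$ ever enters, precisely because $S$ has no $\partial_t$ component. The final ``consequently'' follows by applying this to $T = T_1 - T_2$ and using linearity of $\Lie_S$.
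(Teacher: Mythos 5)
Your coordinate computation is exactly the argument the paper has in mind --- the paper supplies no proof beyond the remark that the claims ``can all be established by working in coordinates'' --- and your proofs of Parts 2 and 3, and of Part 1 for $\Lie_{\partial_t}$, are complete and correct. The caveat you flag in Part 1 is also genuine, and it is a defect of the statement rather than of your proof: since the paper's definition of a spatial vector field requires only $dt(S)=0$ and allows the components $S^k$ to depend on $t$, the second claim of Part 1 is literally false on covariant legs. Your identity $\Lie_S\, dx^j = dS^j$ already contains the counterexample: with $S=t\,\partial_{x^1}$ and $B=dx^1$ one gets $\Lie_S B = d(S\interior dx^1) = d(t) = dt$, which is not spatial. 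So no blind proof of Part 1 as stated could have succeeded, and you were right to isolate the $(\partial_t S^j)\,dt$ term as the precise obstruction.

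Where your write-up goes astray is in the assertion that the lemma is applied only when $S$ is the lift of a vector field on $\Sigma$. That is not the paper's usage: in the proof of Proposition \ref{prop:X-by-Lie-only}, Part \ref{part:lie-of-spatial} is invoked for $\Lie_X B$ with $X$ the shift of an arbitrary slice-compatible metric, and nothing there makes $X$ independent of $t$. The sustainable repair is to weaken the conclusion rather than strengthen the hypothesis. Your own frame computation shows that for spatial $S$ and spatial $B$ no $\partial_t$-legs are ever produced ($\Lie_S \partial_{x^i} = -(\partial_{x^i}S^k)\partial_{x^k}$ is spatial, and components are only differentiated spatially), and every non-spatial term in $\Lie_S B$ carries a factor $(\partial_t S^k)\,dt$; equivalently, $\Lie_S B$ vanishes whenever one of its arguments is $dt$, so it differs from the spatial tensor $j_*\Lie_S B$ only by terms containing $dt$. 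This weakened statement suffices for every downstream use: in Proposition \ref{prop:X-by-Lie-only} the quantity $(\partial_t - \Lie_X)B$ sits inside $j_*\Psi^*(\,\cdot\,)$, where the $dt$-legs are annihilated (the argument of Lemma \ref{lem:t-preserving-diffeo} Part \ref{part:covariant-almost-commutes} extends to tensors without $\partial_t$-legs, and strictly one should read the third slot of $L_\Sigma$ there as $\partial_t B - j_*\Lie_X B$), while in Corollary \ref{cor:vary-X-to-Lie} and Theorem \ref{thm:form-of-J-tensor-edition} the term $\Lie_{\delta X}B$ is only ever contracted against the spatial momenta $\Pi_B$, which likewise kill $dt$-legs. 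So keep your computation, but either carry the hypothesis $\partial_t S = 0$ explicitly --- accepting that it does not cover the application in Proposition \ref{prop:X-by-Lie-only} --- or restate Part 1 with the $j_*$-projected conclusion just described.
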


Note that the projection $j_*$ is a coordinate-dependent operation,
but is independent of any metric.  

The Hamiltonian approach to the evolution problem in general relativity
requires splitting tensor fields into spatial components defined 
on the slices $\Sigma_t$, and this applies equally to the
spacetime metric. A Lorentzian metric $g$ on $M$ is 
\textbf{slice compatible}
if the restriction of $g$ to each slice $\Sigma_t$
is Riemannian. A slice-compatible Lorentzian metric 
has a timelike unique normal vector field $\nu$ defined by
$\nu(dt)>0$, $g(\nu,\nu)=-1$ and 
$j_*\; g(\nu,\cdot) =0$.
The vector field $\partial_t$ can then be written
\[
\partial_t = N \nu + X
\]
where the \textbf{shift} $X$ is spatial and where the \textbf{lapse}
$N$ satisfies $\nu(dt) = 1/N$. The $n+1$ \textbf{decomposition} of
$g$ consists of the spatial tensors $h = j_*\; g$,
$N$ and $X$.  As is clear from the coordinate representation
$g = -N^2dt^2 + h_{ab}(dx^a + X^adt)(dx^b+X^bdt)$, the metric
is uniquely determined by its $n+1$ decomposition, and
we write $g(h,N,X)$ for the Lorentzian metric reconstructed
from a spatial Riemannian metric $h$, a lapse $N>0$
and a spatial vector $X$. 
Our notation here conflicts mildly with that used
in the introduction, where $h$ and $dV_h$ are tensor fields
defined on a single time slice $\Sigma_t$, whereas here they are spatial
tensors defined on all of $M$.  Moreover, $h$ is not a Riemannian metric in the usual sense because $h(\partial_t,\cdot)=0$, but it becomes
one when restricted to any slice $\Sigma_t$.

The manifold $M$ is orientable and we fix an orientation
on it. Let
$dV_g$ be the positively oriented volume form of $g$ on $M$.
We then define
\[
dV_{h} = N^{-1} \partial_t\interior dV_g
\]
and observe that $dV_h$ is spatial and $N\, dt\wedge dV_h = dV_g$.

\subsection{Einstein Constraint Equations}\label{sec:constraint-equations}
In this section we give a brief overview of the origin of 
the Einstein constraint equations, with a particular focus
on deriving relation \eqref{eq:def-J-first} below, which is
a key tool in our work.

A Lagrangian for a field theory on $M$ is defined in terms of
an $n+1$ form on $M$.  In particular, 
the Einstein-Hilbert Lagrangian for a Lorentzian metric $g$ is
\[
\mathcal L_{EH}(g,\partial g, \partial^2 g) = R_g dV_g
\]
where $R_g$ is the scalar curvature of $g$.
Recall that we have a fixed frame $\partial_k = \partial_{x^k}$ 
of vector fields, and  $\partial g$
denotes the collection of Lie derivatives $(\partial_1 g,\ldots,\partial_n g,\partial_t g)$, with $\partial^2 g$ defined similarly.

Let $\mathcal L_{\mathrm{Matter}}(\fieldsec,g)$ be a Lagrangian
on $M$ depending on a section $\fieldsec$ 
of some fiber bundle
over $M$ along with a Lorentzian metric $g$. In general,
$\mathcal L$ depends on `derivatives' of $\fieldsec$,
and we are more precise about this in applications.  Regarding
the metric, however, we assume the matter Lagrangian
is minimally coupled and therefore depends pointwise
on the values of $g$ but not its derivatives.
The total Lagrangian for the metric and matter is then
\[
\mathcal L_{EH} + 8\pi \mathcal L_{\mathrm{Matter}}.
\]

Recall that the action for the total Lagrangian is stationary at $g$
if it satisfies the Einstein equation
\[
G = 8\pi T
\]
where $G_g = \Ric_g - R_g g$ is the Einstein tensor
and where (as in \cite{wald:1984} Appendix E) the stress-energy
tensor $T$ is defined by varying $\mathcal L_{\mathrm{Matter}}$
with respect to the inverse metric $g^{-1}$:
\[
T dV_g = -\frac{\partial \mathcal L_{\mathrm{Matter}}}{\partial g^{-1}}
\]
More explicitly, we are considering the map
\[
g^{-1} \mapsto \mathcal L_{\mathrm{Matter}}(\fieldsec, g(g^{-1}))
\]
pointwise on $M$ and linearizing with respect to $g^{-1}$.
Hence $\partial \mathcal L_{\mathrm{Matter}}/\partial g^{-1}$
can be identified with a symmetric (0,2) tensor-valued $n+1$-form,
and $T$ is obtained by dividing by $-dV_g$.

An inverse metric $g^{-1}$ that is Riemannian on the slices
$\Sigma_t$ can be written as a function of
a spatial Riemannian inverse metric $h^{-1}$ and a vector field 
$\nu$ transverse to the slices:
\[
g^{-1} =  - \nu\otimes \nu + h^{-1}.
\]
With respect to these variables,
\[
\frac{\partial g^{-1}}{\partial \nu}[\delta \nu] = 
-\left(\nu \otimes \delta \nu + \delta \nu \otimes \nu\right) 
\]
and the Einstein constraint equations arise from variations
of the total action with respect to $\nu$:
\begin{equation}\label{eq:ECE-abstract}
G(\nu,\cdot) = 8\pi\, T(\nu,\cdot).
\end{equation}

The scalar \textbf{energy density} $\hamE$ and covector
\textbf{momentum density} $\momJ$ 
of the matter fields seen by an observer with tangent $\nu$ are, by definition,
\begin{align}
\hamE &= T(\nu,\nu)\label{eq:E-def-direct}\\
\momJ &= -j_*T(\nu,\cdot)\label{eq:J-def-direct}.
\end{align}
  
On a slice $\Sigma_t$ we have the induced
metric $h=j_* g$ and the second fundamental
form $K=\frac12 \Lie_\nu h$.  The Gauss 
and Codazzi equations allow the the left-hand side
of equation \eqref{eq:ECE-abstract} to be written
in terms of $g$ and $K$, whereas the right-hand
side is written in terms $\hamE$ and
$\momJ$ essentially by fiat. 
The Einstein constraint equations 
\eqref{eq:ham-constraint}--\eqref{eq:mom-constraint} then arise from identifying
$\Sigma$ with $\Sigma_t = (\iota_t)_*\Sigma$
and spatial tensors on $\Sigma_t$ with their pullbacks onto
$\Sigma$.

Writing $\nu = (\partial_t - X)/N$,
the momentum density $\momJ$ can be computed
directly by varying $\mathcal L_{\mathrm{Matter}}$
with respect to the shift $X$ and leaving $h$ (or equivalently $h^{-1}$) 
and $N$ fixed. Indeed, unwinding definitions we 
compute the fundamental identity
\begin{equation}\label{eq:def-J-first}
\begin{aligned}
\frac{\partial\mathcal{L}_{\mathrm{Matter}}}{\partial X}[\delta X] &= 
-2 T(\nu,\delta X/N) \, dV_g\\
&= -2 T(\nu,\delta X)\,  dt\wedge dV_h \\
&= 2\,\momJ(\delta X)\, dt\wedge dV_h
\end{aligned}
\end{equation}
for any spatial vector field $\delta X$.  A related computation
shows
\begin{equation}\label{eq:def-E-first}
\frac{\partial\mathcal{L}_{\mathrm{Matter}}}{\partial N}
= -2 \hamE \, dt\wedge dV_h.
\end{equation}

\section{Lagrangians depending on manifold-valued fields}
\label{sec:manifold-valued}

In this section we prove Theorem \ref{thm:form-of-J-manifold-edition},
the concrete version of Meta-Theorem \ref{mt:meta-theorem}
in the case where matter sources take on values in a 
separate manifold $F$ unrelated to spacetime itself.  

Loosely, a Lagrangian in this context consumes a Lorenzian 
metric $g$ together with a map $\mathcal F$ from $M$ to $F$ 
and yields
an $n+1$ form $\mathcal L(\mathcal F, \partial \mathcal F, g)$
that depends on the values and first derivatives of $\mathcal F$.
To formalize this construction,  let $E$ be the bundle
over $F$ consisting of $n+1$ Whitney sums
\[
E = \underbrace{TF\oplus\cdots\oplus TF}_{\text{$n+1$ times}}.
\]
A \textbf{minimially coupled Lagrangian with field values from} $F$ 
is a bundle map
\[
\mathcal L : (E\times M) \times_M G(M) \to \Lambda^{n+1}(M)
\]
where 
$E\times M$ is the trivial bundle over $M$
with fiber $E$,
$G(M)$ is the bundle of Lorentzian metrics,
$\Lambda^{n+1}(M)$ is the bundle of $n+1$-forms
and $\times_M$ denotes the fiber product.
In practice, we take the domain of $\mathcal L$
to be the subbundle of $(F\times M)\times_M(E\times M) \times_M G(M)$
where the first factor records the common base point in $F$
of the tangent vectors from the second factor.

A section $\fieldsec$ of $F\times M$ (i.e., a map from $M$ to $F$) and 
a Lorentzian metric $g$ determine a field of $n+1$ forms
\[
\mathcal L(\fieldsec, \partial\fieldsec, g)
\]
where $\partial\fieldsec$ is shorthand for $(\fieldsec_* \partial_1,\ldots,\fieldsec_*\partial_n,\fieldsec_*\partial_t)$.  The Lagrangian is \textbf{diffeomorphism invariant} if whenever $\Psi$ is an
orientation preserving diffeomorphism from some neighborhood 
$U_1$ of  $M$ to a neighborhood $U_2$ of $M$, then for all
$\fieldsec$ and $g$ defined on $U_2$,
\[
\Psi^*\mathcal L(\fieldsec, \partial\fieldsec, g)
= \mathcal L(\Psi^*\fieldsec, \partial \Psi^*\fieldsec, \Psi^* g)\quad\text{on $U_1$}.
\]

\newcommand{\dotfieldsec}{
\mathrlap{\skew{8}\dot{\raisebox{1pt}{\phantom{$\mathscr F$}}}}{\mathscr F}}

The spacetime Lagrangian reduces to a \textbf{slice Lagrangian}
on slices $\Sigma_t$ as follows.
Let $p\in M$, and let 
\begin{enumerate}
\setlength{\itemsep}{0pt}%
  \item $h$ be a spatial Riemannian metric at $p$,
  \item $N>0$,
  \item $X$ be a spatial vector at $p$
  \item $\fieldsec\in F$,
  \item $\fieldsec_{(i)}\in T_\fieldsec F$, $1\le i\le n$,
  \item $\dotfieldsec\in T_\fieldsec F$.
\end{enumerate}
The associated value of the \textbf{slice Lagrangian} at $p$ is then
\[
L_\Sigma(\fieldsec,(\fieldsec_{(1)},\ldots,\fieldsec_{(n)}),\dotfieldsec, h,N,X) = \partial_t\interior \mathcal L(\fieldsec,(\fieldsec_{(1)},\ldots,\fieldsec_{(n)},\dotfieldsec),g(h,N,X));
\]
the primary distinction between the bundle maps 
$\mathcal L$ and $L_\Sigma$
is that all tensors associated with $L_\Sigma$, including 
its value as an $n$-form, are spatial. Additionally, the arguments
corresponding to spatial derivatives of $\fieldsec$
are separated from the time derivative, and we
use the notation $\partial_x \fieldsec = (\fieldsec_*\partial_1,\ldots\fieldsec_*\partial_n)$ whenever $\fieldsec:M\to F$.

Proposition \ref{prop:X-by-push-only} below concerns the 
structure of a diffeomorphism-invariant
Lagrangian and its proof employs diffeomorphisms that preserve $t$ (i.e., 
diffeomorphisms $\Psi$ for which
$\Psi^* t = t$). Before proving it, we require the following two technical lemmas concerning such
diffeomorphisms, the first of which describes their interaction
with the projection $j_*$.  Note that here and elsewhere 
we allow $\Psi^*$ to act on vectors by $\Psi^* V=\Psi^{-1}_*V$,
and by extension $\Psi^*$ is well defined on general tensors.

\begin{lemma}\label{lem:t-preserving-diffeo}
Suppose $\Psi$ is a $t$-preserving diffeomorphism from a neighborhood
$U$ of $M$ to a neighborhood $V$ of $M$
\begin{enumerate}
  \item\label{part:covariant-almost-commutes} If $\omega$ is a covariant tensor, then
  \[
  j_* \Psi^* \omega = j_* \Psi^* j_*\omega.
  \]
\item \label{part:kills-S-interior-omega}
If $\omega$ is an $n+1$ form and $S$ is a spatial vector, then
\[
j_* \Psi^* (S\interior \omega) = 0.
\]

  \item\label{part:spatial-stays-put} If 
additionally $\Psi$ is the identity on some slice
$\Sigma_{t_0}\cap U$, and if 
  $B$ is a spatial tensor on $V$, then
  \begin{equation}\label{eq:pull-back-spatial}
j_* \Psi^* B = B\quad \text{on $\Sigma_{t_0}\cap U$.}
  \end{equation}
\end{enumerate}
\end{lemma}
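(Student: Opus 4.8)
The plan is to extract from the hypothesis $\Psi^* t = t$ a single structural fact and then feed it into each part. Since $\Psi$ preserves $t$ we have $\Psi^* dt = d(\Psi^* t) = dt$; equivalently, for any spatial vector $W$ the image $\Psi_* W$ satisfies $(\Psi_* W)(t) = W(\Psi^* t) = W(t) = 0$, so $\Psi_*$ maps spatial vectors to spatial vectors. Because $\Psi^{-1}$ is also $t$-preserving, the same holds for $\Psi^{-1}_*$, which by our convention is how $\Psi^*$ acts on vectors. I would record this observation first, as all three parts rest on it, together with the trivial remark that anything in the image of $j_*$ is a spatial tensor and is therefore determined by its values on spatial vectors.

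For part (\ref{part:covariant-almost-commutes}) I would note that both sides lie in the image of $j_*$, so it suffices to compare their values on spatial vectors $W_1, \dots, W_k$. On such arguments $j_* W_i = W_i$, and the left-hand side evaluates to $(\Psi^* \omega)(W_1, \dots, W_k) = \omega(\Psi_* W_1, \dots, \Psi_* W_k)$. For the right-hand side, each $\Psi_* W_i$ is spatial by the observation above, so $j_*$ acts trivially on it, and $(\Psi^* j_* \omega)(W_1, \dots, W_k) = (j_* \omega)(\Psi_* W_1, \dots, \Psi_* W_k) = \omega(\Psi_* W_1, \dots, \Psi_* W_k)$ as well. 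The two agree, hence the spatial tensors are equal.

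For part (\ref{part:kills-S-interior-omega}) I would again test $j_* \Psi^*(S \interior \omega)$ against spatial vectors $W_1, \dots, W_n$, for which it equals $(S \interior \omega)(\Psi_* W_1, \dots, \Psi_* W_n) = \omega(S, \Psi_* W_1, \dots, \Psi_* W_n)$. Now $S$ is spatial and each $\Psi_* W_i$ is spatial, so we are feeding $n+1$ spatial vectors into the $(n+1)$-form $\omega$. Since the spatial distribution is only $n$-dimensional these vectors are linearly dependent, and the alternating form $\omega$ vanishes on them.

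For part (\ref{part:spatial-stays-put}) the new input is that $\Psi$ is the identity on $\Sigma_{t_0} \cap U$, so $\Psi(p) = p$ there and the differential $d\Psi_p$ fixes every vector tangent to the slice, i.e.\ every spatial vector (and likewise for $d\Psi^{-1}_p$). Evaluating $j_* \Psi^* B$ on spatial arguments therefore reduces, pointwise on the slice, to $B$ evaluated on those same arguments, since $\Psi_*$ and $\Psi^{-1}_*$ act as the identity on the spatial directions that survive $j_*$. The one point requiring care here is that $d\Psi_p$ need \emph{not} fix $\partial_t$, so $\Psi^* B$ itself generally differs from $B$ by terms carrying a $dt$ or $\partial_t$ factor; it is precisely the projection $j_*$ that discards these and restores equality. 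I expect this interplay, rather than any computation, to be the main thing to get right: once it is checked on decomposable tensors (or simply in the coordinate coframe $dx^i, dt$ and frame $\partial_{x^i}, \partial_t$, handling each covariant slot via $d\Psi_p$ and each contravariant slot via $d\Psi^{-1}_p$), the general case follows by linearity.
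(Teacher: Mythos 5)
Your proposal is correct and follows essentially the same route as the paper: the structural fact $\Psi^* dt = dt$ (so $\Psi_*$ preserves spatial vectors), checking equalities of spatial tensors by evaluating on spatial arguments, and, for part 3, the observation that $\Psi_*$ fixes slice-tangent vectors while $j_*$ absorbs the failure of $d\Psi$ to fix $\partial_t$. The only cosmetic deviation is in part 2, where you evaluate directly and invoke linear dependence of $n+1$ spatial vectors, whereas the paper factors $S\interior \omega = dt\wedge\eta$ (justified by the same dimension count) and reduces to part 1 --- the underlying mechanism is identical.
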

\begin{proof}
We prove part \ref{part:covariant-almost-commutes}
for a 1-form $\omega$; the general case follows
since $j_*$ and $\Psi^*$ distribute over tensor products.
If $S$ is a spatial vector, so is $\Psi_* S$ and
\[
(j_* \Psi^* \omega)(S) = \Psi^*\omega (S) = \omega(\Psi_* S) = (j_* \omega)(\Psi_* S) = \Psi^*(j_* \omega)(S) = (j_* \Psi^* j_* \omega)(S).
\]
We conclude that $j_* \Psi^* \omega = j_*\Psi^* j_*\omega$ since
both vanish on $\partial_t$ as well.

To establish part \ref{part:kills-S-interior-omega},
observe that if $S$ is spatial and $\omega$ is an
$n+1$ form, then $S\interior \omega$ vanishes
when all of its arguments are spatial.  So
$S\interior \omega = dt\wedge \eta$ for some spatial
$n-1$ form $\eta$ and from part \ref{part:covariant-almost-commutes}
we have $j_* \Psi^* (S\interior\omega)
= j_*\, \Psi^* ( j_* dt\wedge \eta) = 0$.

Finally, we turn to part \ref{part:spatial-stays-put}
and assume that $\Psi$ is the identity on $\Sigma_{t_0}\cap U$.
If $S$ is a spatial vector on $\Sigma_{t_0}$ then
$\Psi_* S = S$.  Hence for any 1-form $\omega$,
\[
j_* \Psi^*\omega (S) = 
\Psi^*\omega(S) = \omega(\Psi_* S) = \omega(S).
\]
If in addition $\omega$ is spatial, then $\omega(\partial_t)=0$,
as is $j_* \Psi^* \omega (\partial_t)$, and we conclude that
$j_* \Psi^*\omega = \omega$.
On the other hand, for a spatial vector $V$ we have the obvious identity
$\Psi^* V = V$
on $\Sigma_{t_0}$. Hence $j_* \Psi^* V = j_* V = V$.
This establishes equation \eqref{eq:pull-back-spatial}
if $B$ is a spatial 1-form or vector, and the general result
follows from our earlier observation that $j_* \Psi^*$
distributes over tensor products.
\end{proof}

The following
lemma concerns the $n+1$ decomposition of the pullback of
a metric by a $t$-preserving diffeomorphism, and we note
the appearance of $j_*$ in it because the pullback of 
a spatial 1-form by a $t$-preserving diffeomorphism
need not be spatial.

\begin{lemma}\label{lem:pullback-g}
Suppose $\Psi$ is a $t$-preserving diffeomorphism from
a neighborhood $U$ of $M$ to a neighborhood $V$ of $M$.
If $g$ is a slice-compatible Lorentzian metric on $V$
with $(n+1)$ decomposition $(h,N,X)$ then
$\hat g = \Psi^* g$ is a slice-compatible Lorentzian
metric on $U$ with $(n+1)$ decomposition $(\hat h, \hat N, \hat X)$
satisfying
\begin{align}
\hat h &= j_* \Psi^* h\label{eq:pull-h}\\
\hat N &= \Psi^* N\label{eq:pull-N}\\
\hat X &= \Psi^*(X+S)\label{eq:pull-X}
\end{align}
where $S$ is the spatial vector field satisfying
$
\Psi_* \partial_t = \partial_t + S.
$

We have the additional identities
\begin{equation}\label{eq:push-partial-0}
\Psi^*(\partial_t-X) = \partial_t-\hat X.
\end{equation}
and
\begin{equation}\label{eq:pull-dV}
 dV_{\hat h} = j_* \Psi^* dV_{h}.
\end{equation}
\end{lemma}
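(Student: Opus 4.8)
The plan is to exploit that the $(n+1)$ decomposition is completely determined by the slice normal $\nu$ together with $h=j_*g$, and to guess that the normal of $\hat g=\Psi^*g$ is simply $\hat\nu=\Psi^*\nu$. First I would record the two elementary consequences of $t$-preservation that drive the whole argument: differentiating $\Psi^*t=t$ gives $\Psi^*dt=dt$, and $S:=\Psi_*\partial_t-\partial_t$ is spatial because $dt(\Psi_*\partial_t)=(\Psi^*dt)(\partial_t)=dt(\partial_t)=1$. I would also observe at the outset that $\hat g$ is a slice-compatible Lorentzian metric: it is Lorentzian as the pullback of a Lorentzian metric by a diffeomorphism, and since $\Psi$ preserves $t$ it maps each slice diffeomorphically to the slice of the same time, so the restriction of $\hat g$ to a slice is a pullback of the Riemannian restriction of $g$ and is therefore Riemannian.

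Next I would verify that $\hat\nu=\Psi^*\nu$ is the normal of $\hat g$ by checking the three defining conditions, using the naturality identity $(\Psi^*g)(\Psi^*V,\Psi^*W)=\Psi^*(g(V,W))$ (a consequence of $\Psi^*=\Psi^{-1}_*$ on vectors). The condition $\hat g(\hat\nu,\hat\nu)=\Psi^*(g(\nu,\nu))=-1$ is immediate; the positivity $dt(\hat\nu)=(\Psi^*dt)(\Psi^*\nu)=\Psi^*(1/N)>0$ uses $\Psi^*dt=dt$ and at the same time forces $\hat N=\Psi^*N$, which is \eqref{eq:pull-N}; and for the orthogonality $j_*\hat g(\hat\nu,\cdot)=0$ I would write an arbitrary spatial $\hat S$ as $\hat S=\Psi^*(\Psi_*\hat S)$, note that $\Psi_*\hat S$ is spatial by the same computation as for $S$, and conclude $\hat g(\hat\nu,\hat S)=\Psi^*(g(\nu,\Psi_*\hat S))=0$. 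Uniqueness of the normal then gives $\hat\nu=\Psi^*\nu$. Equation \eqref{eq:pull-h} is then simply $\hat h=j_*\hat g=j_*\Psi^*g=j_*\Psi^*h$, the last step being part \ref{part:covariant-almost-commutes} of Lemma \ref{lem:t-preserving-diffeo}.

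For the shift I would use $N\nu=\partial_t-X$ to compute $\hat N\hat\nu=\hat N\Psi^*\nu=\Psi^*(N\nu)=\Psi^*(\partial_t-X)$; since $\hat X=\partial_t-\hat N\hat\nu$, this is precisely \eqref{eq:push-partial-0}. Expanding, $\hat X=\partial_t-\Psi^*\partial_t+\Psi^*X$, and the key simplification is that $\Psi^*S=\Psi^*\Psi_*\partial_t-\Psi^*\partial_t=\partial_t-\Psi^*\partial_t$, so that $\hat X=\Psi^*X+\Psi^*S=\Psi^*(X+S)$, which is \eqref{eq:pull-X}. I expect this to be the one step demanding real care: the vector $\Psi^*\partial_t$ is what appears naturally, whereas $S$ is defined through $\Psi_*\partial_t$, and the proof hinges on recognizing the relation $\Psi^*S=\partial_t-\Psi^*\partial_t$ between them. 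The remaining manipulations are routine bookkeeping with the conventions.

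Finally, for \eqref{eq:pull-dV} I would begin from $N\,dt\wedge dV_h=dV_g$ together with the orientation-preserving identity $dV_{\hat g}=\Psi^*dV_g$. Pulling back and using $\Psi^*dt=dt$ and $\hat N=\Psi^*N$ yields $\hat N\,dt\wedge\Psi^*dV_h=\Psi^*dV_g=dV_{\hat g}=\hat N\,dt\wedge dV_{\hat h}$, hence $dt\wedge dV_{\hat h}=dt\wedge\Psi^*dV_h=dt\wedge j_*\Psi^*dV_h$, the last equality holding because the non-spatial part of $\Psi^*dV_h$ is annihilated by $dt\wedge$. Since $dV_{\hat h}$ and $j_*\Psi^*dV_h$ are both spatial $n$-forms and $dt\wedge(\cdot)$ is injective on spatial $n$-forms, \eqref{eq:pull-dV} follows. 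The only subtlety here is the orientation hypothesis needed for $dV_{\hat g}=\Psi^*dV_g$, consistent with the orientation-preserving diffeomorphisms used throughout.
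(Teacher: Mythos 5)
Your proof is correct and takes essentially the same route as the paper's: establish slice compatibility, identify $\hat\nu=\Psi^*\nu$, read off $(\hat h,\hat N,\hat X)$ from $\hat N\hat\nu=\Psi^*(\partial_t-X)$ via the relation $\Psi^*S=\partial_t-\Psi^*\partial_t$, and obtain \eqref{eq:pull-dV} by pulling back the identity $N\,dt\wedge dV_h=dV_g$. Your minor variants are cosmetic and sound: you verify $\hat\nu=\Psi^*\nu$ explicitly where the paper calls it ``easy to see,'' you get \eqref{eq:pull-h} by invoking Lemma \ref{lem:t-preserving-diffeo} part \ref{part:covariant-almost-commutes} instead of the paper's direct computation on spatial vectors, you derive \eqref{eq:push-partial-0} before rather than after \eqref{eq:pull-X}, and for the volume form you use $dt\wedge(\cdot)$ and its injectivity on spatial $n$-forms in place of the paper's $\partial_t\interior(\cdot)$ followed by $j_*$, while correctly flagging the orientation-preservation hypothesis behind $dV_{\hat g}=\Psi^*dV_g$ that the paper uses implicitly.
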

\begin{proof}
First observe that if $W$ is a nonzero spatial vector on $U$, then
$\Psi_* W$ is nonzero and spatial on $V$ and hence 
$\Psi^* g$ is a Lorentzian metric satisfying
$\Psi^* g(W,W)= g(\Psi_* W,\Psi_* W) >0$.  So it is also slice compatible.
Moreover, if $W$ and $Z$ are spatial vectors at some point we find
\[
\hat h(W,Z) = \hat g(W,Z) = g(\Psi_*W,\Psi_*Z) = h(\Psi_*W,\Psi_*Z) = (\Psi^* h)(W,Z).
\]
The equality 
$\hat h = j_* \Psi^* h$
now follows since both tensors vanish when some argument is $\partial_t$.

Since $\Psi$ preserves $t$, $\Psi_* \partial_t = \partial_t + S$ for
some spatial vector $S$.  Consequently $\Psi^*\partial_t = \partial_t-\Psi^* S$.  If $\nu$ and $\hat \nu$ are the unit normals associated
with $g$ and $\hat g$ it is easy to see that $\hat \nu = \Psi^* \nu$.
Hence
\[
\hat \nu = \Psi^*\left(\frac{\partial_t - X}{N}\right)
= \frac{\partial_t - \Psi^*(S)-\Psi^*(X)}{\Psi^* N}
\]
and we identify $\hat N = \Psi^*N$ and $\hat X =\Psi^*(X + S)$.

Equality \eqref{eq:push-partial-0} follows from equality \eqref{eq:pull-X}
and the identity $\Psi_* \partial_t = \partial_t + S$, so it 
only remains to establish equation \eqref{eq:pull-dV}.
Recall that $dV_{h}$ is defined by
$N dV_{h} = \partial_t\interior dV_{g}$
and satisfies $N\, dt\wedge dV_h = dV_g$
Hence
\begin{align*}
\hat N\, dV_{\hat h} = \partial_t\interior dV_{\hat g} &= \Psi^*( (\partial_t+ S) \interior dV_{g})\\
&= \Psi^*( (\partial_t+ S)\interior (N dt\wedge dV_h)) \\
&=\Psi^* N\; \Psi^*( dV_h - dt\wedge (S\interior dV_h))\\
&= \Psi^* N \left[ \Psi^*( dV_h ) - dt \wedge\Psi^*(S\interior dV_h)\right].
\end{align*}
Applying $j_*$ to both sides and using
the fact that $\Psi^*N = \hat N$ we find
$j_* dV_{\hat h} = j_* \Psi^* dV_h$.
But $dV_{\hat h}$ is spatial and 
equation \eqref{eq:pull-dV} therefore follows.
\end{proof}

The following key proposition describes the very limited way in which 
the slice Lagrangian of
a diffeomorphism invariant Lagrangian can depend on the shift.
\begin{proposition}\label{prop:X-by-push-only}
Let $F$ be a manifold and let $\mathcal L$ be a minimally 
coupled diffeomorphism invariant Lagrangian with field values in $F$.  Let $g$
be a slice compatible Lorentzian metric on $M$ with
$(n+1)$ decomposition $(h,N,X)$ and let $\fieldsec:M\to F$.
Then
\[
L_\Sigma(\fieldsec,\partial_x \fieldsec, \partial_t \fieldsec, h, N, X) 
= L_\Sigma(\fieldsec,\partial_x \fieldsec, \partial_t\fieldsec - \fieldsec_* X,h,N,0).
\]
\end{proposition}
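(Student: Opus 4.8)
The plan is to exploit diffeomorphism invariance by constructing a $t$-preserving diffeomorphism $\Psi$ that absorbs the shift, turning the decomposition $(h,N,X)$ into $(h,N,0)$ along a single slice while leaving the field value and its spatial derivatives untouched and shifting only its time derivative by $-\fieldsec_* X$. Since $L_\Sigma$ is evaluated pointwise, it suffices to establish the identity on an arbitrary slice $\Sigma_{t_0}$.

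First I would construct $\Psi$ explicitly in the coordinates $(x^i,t)$ by setting $\Psi(x,t) = (x-(t-t_0)X(x),\,t)$. This is an orientation-preserving $t$-preserving diffeomorphism in a neighborhood of $\Sigma_{t_0}$, since its Jacobian is the identity plus an $O(t-t_0)$ perturbation. By construction $\Psi$ restricts to the identity on $\Sigma_{t_0}$, and differentiating gives $\Psi_*\partial_{x^i} = \partial_{x^i}$ and $\Psi_*\partial_t = \partial_t - X$ on $\Sigma_{t_0}$; in the notation of Lemma \ref{lem:pullback-g} this says the spatial vector $S$ satisfies $S=-X$ there. Applying Lemma \ref{lem:pullback-g} to $\hat g := \Psi^* g$, together with Lemma \ref{lem:t-preserving-diffeo}(\ref{part:spatial-stays-put}) to evaluate $j_*\Psi^* h$, I would conclude that on $\Sigma_{t_0}$ the decomposition of $\hat g$ is exactly $(\hat h,\hat N,\hat X)=(h,N,0)$, the key point being $\hat X = \Psi^*(X+S)=0$.

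Next I would track the field. Writing $\hat\fieldsec := \Psi^*\fieldsec = \fieldsec\circ\Psi$, the chain rule $\hat\fieldsec_* = \fieldsec_*\circ\Psi_*$ combined with the two push-forward formulas above yields, on $\Sigma_{t_0}$, that $\hat\fieldsec = \fieldsec$, $\partial_x\hat\fieldsec = \partial_x\fieldsec$, and $\partial_t\hat\fieldsec = \fieldsec_*(\partial_t-X) = \partial_t\fieldsec - \fieldsec_* X$. Consequently the right-hand side of the proposition is exactly $L_\Sigma(\hat\fieldsec,\partial_x\hat\fieldsec,\partial_t\hat\fieldsec,\hat h,\hat N,\hat X) = \partial_t\interior \mathcal L(\hat\fieldsec,\partial\hat\fieldsec,\hat g)$, and diffeomorphism invariance identifies $\mathcal L(\hat\fieldsec,\partial\hat\fieldsec,\hat g) = \Psi^*\omega$ where $\omega := \mathcal L(\fieldsec,\partial\fieldsec,g)$.

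The crux is the final comparison $\partial_t\interior\Psi^*\omega = \partial_t\interior\omega$ on $\Sigma_{t_0}$, and I expect this to be the step carrying the real content. I would argue that in fact $\Psi^*\omega=\omega$ there: at $p\in\Sigma_{t_0}$ we have $\Psi(p)=p$, and the differential $\Psi_*|_p$ fixes each $\partial_{x^i}$ while sending $\partial_t\mapsto\partial_t-X$ with $X$ spatial, hence is a shear of unit determinant; since $\omega$ is a top-degree form, $\Psi^*\omega|_p = \det(\Psi_*|_p)\,\omega|_p = \omega|_p$. Contracting with $\partial_t$ then gives $L_\Sigma(\fieldsec,\partial_x\fieldsec,\partial_t\fieldsec,h,N,X) = \partial_t\interior\omega = \partial_t\interior\Psi^*\omega = L_\Sigma(\fieldsec,\partial_x\fieldsec,\partial_t\fieldsec-\fieldsec_*X,h,N,0)$, which is the assertion. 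The main obstacle is less any single computation than the careful bookkeeping needed to confirm that all spatial data genuinely survives unchanged on the slice, so that only the time-derivative slot absorbs $-\fieldsec_* X$; Lemmas \ref{lem:t-preserving-diffeo} and \ref{lem:pullback-g} are precisely the tools that make this clean, and the unit-determinant observation is what turns the diffeomorphism-invariance identity into a genuinely pointwise equality rather than one holding only under an integral.
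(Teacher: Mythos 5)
Your proof is correct, and it follows the same overall strategy as the paper—a $t$-preserving diffeomorphism fixing the slice with $\Psi_*\partial_t=\partial_t-X$, combined with diffeomorphism invariance—but it differs in two ways worth recording. First, the paper builds $\Psi$ from the flow of $\partial_t-X$, which yields $\Psi_*\partial_t=\partial_t-X$ (and hence $\hat X=0$) on a whole neighborhood $U$, whereas your explicit map $\Psi(x,t)=(x-(t-t_0)X(x),t)$ achieves these identities only on the slice $\Sigma_{t_0}$ itself. That is enough here, because for manifold-valued fields $L_\Sigma$ is a pointwise bundle map with no dependence on derivatives of the shift; but note that the same shortcut would not transfer to Proposition \ref{prop:X-by-Lie-only}, where $\partial_x X$ and $\partial_t X$ enter $L_\Sigma$ and the paper's flow construction (giving $\hat X\equiv 0$ on $U$) is genuinely needed. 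Second, your finishing step replaces the paper's interior-product bookkeeping—namely $\partial_t\interior\Psi^*\omega=\Psi^*\bigl((\partial_t-X)\interior\omega\bigr)$ followed by Lemma \ref{lem:t-preserving-diffeo} parts \ref{part:kills-S-interior-omega} and \ref{part:spatial-stays-put} to discard $j_*\Psi^*(X\interior\omega)$ and fix the spatial remainder—with the observation that at $p\in\Sigma_{t_0}$ the differential $\Psi_*|_p$ is a unimodular shear ($\partial_{x^i}\mapsto\partial_{x^i}$, $\partial_t\mapsto\partial_t-X$), so the top-degree form satisfies $\Psi^*\omega|_p=\omega|_p$ outright. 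This is a slicker and more elementary argument that actually proves a slightly stronger pointwise statement than the paper uses (the paper never asserts $\Psi^*\omega=\omega$), and it would apply equally to the paper's flow-based $\Psi$, since any slice-fixing map has differential equal to the identity on slice-tangent vectors; the paper's heavier $j_*$ machinery earns its keep only in the tensor-valued setting of Section \ref{sec:tensor-valued}, where it is reused throughout. The remaining bookkeeping in your proposal—$\hat h=h$, $\hat N=N$, $\hat X=0$ on the slice via Lemmas \ref{lem:pullback-g} and \ref{lem:t-preserving-diffeo}, and $\partial_t\hat\fieldsec=\fieldsec_*(\partial_t-X)$ via the chain rule—matches the paper's and is sound.
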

\begin{proof}
Fix some $p\in M$ and 
let $\Psi$ be a diffeomorphism from a neighborhood $U$ 
of $p$ to a neighborhood $V$ of $p$ that fixes the slice
$\Sigma_{t_0}\cap U$ containing $p$ and such that 
$\Psi_* \partial_t = \partial_t -X$.  Such a diffeomorphism can 
be constructed using the integral curves of $\partial_t-X$.

Let $\hat \fieldsec=\fieldsec\circ \Psi$, $\hat g = \Psi^* g$, and let
$(\hat h, \hat N, \hat X)$ be the $n+1$ decomposition of $\hat g$.
We compute on $\Sigma_{t_0}\cap U$
\begin{align*}
L_\Sigma(\hat \fieldsec,\partial_x \hat \fieldsec, \partial_t \hat \fieldsec, \hat h, \hat N, \hat X) &= j_*(\partial_t \interior \mathcal L(\Psi^* \fieldsec,\partial\Psi^* \fieldsec, 
\Psi^* g))\\
&= j_*(\partial_t \interior \Psi^* \mathcal L(\fieldsec,\partial\fieldsec,g))\\
&= j_*\Psi^*\left[ (\partial_t-X)\interior \mathcal L(\fieldsec,\partial\fieldsec,g) \right] \\
&= j_*\Psi^*\left[ L_\Sigma(\fieldsec,\partial_x\fieldsec,\partial_t\fieldsec,h,N,X) -
X\interior \mathcal L(\fieldsec,\partial\fieldsec,g) \right]\\
&= L_\Sigma(\fieldsec,\partial_x\fieldsec,\partial_t\fieldsec,h,N,X)
\end{align*}
where Lemma \ref{lem:t-preserving-diffeo} Parts 
\ref{part:kills-S-interior-omega} and 
\ref{part:spatial-stays-put}
have been applied at the final step.

Since $\Psi$ is the identity on $\Sigma_{t_0}\cap U$,
we have the following relations on $\Sigma_{t_0}\cap U$:
\begin{itemize}
  \setlength{\itemsep}{0pt}%
  \item $\hat \fieldsec=\fieldsec$ and $\partial_x \hat\fieldsec = \partial_x \fieldsec$
  \item $\hat h = h$ and  $\hat N = N$ 
  (Lemma \ref{lem:pullback-g} and Lemma \ref{lem:t-preserving-diffeo} Part \ref{part:spatial-stays-put}).
\end{itemize}
Moreover, since $\Psi_*\partial_t = \partial_t-X$,
Lemma \ref{lem:pullback-g} equation \eqref{eq:pull-X} implies
$\hat X = 0$ on all of $U$. Hence
\[
L_\Sigma(\fieldsec,\partial_x \fieldsec, \partial_t \hat \fieldsec, h, N, 0)
= L_\Sigma(\fieldsec,\partial_x\fieldsec,\partial_t\fieldsec,h,N,X)
\]
on $\Sigma_{t_0}\cap U$.  Noting that
\[
\partial_t\hat \fieldsec = (\Psi^* \fieldsec)_* \partial_t =
\fieldsec_* \Psi_* (\partial_t) = \fieldsec_* (\partial_t - X)
= \partial_t \fieldsec - \fieldsec_* X
\]
we obtain the desired equality at the arbitrary point $p$
and therefore generally.
\end{proof}

In order to connect the previous result to the momentum
constraint we recall the classical mechanics
notion of momentum conjugate to $\fieldsec$.
Fix a slice $\Sigma_{t_0}$ along with the following
data on $\Sigma_{t_0}$:
\begin{itemize}
  \setlength{\itemsep}{0pt}%
  \item a map $\fieldsec:\Sigma_{t_0}\to F$
  \item a spatial Riemannian metric $h$
  \item a positive function $N$
  \item a spatial vector field $X$.
\end{itemize}
At a point $p\in \Sigma_{t_0}$ we then
obtain a map
$ T_{\fieldsec(p)}(N) \to \Lambda^n_p(M)$
given by
\[
\dotfieldsec \mapsto L_\Sigma(\fieldsec,\partial_x\fieldsec, \dotfieldsec,g,N,X).
\]
The \textbf{momentum conjugate} to $\fieldsec$ at $p$
is the linearization
\[
\Pi_\fieldsec = \frac{\partial L_\Sigma}{\partial \dotfieldsec}
\]
which can be identified with a
$T^*F_{\fieldsec(p)}$-valued spatial $n$-form.  
That is, fixing
an arbitrary choice $\omega$
of a non-vanishing $n$-form on $\Sigma_{t_0}$
(e.g. $\omega = dV_h$, though this is not necessary) we have
\[
\Pi_\fieldsec = \eta\otimes\omega
\]
for some $\eta\in T_{\fieldsec(p)}^* F$. 
If $Z\in T_{\fieldsec(p)} F$ we define
\[
\Pi_\fieldsec(Z) = \eta(Z) \omega,
\]
and the value is evidently independent of the choice
of the pair $(\eta,\omega)$ representing $\Pi_\fieldsec$.

\begin{theorem}\label{thm:form-of-J-manifold-edition}
Using the notation of the preceding discussion,
\begin{equation}\label{eq:J-to-Pi-B-manifold-edition}
\momJ dV_h = -\frac12
\Pi_\fieldsec\circ \fieldsec_*
\end{equation}
\end{theorem}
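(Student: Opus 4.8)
The plan is to evaluate the variation $\partial L_\Sigma/\partial X$ in two different ways and match the results. First I would contract the fundamental identity \eqref{eq:def-J-first} with $\partial_t$. Since $L_\Sigma = \partial_t\interior\mathcal L_{\mathrm{Matter}}$ and the contraction $\partial_t\interior$ is a fixed linear operation independent of $X$, it commutes with the $X$-variation; using $\partial_t\interior(dt\wedge dV_h) = dV_h$ (because $dV_h$ is spatial) I obtain
\[
\frac{\partial L_\Sigma}{\partial X}[\delta X] = 2\,\momJ(\delta X)\,dV_h
\]
for every spatial vector $\delta X$. This already encodes the left-hand side $\momJ\,dV_h$ of the desired identity, and it remains only to identify the same variation with $-\Pi_\fieldsec\circ\fieldsec_*$.

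Next I would invoke Proposition \ref{prop:X-by-push-only}, which asserts that $L_\Sigma$ depends on $X$ and $\partial_t\fieldsec$ only through the combination $\partial_t\fieldsec - \fieldsec_* X$. Writing $L_\Sigma = \widetilde L(\fieldsec,\partial_x\fieldsec,\,\partial_t\fieldsec - \fieldsec_* X,\,h,N)$ and differentiating in $X$ while holding $\fieldsec$, $\partial_x\fieldsec$, $\partial_t\fieldsec$, $h$, and $N$ fixed, the chain rule gives
\[
\frac{\partial L_\Sigma}{\partial X}[\delta X] = \frac{\partial \widetilde L}{\partial v}\bigl[-\fieldsec_*\,\delta X\bigr],
\]
where $v$ denotes the velocity slot and $\fieldsec_*\,\delta X$ is linear in $\delta X$ because $\fieldsec$ is held fixed. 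The key observation is that this same partial derivative $\partial\widetilde L/\partial v$, evaluated at $v = \partial_t\fieldsec - \fieldsec_* X$, is exactly the conjugate momentum $\Pi_\fieldsec = \partial L_\Sigma/\partial\dotfieldsec$ at the true configuration, since the velocity slot reacts identically to a perturbation of $\dotfieldsec$. Hence
\[
\frac{\partial L_\Sigma}{\partial X}[\delta X] = -\Pi_\fieldsec(\fieldsec_*\,\delta X) = -(\Pi_\fieldsec\circ\fieldsec_*)(\delta X).
\]

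Comparing the two displays yields $2\,\momJ(\delta X)\,dV_h = -\Pi_\fieldsec(\fieldsec_*\,\delta X)$ for all spatial $\delta X$ at each point of $\Sigma_{t_0}$, and dividing by $2$ gives the pointwise identity \eqref{eq:J-to-Pi-B-manifold-edition}. The feature that makes this argument purely pointwise — in contrast to the tensor-valued case of Section \ref{sec:tensor-valued}, where an integration by parts and the Euler-Lagrange constraints become unavoidable — is that $\fieldsec_*\,\delta X$ is genuinely tensorial (pointwise linear) in $\delta X$, unlike the Lie derivative $\Lie_{\delta X}B$. I expect the only delicate point to be the bookkeeping in the chain-rule step: one must verify that the linearization $\partial\widetilde L/\partial v$ appearing after differentiating in $X$ is evaluated at precisely the argument $v = \partial_t\fieldsec - \fieldsec_* X$ that makes it coincide with $\Pi_\fieldsec$ at the given state, and that the interchange of $\partial_t\interior$ with the $X$-variation is legitimate. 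Both are routine once Proposition \ref{prop:X-by-push-only} is in hand.
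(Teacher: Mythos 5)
Your proposal is correct and follows essentially the same route as the paper's proof: the paper likewise differentiates a one-parameter family of shifts $X_s$, uses Proposition \ref{prop:X-by-push-only} to write $L_\Sigma(\fieldsec,\partial_x\fieldsec,\dotfieldsec,h,N,X_s)=L_\Sigma(\fieldsec,\partial_x\fieldsec,\dotfieldsec-\fieldsec_*X_s,h,N,0)$, and identifies the two sides of the $s$-derivative via equation \eqref{eq:def-J-first} and the definition of $\Pi_\fieldsec$, with the pointwise conclusion available exactly because $\fieldsec_*\,\delta X$ is tensorial in $\delta X$. The one detail the paper flags that you pass over silently is that Proposition \ref{prop:X-by-push-only} is stated for fields and metrics defined on $M$, so applying it along the family $X_s$ at a single point requires a brief local extension argument; your chain-rule bookkeeping, including the identification of $\partial\widetilde L/\partial v$ at $v=\partial_t\fieldsec-\fieldsec_*X$ with $\Pi_\fieldsec$, is otherwise exactly the paper's computation.
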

\begin{proof}
Let $X_s$ be a one-parameter family of shifts at some point $p$
with $X_0=X$ and $\left.\frac{d}{dt}\right|_{s=0} X_s= \delta X$.
Although Proposition \ref{prop:X-by-push-only} is a
statement about fields defined on $M$, local extension
arguments imply that at $p$
\[
L_{\Sigma}(\fieldsec,\partial_x \fieldsec,\dotfieldsec, h, N,X_s)
=  L_\Sigma(\fieldsec,\partial_x \fieldsec,\dotfieldsec-\fieldsec_* X_s, g, N,0).
\]
Taking a derivative with respect to $s$ at $s=0$ and using
equation \eqref{eq:def-J-first} we find
\[
\momJ(\delta X) dV_h = -\frac{1}{2} \Pi_\fieldsec( \fieldsec_* \delta X).
\]
\end{proof}

\section{Lagrangians depending on tensor fields}

The aim of this section is prove a generalization of 
Theorem \ref{thm:form-of-J-manifold-edition} to
matter fields represented by sections of tensor bundles.
The work is more involved because the fields transform
nontrivially under diffeomorphism, and the first step
is a careful decomposition into spatial tensors.

\label{sec:tensor-valued}
\subsection{$(n+1)$ decomposition of tensors}
\label{sec:decomp}
There is more than one way to decompose a tensor on $M$ 
into spatial tensors representing its components. 
One could use, e.g., the coordinate
representation of the tensor in terms of $dt$ and $\partial_t$.
Alternatively,  with a Lorentzian metric $g$ in hand
one could use the 
slice normal $\nu$ in place of $\partial_t$.  The normal
$\nu$ depends on both the lapse $N$ and shift $X$, however,
and it turns out to be convenient for our purposes 
to not involve the lapse
in the tensor decomposition.  Thus we describe below a 
decomposition
based on $\partial_t-X$ rather than the full normal $\nu$.

Let $\mathcal P_X$ be the $g$-orthogonal
projection of tangent vectors to vectors tangential to 
slices $\Sigma_t$.  If $Z=a\partial_t + W$ where $W$
is spatial, then 
\[
Z=a N\nu + aX + W
\]
and $\mathcal P_X(Z) = aX+W$.  The adjoint
$\mathcal P_X^*$ acts on 1-forms $\omega$ by declaring
$(\mathcal P_X^*\omega)(X) = \omega(\mathcal P_X Z)$ for arbitrary
vectors $Z$.  Note that the notation $\mathcal P_X$ is justified
since, in terms of the $n+1$ decomposition $(h,N,X)$,
the projection depends only on the shift.

Given an arbitrary tensor in $T^{k_1}_{k_2}(M)$ 
of rank $k=k_1+k_2$, and given 
a slice-compatible Lorentzian metric $g$
with $n+1$ decomposition $(h,N,X)$
we obtain
$2^k$ spatial tensors of varying rank less than or equal
to $k$ by inserting one of 
$\partial_t-X$ or $j_*$ for each contravariant argument 
and one of $dt$ or $\mathcal P_X^*$ for each covariant
argument. For example, a 1-form $\omega =  b dt + \eta$
where $\eta$ is spatial decomposes to
\[
(b-\eta(X), \eta)
\]
and a vector $Z=a\partial_t+W$ where $W$ is spatial decomposes
to 
\[
(a, W+aX).
\]
Fixing some arbitrary ordering for the resulting tensors (e.g., lexicographically based on the interior product/projection choice made
for each argument) we write $\mathcal S(T;X)$ for the 
collection of $2^k$ spatial tensors obtained this way and call the result the $n+1$ \textbf{decomposition} of $T$ 
determined by the time function $t$ and metric $g$, noting
that the metric is only involved via the shift $X$. 

A tuple of tensors
$B$ representing a possible value of $\mathcal S$ is a
\textbf{decomposed tensor} (of \textbf{unified} type $T^{k_1}_{k_2}$),
and the set of all decomposed tensors is a fiber bundle over $M$
consisting of a Whitney sum of spatial subbundles of tensor bundles.

From the explicit representations above 
it is clear that
$\mathcal S(\cdot;X)$ is invertible for 1-forms and vector fields,
and a proof by induction shows that it is invertible for arbitrary
tensor fields.  Writing $\mathcal S^{-1}(\cdot;X)$ for the recomposition
operation, $\mathcal S^{-1}(B;X)$ can be written in terms of linear
combinations of tensors obtained from $B$ and $X$ 
using the following operations:
\begin{enumerate}
  \setlength{\itemsep}{0pt}%
  \item interior products with $X$,
  \item tensor products with $X$, $dt$ and $\partial_t$,
  \item argument reordering.
\end{enumerate}
In particular, $\mathcal S^{-1}(B;X)$ depends smoothly on $B$.

The $(n+1)$ decomposition map $\mathcal S$ is
natural under $t$-preserving diffeomorphisms
in the following sense.

\begin{lemma}\label{lem:pullback-split} Let $\Psi$ be a $t$-preserving diffeomorphism
from a neighborhood $U$ to a neighborhood $V$ of $M$
and let $g$ be a slice-compatible Lorentzian metric
on $V$.  Given a tensor field $T$
on $V$ define
\begin{align*}
B &= \mathcal S(T;X)\\
\hat B &= \mathcal S(\Psi^* T; \hat X)
\end{align*}
where $\hat X$ is the shift vector of 
the slice-compatible Lorentzian metric 
$\Psi^* g$.
Then 
\begin{equation}\label{eq:pull-back-B}
\hat B = j_* \Psi^* B.
\end{equation}

If additionally $\Psi|_{\Sigma_{t_0}\cap U}=\Id$ 
for some $t_0$, then the following equations hold
on $\Sigma_{t_0}\cap U$:
\begin{align}
\hat B &= B\label{eq:hat-B-unmoved}\\
\Lie_S \hat B &= \Lie_S B\label{eq:Lie-hat-or-not}
\end{align}
for all spatial vector fields $S$.
\end{lemma}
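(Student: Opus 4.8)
The plan is to prove \eqref{eq:pull-back-B} first, and then deduce \eqref{eq:hat-B-unmoved} and \eqref{eq:Lie-hat-or-not} as quick corollaries. Each component of the decomposition $\mathcal S(\,\cdot\,;X)$ is built by applying, slot by slot, either a contraction (with $dt$ or with $\partial_t - X$) or a projection ($j_*$ or $\mathcal P_X$); every such component is therefore linear in the input tensor and, because the operations act independently on the slots, multiplicative over tensor products: for $T = \alpha\otimes\beta$, each component of $\mathcal S(T;X)$ is the tensor product of a component of $\mathcal S(\alpha;X)$ with a component of $\mathcal S(\beta;X)$. Since $j_*$ and $\Psi^*$ likewise distribute over tensor products, and since every tensor field is locally a finite sum of tensor products of frame covectors and vectors, it suffices to verify \eqref{eq:pull-back-B} on decomposable inputs, and hence, by induction on the rank, only in the rank-one cases where $T$ is a covector or a vector.

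For a covector $\omega$, the two components of $\mathcal S(\omega;X)$ are the scalar $\omega(\partial_t - X)$ and the spatial one-form $j_*\omega$. The scalar component transforms correctly because $\partial_t - \hat X = \Psi^*(\partial_t - X)$ by \eqref{eq:push-partial-0}, so that $(\Psi^*\omega)(\partial_t - \hat X) = \Psi^*\!\left[\omega(\partial_t - X)\right]$, which is $j_*\Psi^*$ of the scalar (on which $j_*$ acts trivially). The spatial component reduces to the assertion $j_*\Psi^*\omega = j_*\Psi^* j_*\omega$, which is exactly Lemma \ref{lem:t-preserving-diffeo} Part \ref{part:covariant-almost-commutes}. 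The vector case is where the real work lies, and I expect it to be the main obstacle, because the spatial component is $\mathcal P_X Z$ and one must reconcile the projection's dependence on the shift $X$ with the shift $\hat X$ used in $\mathcal S(\,\cdot\,;\hat X)$; concretely one needs $\mathcal P_{\hat X}\Psi^* Z = j_*\Psi^*\mathcal P_X Z$. The device that unlocks this is to rewrite the projection using only clean building blocks, $\mathcal P_X Z = Z - dt(Z)\,(\partial_t - X)$. Then $\Psi^* dt = dt$ (since $\Psi$ preserves $t$), $\Psi^*(\partial_t - X) = \partial_t - \hat X$ from \eqref{eq:push-partial-0}, together with $j_*\partial_t = 0$ and $j_*\hat X = \hat X$, collapse both sides to $j_*\Psi^* Z + \Psi^*\!\left(dt(Z)\right)\hat X$; the matching of the scalar factor uses $dt(\Psi^* Z) = \Psi^*\!\left(dt(Z)\right)$, again a consequence of $\Psi^* dt = dt$. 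The time component $Z(dt)$ is handled by the same identity $\Psi^* dt = dt$.

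With \eqref{eq:pull-back-B} established, the remaining identities are immediate. Each component of $B = \mathcal S(T;X)$ is a spatial tensor, so when $\Psi$ restricts to the identity on $\Sigma_{t_0}\cap U$, Lemma \ref{lem:t-preserving-diffeo} Part \ref{part:spatial-stays-put} gives $j_*\Psi^* B = B$ there; combining this with \eqref{eq:pull-back-B} yields $\hat B = B$ on $\Sigma_{t_0}\cap U$, which is \eqref{eq:hat-B-unmoved}. Finally, since $\hat B$ and $B$ agree on the slice $\Sigma_{t_0}\cap U$ and $S$ is spatial, Lemma \ref{lem:basic-split-Lie} Part \ref{part:lie-of-zero} applied componentwise gives $\Lie_S \hat B = \Lie_S B$ on $\Sigma_{t_0}\cap U$, which is \eqref{eq:Lie-hat-or-not}. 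The one point requiring care beyond routine computation is the justification that $\mathcal S$ is genuinely linear and multiplicative over tensor products, so that the reduction to the rank-one cases is legitimate; this follows directly from the slotwise definition of the decomposition.
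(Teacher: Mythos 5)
Your proof is correct and takes essentially the same route as the paper's: reduce \eqref{eq:pull-back-B} to the covector and vector cases (legitimate since $\mathcal S$, $j_*$, and $\Psi^*$ all act slotwise and distribute over tensor products), verify the component identities using $\Psi^* dt = dt$, equation \eqref{eq:push-partial-0}, and Lemma \ref{lem:t-preserving-diffeo} part \ref{part:covariant-almost-commutes}, then deduce \eqref{eq:hat-B-unmoved} from Lemma \ref{lem:t-preserving-diffeo} part \ref{part:spatial-stays-put} and \eqref{eq:Lie-hat-or-not} from Lemma \ref{lem:basic-split-Lie} part \ref{part:lie-of-zero}. Your device $\mathcal P_X Z = Z - dt(Z)\,(\partial_t - X)$ is only a mild repackaging of the paper's decomposition $Z = a(\partial_t - X) + W$ with $W = \mathcal P_X Z$, and the two computations are equivalent.
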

\begin{proof}
To establish equation \eqref{eq:pull-back-B} we demonstrate
below that for 1-forms $\omega$,
\begin{align}
(\partial_t-\hat X)\interior \Psi^*\omega = \Psi^*((\partial_t- X)\interior\omega)\label{eq:pullback-omega-interior}\\
(\Psi^* \omega)\circ j_* = j_* 
\Psi^* (\omega\circ j_*) \label{eq:pullback-omega-restricted}
\end{align}
and that for vector fields $Z$
\begin{align}
dt \interior \Psi^* Z = \Psi^*( dt\interior Z)\label{eq:pullback-Z-interior}\\
(\Psi^* Z) \circ \mathcal P_{\hat X}^* = \Psi^* (Z\circ \mathcal P_{X}^*)
\label{eq:pullback-Z-projected}.
\end{align}
Together these formulas nearly yield $\hat B = \Psi^* B$
for 1-forms and vectors, but equation \eqref{eq:pullback-omega-restricted} leads to the weaker statement
$\hat B = j_* \Psi^*B$.
The same arguments used to establish equations
\eqref{eq:pullback-omega-interior}--\eqref{eq:pullback-Z-projected}
apply to each component of a tensor and lead to the general result.

Equation \eqref{eq:pullback-omega-interior} follows from 
the identity 
$
\partial_t-\hat X = \Psi^*(\partial_t-X)
$
from Lemma \ref{lem:pullback-g}, equation
\eqref{eq:pullback-omega-restricted} is immediate
from Lemma \ref{lem:t-preserving-diffeo} part \ref{part:covariant-almost-commutes}, and 
equation  \eqref{eq:pullback-Z-interior} is a consequence
of the identity $\Psi^* dt = dt$.
Turning to equation \eqref{eq:pullback-Z-projected},
consider a vector $Z$ and let $W=\mathcal P_X(Z)$.
Then
$
Z = a(\partial_t -X) + W
$
for some number $a$ and
$
\Psi^*(Z) = a \Psi^*(\partial_t-X) + \Psi^*W = a(\partial_t-\hat X) +\Psi^*W.
$
Since $\Psi$ preserves $t$, $\Psi^* W$ is spatial as well and 
consequently
\[
\mathcal P_{\hat X} \Psi^* Z = \Psi^* W = \Psi^*(\mathcal P_X Z).
\]
Equation \eqref{eq:pullback-Z-projected} then follows from the definition of the adjoint.

Finally suppose $\Psi=\Id$ on $U\cap \Sigma_{t_0}$ for some $t_0$.
Equation \eqref{eq:hat-B-unmoved} on $U\cap \Sigma_{t_0}$ 
follows from  equation \eqref{eq:pull-back-B} and
Lemma \ref{lem:t-preserving-diffeo}
part \ref{part:spatial-stays-put}.  But then
Lemma \ref{lem:basic-split-Lie} part \ref{part:lie-of-zero}
implies equation \eqref{eq:Lie-hat-or-not}
on $\Sigma_{t_0}\cap U$ as well.
\end{proof}

\subsection{Structure of the momentum density}
Let $E$ be a tensor
bundle $T^{k_1}_{k_2}(M)$.  
A \textbf{minimally coupled Lagrangian depending 
on a field with values} $E$ is a smooth bundle map
\[
\mathcal L = E\oplus \underbrace{(E \oplus \cdots \oplus E)}_{\text{$n+1$ times}} \oplus G(M) \to \Lambda^{n+1}(M)
\]
where $\oplus$ denotes the Whitney sum
and $G(M)$ is the bundle of Lorentzian metrics.
A section $T$ of $E$ and Lorentzian metric $g$ then
determine the $n+1$ form
\[
\mathcal L(T,\partial T, g)
\]
where $\partial T$ is shorthand for the tuple
of Lie derivatives
$(\partial_{x^i} T, \ldots, \partial_{x^n} T, \partial_t T)$.

We say that $\mathcal L$ is \textbf{diffeomorphism invariant} if
whenever $\Psi$ is an orientation preserving 
diffeomorphism from a neighborhood
$U_1$ of $M$ to a neighborhood $U_2$ of $M$,
\[
\Psi^* \mathcal L(T, \partial T, g)
= \mathcal L( \Psi^* T, 
\partial (\Psi^* T), \Psi^* g)\,\;\text{on $U_1$}
\]
whenever $T$ and $g$ are defined on $U_2$.

To describe the slice Lagrangian in this context,
consider a point $p\in M$ and the following data at $p$:
\begin{itemize}
  \setlength{\itemsep}{0pt}
  \item $B$, a decomposed tensor of unified type $T^{k_1}_{k_2}(M)$,
  \item $B_{(i)}$, $i=1,\ldots,n$, decomposed tensors 
  of the same unified type as $B$,
  \item $\dot B$, a decomposed tensor of the same unified type as $B$
  \item $h$, a spatial Riemannian
metric,
  \item $N>0$,
  \item vectors $X$, $X_{(1)},\ldots, X_{(n)}$ and $\dot X$.
\end{itemize}
Let $\tilde B$ be a smooth extension
of $B$ near $p$ with $\partial_{x^i} B = B_{(i)}$ for each $i$
and $\partial_t B = \dot B$ at $p$; an easy coordinate argument
shows that such an extension exists.  Let $\tilde X$ be a similar
extension of $X$.  The associated value of the 
\textbf{slice Lagrangian} $L_{\Sigma}$ at $p$
is
\[
L_{\Sigma}(B,(B_{(1)},\ldots,B_{(n)}),\dot B,h,N,X,
(X_{(1)},\ldots,X_{(n)}),\dot X) = 
\partial_t\interior \mathcal L(\mathcal S^{-1}(\tilde B,\tilde X),\partial \mathcal S^{-1}(\tilde B,\tilde X), g(h,N,\tilde X)).
\]
This map is well defined because the derivatives of $S^{-1}(\tilde B;\tilde X)$ at $p$ are computable solely from the values and
derivatives of $\tilde B$ and $\tilde X$ at $p$ and hence
are independent of the choice of extension.
Because the 
derivatives of $\mathcal S^{-1}(\tilde B,\tilde X)$ at $p$
depend smoothly on the derivatives of $\tilde B$ and $\tilde X$ at $p$,
$L_\Sigma$ is a smooth bundle map taking on values in the subbundle
of spatial $n$-forms.  If $B$ and $X$ are defined in a neighborhood of $p$ then they serve as their own extensions and we find
\[
L_{\Sigma}(B,\partial_x B, \partial_t B,h,N,X,
\partial_x X,\partial_t X) =
\partial_t\interior \mathcal L(\mathcal S^{-1}(B,X),\partial \mathcal S^{-1}(B,X), g(h,N,X))
\]
at $p$ and indeed on the whole neighborhood. 
  
We have the following analog of Proposition \ref{prop:X-by-push-only}.
\begin{proposition}\label{prop:X-by-Lie-only}
Let $E$ be a tensor bundle $T^{k_1}_{k_2}(M)$
and suppose $\mathcal L$ is a minimally coupled
diffeomorphism invariant 
Lagrangian with field values from $E$.
Let $g$ be a slice-compatible metric on $M$ with 
$(n+1)$ decomposition $(h,N,X)$, let $T$
be a section of $E$, and let $B=S(T;X)$.  
Then
\[
L_{\Sigma}(B,\partial_x B, \partial_t B, h, N, X, \partial_x X, \partial_t X) = 
L_{\Sigma}(B,\partial_x B, \partial_t B-\Lie_X B, h, N, 0,  0,  0).
\]
\end{proposition}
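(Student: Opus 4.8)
The plan is to mirror the proof of Proposition \ref{prop:X-by-push-only}, replacing the manifold-valued field $\fieldsec$ by the decomposed tensor $B=\mathcal S(T;X)$ and invoking Lemma \ref{lem:pullback-split} wherever the earlier proof used naturality of $\fieldsec$ under pullback. First I would fix $p\in\Sigma_{t_0}$ and construct, exactly as before, a $t$-preserving diffeomorphism $\Psi$ on a neighborhood $U$ of $p$ that restricts to the identity on $\Sigma_{t_0}\cap U$ and satisfies $\Psi_*\partial_t=\partial_t-X$, obtained by flowing along $\partial_t-X$. Writing $\hat g=\Psi^*g$ with $(n+1)$ decomposition $(\hat h,\hat N,\hat X)$, Lemma \ref{lem:pullback-g} gives $\hat X\equiv 0$ on all of $U$: the spatial vector $S$ with $\Psi_*\partial_t=\partial_t+S$ is $S=-X$, so $\hat X=\Psi^*(X+S)=0$. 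Together with Lemma \ref{lem:t-preserving-diffeo} part \ref{part:spatial-stays-put} this also gives $\hat h=h$ and $\hat N=N$ on $\Sigma_{t_0}$. Setting $\hat T=\Psi^*T$ and $\hat B=\mathcal S(\hat T;\hat X)$, Lemma \ref{lem:pullback-split} yields $\hat B=j_*\Psi^*B$ on $U$ and $\hat B=B$ on $\Sigma_{t_0}$.

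The master identity then comes from diffeomorphism invariance, run just as in the manifold case. Since $\mathcal S^{-1}(\hat B;\hat X)=\hat T=\Psi^*T$ and $g(\hat h,\hat N,\hat X)=\Psi^*g$, the definition of the slice Lagrangian gives
\[
L_\Sigma(\hat B,\partial_x\hat B,\partial_t\hat B,\hat h,\hat N,\hat X,\partial_x\hat X,\partial_t\hat X)
=\partial_t\interior\mathcal L(\Psi^*T,\partial\Psi^*T,\Psi^*g)
=\partial_t\interior\Psi^*\mathcal L(T,\partial T,g).
\]
Using $\Psi_*\partial_t=\partial_t-X$ to rewrite $\partial_t\interior\Psi^*=\Psi^*\big((\partial_t-X)\interior\,\cdot\,\big)$, splitting off the $X\interior\mathcal L$ term, and applying $j_*$ (which fixes the already-spatial left-hand side) together with Lemma \ref{lem:t-preserving-diffeo} parts \ref{part:kills-S-interior-omega} and \ref{part:spatial-stays-put} on $\Sigma_{t_0}$, this collapses to
\[
L_\Sigma(\hat B,\partial_x\hat B,\partial_t\hat B,\hat h,\hat N,\hat X,\partial_x\hat X,\partial_t\hat X)
=L_\Sigma(B,\partial_x B,\partial_t B,h,N,X,\partial_x X,\partial_t X)\quad\text{on }\Sigma_{t_0}.
\]
It then remains to read off the hatted arguments on $\Sigma_{t_0}$: we have $\hat h=h$, $\hat N=N$, and since $\hat X\equiv 0$ on $U$ also $\hat X=\partial_x\hat X=\partial_t\hat X=0$; moreover $\hat B=B$, and because each $\partial_{x^i}$ is spatial, Lemma \ref{lem:pullback-split} equation \eqref{eq:Lie-hat-or-not} gives $\partial_x\hat B=\partial_x B$.

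The one genuinely new point — and where I expect the main work to lie — is the time derivative $\partial_t\hat B$, which must supply exactly the correction $-\Lie_X B$. Because $\partial_t$ is not spatial, equation \eqref{eq:Lie-hat-or-not} does not apply directly; instead I would begin from $\hat B=j_*\Psi^*B$, commute $\Lie_{\partial_t}$ past $j_*$ (Lemma \ref{lem:basic-split-Lie} part \ref{part:lie-of-restrict}), and use the intertwining $\Lie_{\partial_t}\Psi^*B=\Psi^*\Lie_{\Psi_*\partial_t}B=\Psi^*\Lie_{\partial_t-X}B$ to obtain $\partial_t\hat B=j_*\Psi^*\big(\Lie_{\partial_t-X}B\big)$ on $U$. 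The key observation is that $\Lie_{\partial_t-X}B=\partial_t B-\Lie_X B$ is spatial, since both $\partial_t B$ and $\Lie_X B$ are spatial by Lemma \ref{lem:basic-split-Lie} part \ref{part:lie-of-spatial}; hence Lemma \ref{lem:t-preserving-diffeo} part \ref{part:spatial-stays-put} fixes it under $j_*\Psi^*$ along $\Sigma_{t_0}$, giving $\partial_t\hat B=\partial_t B-\Lie_X B$ there. Substituting all of these values into the master identity yields
\[
L_\Sigma(B,\partial_x B,\partial_t B-\Lie_X B,h,N,0,0,0)=L_\Sigma(B,\partial_x B,\partial_t B,h,N,X,\partial_x X,\partial_t X)
\]
at the arbitrary point $p$, and the proposition follows.
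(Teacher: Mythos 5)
Your proposal is correct and follows the paper's own proof essentially step for step: the same flow-generated $t$-preserving diffeomorphism with $\Psi_*\partial_t = \partial_t - X$, the same master identity from diffeomorphism invariance via Lemma \ref{lem:t-preserving-diffeo} parts \ref{part:kills-S-interior-omega} and \ref{part:spatial-stays-put}, the same reading-off of $\hat h$, $\hat N$, $\hat X$, $\hat B$, and $\partial_x \hat B$ through Lemmas \ref{lem:pullback-g} and \ref{lem:pullback-split}, and the same treatment of $\partial_t \hat B$ by commuting $\Lie_{\partial_t}$ past $j_*$, intertwining with $\Psi^*$, and using spatiality of $\partial_t B - \Lie_X B$ to invoke Lemma \ref{lem:t-preserving-diffeo} part \ref{part:spatial-stays-put}. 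You even correctly identified the time-derivative term as the one genuinely new point beyond Proposition \ref{prop:X-by-push-only}, which is exactly where the paper's proof concentrates its effort.
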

\begin{proof}
Let $T$ and $g$ be given and consider some point $p$
in $M$.  As in the proof of 
Proposition \ref{prop:X-by-push-only}, 
let $\Psi$ be a diffeomorphism from a neighborhood $U$ 
of $p$ to a neighborhood $V$ of $p$ that fixes the slice
$\Sigma_{t_0}\cap U$ containing $p$ and such that 
$\Psi_* \partial_t = \partial_t -X$.

Let $\hat g = \Psi^* g$ with $(n+1)$ decomposition
$(\hat h, \hat N, \hat X)$ and let 
$\hat B = \mathcal S( \Psi^* T;\hat X)$. Using diffeomorphism
invariance we compute on $\Sigma_{t_0}\cap U$
\begin{align*}
L_{\Sigma}(\hat B, \partial_x \hat B, \partial_t \hat B,
\hat h, \hat N, \hat X, \partial_x \hat X, \partial_t \hat X)
& =
j_*(\partial_t\interior  \mathcal L( \Psi^*T,\partial \Psi^*T, \hat g))\\
& =
j_*(\partial_t\interior \Psi^* \mathcal L( T,\partial T, g))\\
&= 
j_* \Psi^*\left[(\partial_t-X)\interior
\mathcal L( T,\partial T, g)\right]\\
&= j_*\Psi^* L_{\Sigma}( B, \partial_x B, \partial_t B,
h, N, X, \partial_x X, \partial_t X)
- j_*\Psi^* X\interior \mathcal L( T,\partial T, g)\\
&= L_{\Sigma}( B, \partial_x B, \partial_t B,
h, N, X, \partial_x X, \partial_t X).
\end{align*}
The first equality in the chain uses the fact that $L_\Sigma$ is spatial
(and hence $j_*L_\Sigma=L_\Sigma$); the final equality follows from 
Lemma \ref{lem:t-preserving-diffeo}
Parts \ref{part:kills-S-interior-omega}
 and \ref{part:spatial-stays-put}.

The same argument as in Proposition \ref{prop:X-by-push-only}
shows $\hat h=h$ and $\hat N=N$ on $\Sigma_{t_0}\cap U$
and that $\hat X=0$ on all of $U$.
From the final conclusions of
Lemma \ref{lem:pullback-split} we
find $\hat B = B$ and $\partial_x \hat B=\partial_x B$
on $\Sigma_{t_0}\cap U$, and combining these
observations we conclude that on $U$,
\[
L_{\Sigma}(B, \partial_x B, \partial_t \hat B,
h, N, 0, \partial_x 0, \partial_t 0)
=
L_{\Sigma}(B, \partial_x B, \partial_t  B,
h, N, X, \partial_x X, \partial_t X).
\]

We wish to rewrite 
the remaining hatted term, $\partial_t \hat B$,
in terms of unhatted quantities.
Lemma \ref{lem:pullback-split}, 
Lemma \ref{lem:basic-split-Lie} part \ref{part:lie-of-restrict}
and Lemma \ref{lem:pullback-g} equation \eqref{eq:push-partial-0}
along with the fact that $\hat X\equiv 0$ imply
\[
\partial_t \hat B = \partial_t\; j_* \Psi^* B
= 
j_* \partial_t  \Psi^* B
= j_* \Psi^*( (\partial_t - \Lie_X) B) ).
\]
Since $B$ is spatial, $\partial_t B$ and
$\Lie_X B$ are both spatial (Lemma \ref{lem:basic-split-Lie}
Part \ref{part:lie-of-spatial}).
Hence Lemma \ref{lem:t-preserving-diffeo} 
part \ref{part:spatial-stays-put}
implies
\[
j_*  \Psi^*( (\partial_t - \Lie_X) B ) 
= (\partial_t - \Lie_X) B
\]
on $\Sigma_{t_0}\cap U$.  That is, 
$\partial_t \hat B= (\partial_t - \Lie_X) B$
at $p$.
\end{proof}

As a consequence of Proposition \ref{prop:X-by-Lie-only},
we drop the dependence of $L_\Sigma$ on $\dot X$.

We would like to generalize 
Theorem \ref{thm:form-of-J-manifold-edition}, 
and to do this we
first recall the notion of momentum in this context.
Fix $B$, $h$, $N$, and $X$, on a slice $\Sigma_{t_0}$.
Since $X$ is spatial, the Lie derivatives $\partial_x X$
are well-defined on $\Sigma_{t_0}$ and
we have at each $p$ on $\Sigma_{t_0}$ a smooth map
\[
\dot B \mapsto 
L_{\Sigma}(B,\partial_x B, \dot B, B, h, N,X,\partial_x X) NdV_h
\]
with linearization
\[
\Pi_B := \frac{\partial L_{\Sigma}}{\partial \dot B}.
\]
This linearization is, by definition, the \textbf{momentum conjugate}
to $B$. Fix a nonvanishing spatial $n$-form $\omega$ on $\Sigma_{t_0}$.
Because $B$ is a tuple $(B_1,\ldots,B_{2^k})$
of tensor fields, 
$\Pi_B$ can be represented as a tuple $(B_1^* \omega,\ldots,B_{2^k}^*\omega)$
of tensor field valued $n$-forms, with the starred tensor fields  dual to those of $B$. If $\delta B=(\delta B_1,\ldots,\delta B_{2^k})$ 
is a decomposed tensor of the  same unified type as $B$ we define
\[
\Pi_B(\delta B) = \sum_{j=1}^{2^k} B_j^*(\delta B_j) \omega
\]
with $B_j^*(\delta B_j)$ denoting tensor contraction.
From the close coupling of $\partial_t B$ and $\Lie_X B$ found
in Proposition \ref{prop:X-by-Lie-only} we have
the following immediate conclusion.

\begin{corollary}\label{cor:vary-X-to-Lie}
On $\Sigma_{t_0}$ let $B$ and $\dot B$ be decomposed tensor fields
and let $(h,N,X)$ be a decomposed metric.
Suppose $X_s$ is a one-parameter family of shifts
with $X_0=X$ and define 
$\delta X = \left.\frac{d}{ds}\right|_{s=0} X_s$.  
Then
\[
\left.\frac{d}{ds}\right|_{s=0}
L_{\Sigma}(B,\partial_x B, \dot B, h, N, X_s, \partial_x X_s )
= -\Pi_B ( \Lie_{\delta X} B ).
\]
\end{corollary}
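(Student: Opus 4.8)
The plan is to read off the corollary as essentially a one-line differentiation of Proposition~\ref{prop:X-by-Lie-only} along the family $X_s$. Since $\mathcal S(\,\cdot\,;Y)$ is a bijection for every shift $Y$, the fixed decomposed tensor $B$ is realized as $\mathcal S(T_s;X_s)$ for $T_s=\mathcal S^{-1}(B;X_s)$, so Proposition~\ref{prop:X-by-Lie-only} applies with shift $X_s$. Dropping the (now irrelevant) $\dot X$-slot and holding $B$ and $\dot B$ fixed, it yields, for every $s$,
\[
L_{\Sigma}(B,\partial_x B,\dot B,h,N,X_s,\partial_x X_s)
= L_{\Sigma}(B,\partial_x B,\dot B-\Lie_{X_s}B,h,N,0,0).
\]
The whole point of invoking the proposition is that the shift and its spatial-derivative slots on the right have been frozen at $0$, so the parameter $s$ survives only inside the single argument $\dot B-\Lie_{X_s}B$ occupying the $\dot B$-slot.

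I would then differentiate this identity at $s=0$. By the chain rule the right-hand derivative collapses to the linearization of $L_\Sigma$ in its $\dot B$-slot, applied to $\left.\frac{d}{ds}\right|_{s=0}(\dot B-\Lie_{X_s}B)$. That linearization is, by definition, the conjugate momentum $\Pi_B$. One small bookkeeping point is that it is being evaluated at the frozen configuration $(B,\partial_x B,\dot B-\Lie_X B,h,N,0,0)$ rather than at the given $(B,\partial_x B,\dot B,h,N,X,\partial_x X)$ where $\Pi_B$ is defined; but these two linearizations coincide, as one sees by differentiating the displayed identity in its $\dot B$-slot. Indeed that identity holds for every value $w$ of the $\dot B$-slot, the inner map $w\mapsto w-\Lie_X B$ has derivative the identity, and $\Lie_X B$ does not depend on $w$, so the two linearizations of $L_\Sigma$ agree. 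Hence the right-hand derivative is exactly $\Pi_B\!\left(\left.\frac{d}{ds}\right|_{s=0}(\dot B-\Lie_{X_s}B)\right)$.

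It remains to compute $\left.\frac{d}{ds}\right|_{s=0}\Lie_{X_s}B$, and this is the step that warrants care and is the closest thing to an obstacle here. The operator $\Lie_{(\cdot)}B$ is notoriously not tensorial in its vector-field argument — the very failure that later forces equation~\eqref{eq:J-as-Pi-ideal} to hold only under an integral sign — yet it is an $\mathbb R$-linear first-order differential operator in that argument. Differentiation along the smooth path $X_s$ therefore passes straight through it: in coordinates $\Lie_{X_s}B$ is linear in the components of $X_s$ and their spatial derivatives, so $\left.\frac{d}{ds}\right|_{s=0}\Lie_{X_s}B=\Lie_{\delta X}B$ with $\delta X=\left.\frac{d}{ds}\right|_{s=0}X_s$. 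Combining the three steps gives $\left.\frac{d}{ds}\right|_{s=0}L_\Sigma(B,\partial_x B,\dot B,h,N,X_s,\partial_x X_s)=\Pi_B(-\Lie_{\delta X}B)=-\Pi_B(\Lie_{\delta X}B)$, as claimed. Everything beyond this linearity observation and the evaluation-point bookkeeping is a direct invocation of Proposition~\ref{prop:X-by-Lie-only} together with the chain rule.
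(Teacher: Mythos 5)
Your proposal is correct and is precisely the elaboration the paper intends: the paper states the corollary as an ``immediate conclusion'' of Proposition~\ref{prop:X-by-Lie-only}, and your argument---rewriting $L_\Sigma$ with the shift slots frozen at zero so that $s$ survives only in the $\dot B$-slot as $\dot B - \Lie_{X_s}B$, applying the chain rule, and using linearity of $\Lie_{(\cdot)}B$ in the vector-field argument to get $\Lie_{\delta X}B$---is exactly that mechanism. Your evaluation-point bookkeeping (that the $\dot B$-linearization at the frozen configuration agrees with $\Pi_B$, by differentiating the identity in its $\dot B$-slot) is a worthwhile detail the paper leaves tacit, and it is handled correctly.
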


Corollary \ref{cor:vary-X-to-Lie} is unfortunately 
not an immediate statement about the momentum constraint.
The momentum constraint arises from varying the spacetime Lagrangian
with respect to the shift, leaving $T$, $h$ and $N$ fixed.
But the variation appearing in Corollary \ref{cor:vary-X-to-Lie}
leaves $B$ fixed, and $T=\mathcal S^{-1}(B;X)$ 
changes as $X$ changes. In order to apply 
Corollary \ref{cor:vary-X-to-Lie} to the momentum constraint 
we restrict our attention to fields $T$ that are stationary
for $\mathcal L$.  That is, we assume that
if $T_s$ is a path of tensor
fields with $T_0 = T$ and 
$\delta T:= \left.\frac{d}{ds}\right|_{s=0} T_s$ compactly 
supported,
\[
\int_M \left.\frac{d}{ds}\right|_{s=0} \mathcal L(T_s,\partial T_s,g) = 0.
\]

\begin{theorem}\label{thm:form-of-J-tensor-edition}
Using the notation and hypotheses of Proposition \ref{prop:X-by-Lie-only},
suppose that $\mathcal L$ is stationary at a tensor field $T$.
Then for any compactly supported spatial vector field $\delta X$
on a slice $\Sigma_{t}$
\begin{equation}\label{eq:J-to-Pi-B-tensor-edition}
\int_{\Sigma_t} \mathcal{J} ( \delta X) \, dV_h = -\frac{1}{2} 
\int_{\Sigma_t} \,\Pi_B( \Lie_{\delta X}).
\end{equation}
\end{theorem}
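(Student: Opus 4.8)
The plan is to transcribe the EMCSF computation of Section \ref{sec:EMCSF} into the abstract setting of the decomposed field $B = \mathcal S(T;X)$. By the slice form of the fundamental identity \eqref{eq:def-J-first}, the momentum density is produced by differentiating the slice Lagrangian with respect to the shift while holding the spacetime field $T$ (and $h$, $N$) fixed: if $X_s$ is a family of shifts with $X_0 = X$ and $\frac{d}{ds}\big|_{s=0}X_s = \delta X$, then $\frac{d}{ds}\big|_{s=0}L_\Sigma = 2\,\momJ(\delta X)\,dV_h$. The crux, already visible in the passage from \eqref{eq:J-as-Pi-ideal} to \eqref{eq:EMCSF-J-true}, is that holding $T$ fixed is not the same as holding $B$ fixed: since $\mathcal S$ depends on the shift, varying $X$ at fixed $T$ drags $B$ along by $\delta_T B := \frac{\partial\mathcal S}{\partial X}(T;X)[\delta X]$, a pointwise-linear function of $\delta X$ supported, by construction of the decomposition, on the transverse components of $B$ (those built with $\partial_t - X$ or $\mathcal P_X$).

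First I would separate the physical variation into the part controlled by Corollary \ref{cor:vary-X-to-Lie} and a remainder. By the chain rule,
\[
2\,\momJ(\delta X)\,dV_h = \Big(\tfrac{\partial L_\Sigma}{\partial X}[\delta X] + \tfrac{\partial L_\Sigma}{\partial(\partial_x X)}[\partial_x\delta X]\Big) + \Big(\tfrac{\partial L_\Sigma}{\partial B}[\delta_T B] + \tfrac{\partial L_\Sigma}{\partial(\partial_x B)}[\partial_x\delta_T B] + \Pi_B(\partial_t\delta_T B)\Big),
\]
and Corollary \ref{cor:vary-X-to-Lie} identifies the first group as exactly $-\Pi_B(\Lie_{\delta X}B)$. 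Thus the theorem reduces to showing that the remaining correction $\mathcal C$ (the second group) integrates to zero over $\Sigma_t$.

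The mechanism for this vanishing is the stationarity hypothesis, used precisely as Gauss' law \eqref{eq:EMCSF-constraint} was used in the EMCSF example. I would integrate $\mathcal C$ over $\Sigma_t$ and integrate by parts in the spatial directions, which is permitted because $\delta X$, and hence $\delta_T B$, is compactly supported. The bulk terms then reorganize into the slice Euler-Lagrange operator applied to $\delta_T B$, which vanishes because $T$ is stationary for $\mathcal L$; what survives is a constraint contribution coming entirely from the transverse components of $B$. For a field whose transverse component has vanishing conjugate momentum (as with $\Pi_{A_\vdash} = 0$ in EMCSF), the surviving term is precisely the matter constraint produced by that Euler-Lagrange equation evaluated on $\delta_T B$, and it vanishes. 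Equivalently, one may mimic EMCSF directly: rewrite the transverse variables in the $X$-independent coordinate decomposition $\mathcal S(T;0)$ before differentiating, obtain the tensorial `true' form of $\momJ\,dV_h$ analogous to \eqref{eq:EMCSF-J-true}, and then reconcile it with $-\tfrac12\Pi_B(\Lie_{\delta X}B)$ under the integral using Cartan's formula (for form components) or its tensorial analogue together with the same constraint.

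The main obstacle is this last reconciliation: verifying that the only decomposed components disturbed by $\delta_T$ are transverse, and that the Euler-Lagrange equations for those components furnish exactly the constraint — the general analogue of \eqref{eq:EMCSF-constraint} — needed to annihilate $\int_{\Sigma_t}\mathcal C$. This is the step where the hypothesis that $T$ is stationary is indispensable and where the deliberate choice of the $(n+1)$ decomposition of Section \ref{sec:decomp} (built from $\partial_t - X$ and $\mathcal P_X$ rather than the full normal) does its work; without it one recovers only the corrected, non-tensorial identity analogous to \eqref{eq:EMCSF-J-true} and loses the clean form \eqref{eq:J-to-Pi-B-tensor-edition}.
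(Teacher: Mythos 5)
Your setup is right as far as it goes: the chain-rule split of the fixed-$T$ shift variation into a fixed-$B$ part (identified by Corollary \ref{cor:vary-X-to-Lie} with $-\Pi_B(\Lie_{\delta X}B)$) plus a correction driven by $\delta_T B = \partial_X \mathcal S(T;X)[\delta X]$ is exactly the decomposition underlying the paper's proof, and your use of equation \eqref{eq:def-J-first} is correct. The gap is in how you kill the correction. You assert that after spatial integration by parts on $\Sigma_t$ the correction becomes ``the slice Euler--Lagrange operator applied to $\delta_T B$, which vanishes because $T$ is stationary.'' But stationarity of the action yields the \emph{spacetime} Euler--Lagrange equations, which contain time derivatives of the conjugate momenta; integrating by parts only in the spatial directions on a fixed slice leaves behind, schematically, $(\partial_t \Pi_B)(\delta_T B)+\Pi_B(\partial_t\,\delta_T B)$ --- a total time derivative with no reason to integrate to zero on a single slice. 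Your fallback repairs this only under the extra hypothesis that every component of $B$ disturbed by $\delta_T$ has identically vanishing conjugate momentum, so that its Euler--Lagrange equation is a pure, time-derivative-free constraint usable slice-wise. That holds for EMCSF because $\Pi_{A_\vdash}\equiv 0$ is forced by gauge invariance, but it is neither assumed in the theorem nor true in general: for contravariant arguments the decomposition of Section \ref{sec:decomp} inserts $\mathcal P_X$, which itself depends on $X$, so genuinely dynamical spatial components carrying nontrivial momenta are also dragged by $\delta_T$. You flag this verification as ``the main obstacle,'' but it is not a verification that can succeed in general; it is where the approach fails.

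The paper's proof avoids any slice-wise use of the field equations. It integrates the chain-rule identity over all of $M$: there the correction is precisely $\int_M \frac{\partial\mathcal L}{\partial T}[\delta T]$ with $\delta T=\partial_X\mathcal S^{-1}(B;X)[\delta X]$ compactly supported, and this vanishes verbatim by the stationarity hypothesis, with no splitting of the Euler--Lagrange equations into constraint and evolution parts and no assumption about which momenta vanish. This yields the spacetime identity $-2\int_M\momJ(\delta X)\,dt\wedge dV_h=\int_M dt\wedge\Pi_B(\Lie_{\delta X}B)$, and the slice statement is then recovered by replacing $\delta X$ with $f(t)\,\delta X$, noting that $\Lie_{f(t)\delta X}B=f(t)\Lie_{\delta X}B$ because $\delta X$ is spatial, and concentrating $f$ near the chosen slice. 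This localization trick is the missing idea in your outline. The constraint bookkeeping you describe is genuinely needed when one wants a \emph{pointwise tensorial} formula for $\momJ\,dV_h$ --- that is the content of equation \eqref{eq:EMCSF-J-true} and of the model-specific computations in Sections \ref{secsec:proca} and \ref{secsec:EMCD} --- but it is not needed for, and cannot in general deliver, the integral identity \eqref{eq:J-to-Pi-B-tensor-edition} itself.
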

\begin{proof}
Let $X_s$ be a path of spatial vector fields on all of $M$
with $\left.\frac{d}{ds}\right|_{s=0} X_s = \delta X$
and let $B=\mathcal S(T;X)$.  To simplify notation we write
\begin{align*}
\frac{\partial L_{\Sigma}}{\partial X}[\delta X] &= 
\left.\frac{d}{ds}\right|_{s=0} L_\Sigma(B,\partial_x B, \partial_t B, h, N, X_s, \partial_x X_s)\\
\frac{\partial \mathcal L }{\partial X}[\delta X]
&= \left.\frac{d}{ds}\right|_{s=0} \mathcal L(T,\partial T, g(h,N,X_s))\\
\frac{\partial \mathcal S^{-1}(B;X) }{\partial X}[\delta X]
&= \left.\frac{d}{ds}\right|_{s=0} \mathcal S^{-1}(B;X_s).
\end{align*}
Note that $\mathcal S(B;X_s)$ determines a path $T_s$ of tensor
fields.
Given an arbitrary such path we set
$\delta T = \left.\frac{d}{ds}\right|_{s=0} T_s$
and write
\[
\frac{\partial \mathcal L }{\partial T}[\delta T]
= \left.\frac{d}{ds}\right|_{s=0} \mathcal L(T_s,\partial T_s, g).
\]

Equation \eqref{eq:J-to-Pi-B-tensor-edition} is established
by computing $\int_M dt\wedge \frac{\partial L_{\sigma}}{\partial X}$
two different ways.  First, pointwise
\begin{align*}
dt\wedge \frac{\partial L_{\sigma}}{\partial X}
&=\left.\frac{d}{ds}\right|_{s=0}
\mathcal L( S^{-1}(B;X_s),\partial S^{-1}(B;X_s) ,g(h,N,X_s))\\
&=\int_{M} \left[ \frac{ \partial \mathcal L}{\partial T}\left[\frac{ \partial S^{-1}(B;X)}{\partial X} [\delta X]\right]
+ \frac{ \partial \mathcal L}{\delta X}[\delta X] \right].
\end{align*}
Integrating over $M$ and using
equation \eqref{eq:def-J-first} and
 the fact that $\mathcal L$
is stationary at $T$ we conclude
\begin{align*}
\int_M dt\wedge \frac{\partial L_{\sigma}}{\partial X} &=
\int_M \left.\frac{d}{ds}\right|_{s=0}
\mathcal L( S^{-1}(B;X_s),\partial S^{-1}(B;X_s) ,g(h,N,X_s))\\
&= \int_M \frac{ \partial \mathcal L}{\delta X}[\delta X]\\
&=2\int_M \momJ(\delta X)dt\wedge dV_h.
\end{align*}
On the other hand, from Corollary \ref{cor:vary-X-to-Lie}
we have
\[
\int_{M} dt\wedge \frac{ \partial L_\Sigma}{\partial X}[\delta X]
=-\int_{M} dt\wedge \Pi_B(\Lie_{\delta X} B)
\]
and hence
\begin{equation}\label{eq:J-to-Pi-on-M}
-2\int_M \momJ(\delta X)dt\wedge dV_h=
\int_{M} dt\wedge \Pi_B(\Lie_{\delta X} B).
\end{equation}
This integration is on all of $M$, whereas equation
\eqref{eq:J-to-Pi-B-tensor-edition} involves integration on a single
slice.  Nevertheless, since $\delta X$ is spatial, 
for an arbitrary function $f(t)$, $\Lie_{f(t)\delta X} B=f(t)\Lie_{\delta X} B$, and equation \eqref{eq:J-to-Pi-B-tensor-edition}
therefore follows from equation \eqref{eq:J-to-Pi-on-M}
and a concentration argument.
\end{proof}

Except in rare circumstances, 
the integrand on the right-hand side of equation \eqref{eq:J-to-Pi-B-tensor-edition} is not a tensorial
expression in $\delta X$ and we cannot claim
$\momJ(\delta X)\, dV_h = -\frac12 \Pi_B( \Lie_{\delta X} B)$
pointwise.
Nevertheless, $\momJ\, dV_h$ is completely determined by
$\Pi_B$ and $B$.
Indeed, in local coordinates
one can write an expression for $\momJ dV_h$
in terms of $\Pi_B$ and $B$ using integration by parts
to remove derivatives from $\delta X$ that appear
in $\Lie_{\delta X} B$.  Although this operation
is done in coordinates, the result is a manifestly
natural operation, though it is not one that is familiar to us
in its full generality.  Moreover, the exact expression
appearing in the momentum constraint can involve
rewriting terms appearing after integration by parts
using constraints satisfied by the matter fields.
See Sections \ref{secsec:proca} and
\ref{secsec:EMCD} where we carry out this process 
explicitly for concrete examples.


\section{Electromagnetism-Charged-Scalar Field (EMCSF)}
\label{sec:EMCSF}
There are subtleties in the results of the previous section,
and it is helpful to illustrate them in the context of a 
concrete matter model: electromagnetism coupled to a charged scalar field (EMCSF).

Consider a trivial $\mathbb C$ bundle over $M$
with structure group $U(1)$ and let $\mathcal D$
be a connection on the bundle.  By parallel transport
we can choose a
common length scale on the fibers of the bundle and subsequently
select a global section $f$  over $M$ with $|f|=1$
on each fiber. 
The connection $\mathcal D$ can be represented
by means of a 1-form $\mathcal A$ via
\[
\mathcal D f = i \chargec \mathcal A f,
\]
where the constant $\chargec$ is the charge of the scalar field.

Once the choice of frame $f$ is made an arbitrary section 
$\mathcal S$ of the bundle can be represented by means of a section $z$
of the trivial bundle $\mathbb{C}\times M$:
\[
\mathcal S = z f.
\]
Under a global change of frame $\tilde f = e^{i\chargec \Xi} f$
we have the transformations
\begin{equation}\label{eq:EMCSF-gauge-trans}
\begin{aligned}
\tilde z &= e^{-i\chargec \Xi} z\\
\tilde {\mathcal A} &= {\mathcal A} + d\Xi.
\end{aligned}
\end{equation}

Let $g$ be a fixed (slice compatible) background Lorentzian metric 
on $M$ with $(n+1)$ decomposition $(h,N,X)$.
Leaving the background frame $f$ for $E$ implicit, the matter fields
for EMCSF consist of a 1-form $\mathcal A$
and a complex field $z$, and the EMCSF Lagrangian is
\begin{equation}\label{eq:Lag-EMCSF}
\mathcal L_{\rm EMCSF} = -\left[ \frac{1}{8\pi} |d\mathcal A|_g^2 + |dz+i\chargec z\mathcal A|_{\mathbb{C},g}^2\right]dV_g.
\end{equation}
The Lagrangian is easily seen to be diffeomorphism invariant, and is 
also invariant under 
gauge transformations
of the form \eqref{eq:EMCSF-gauge-trans}.  It depends on tensor-valued fields ($\mathcal A$ and $f$)
and hence the EMCSF model falls in the category of
matter fields, treated in Section \ref{sec:tensor-valued}.

\subsection{Illustration of Proposition \ref{prop:X-by-Lie-only}}
Proposition \ref{prop:X-by-Lie-only} concerns the structure
of the Lagrangian after the matter fields have
carefully split into spatial components.  Following
the procedure of Section \ref{sec:decomp}, the scalar field
$z$ is left untouched but we decompose
$\mathcal A$ as follows:
\begin{align}
A &= j_* \mathcal A \\
A_{\vdash} &= (\partial_t - X)\interior \mathcal A.
\end{align}
Note that $(\partial_t -X) = N \nu$ where $\nu$ is the slice
unit normal, but the use of $A_\vdash$ rather than,
e.g., $A_\perp:=\nu\interior \mathcal A$ or $A_0:=\partial_t \interior \mathcal A$, 
was and essential tool in the proof of 
Proposition \ref{prop:X-by-Lie-only}.

Proposition \ref{prop:X-by-Lie-only} asserts that when
the slice Lagrangian $L_\Sigma  = \partial_t \interior
\mathcal L_{\rm EMCSF}$ is written in terms of the spatial metric $h$,
the lapse $X$, the shift $N$, and the spatial variables $z$,
$A_\vdash$ and $A$,  then the shift $X$ and the
Lie time derivatives $\dot A = \partial_t A$, $\dot A_\vdash = \partial_t A_\vdash$
and $\dot z = \partial_t z$ appear
in the slice Lagrangian only in the following combinations:
\begin{itemize}
  \item $\dot A - \Lie_X A$,
  \item $\dot A_{\vdash} - \Lie_X A_{\vdash}$,
  \item $\dot z - \Lie_X z = \dot z - X(z)$.
\end{itemize}
Additionally, the lapse appears only algebraically in the slice
Lagrangian; its spatial and time derivatives are absent. 
We now show by direct computation that these assertions
indeed hold for the EMCSF model.



For convenience define the electric covector field
$E$ and its analog $P$ for the scalar field via
\begin{align*}
E &= -j_* ( \nu \interior d\mathcal A)\\
P &= \nu \interior (dz + i\chargec z\mathcal A).
\end{align*}
Using the notation $\mathbf d = j_* d$
and the identities $\partial_t = N\nu+X$
and
$\Lie_X A = \mathbf d(X \interior A) + X\interior \mathbf d A$,
we find
\begin{align}
E &= -\frac{1}{N}\left[ \dot A - X\interior \mathbf {d} A - \mathbf d(X\interior A) - \mathbf d A_\vdash \right]\nonumber\\
&= -\frac{1}{N}\left[ \dot A - \Lie_X A - \mathbf d A_{\vdash}\right]
\label{eq:E-via-A}
\end{align}
and
\begin{equation}\label{eq:P-def}
P = \frac{1}{N} \left[ \dot z - X(z) + i\chargec z A_\vdash\right].
\end{equation}
The slice Lagrangian then becomes
\begin{equation}\label{eq:EMCSF-LSigma}
L_{\Sigma} = \frac14 \left[ 2|E|_h^2 - |\mathbf d A|_h^2 + 2|P|_{\mathbb{C},h}^2 -2|\mathbf d z + i z A|_{\mathbb C,h}^2\right] N dV_h.
\end{equation}
Noting that $E$ and $P$ in equation \eqref{eq:EMCSF-LSigma}
are merely shorthand for
the full expressions \eqref{eq:E-via-A} and \eqref{eq:P-def}
we observe that the shift $X$ and the time derivatives
$\dot A$, $\dot A_\vdash$ and $\dot z$ indeed appear in the Lagrangian 
only 
in the combinations $\dot A - \Lie_X A$ and $
\dot z - X(z) = \dot z - \Lie_X z$.
The combination $\dot A_{\vdash}-\Lie_X A_{\vdash}$ would also
have been allowed by Proposition \ref{prop:X-by-Lie-only},
but the Lagrangian does not depend on $\dot A_{\vdash}$, a
reflection of the previously mentioned gauge freedom.  Finally we observe
that although the slice Lagrangian depends on derivatives of $X$
via the Lie derivatives, it depends only algebraically on $N$.  
Thus we have illustrated Proposition \ref{prop:X-by-Lie-only} in 
this special case.

\subsection{Illustration of Theorem \ref{thm:form-of-J-tensor-edition}}

Conjugate momenta arise from varying the slice
Lagrangian $L_\Sigma$ with respect to the time derivatives
of the field variables, whereas the momentum density arises
from a variation of the slice Lagrangian with respect 
to the shift via equation \eqref{eq:def-J-first}.  
The structural form of the slice Lagrangian implied
by Proposition \ref{prop:X-by-Lie-only}, wherein
time derivatives of field variables appear only in
certain combinations with the shift, is the key
tool needed to relate conjugate momenta to
the momentum density.  This specific relationship is the content
of Theorem \ref{thm:form-of-J-tensor-edition},
which asserts for the EMCSF Lagrangian 
that the integral relationship
\begin{equation}\label{eq:EMCSF-JdV}
-2 \int_{\Sigma_t} \momJ(\delta X) dV_h 
= \int_{\Sigma_t} \left[\Pi_{A}(\Lie_{\delta X}A)+ 
\Pi_{A_\vdash}(\Lie_{\delta X}A_{\vdash}) +
\Pi_{z}(\Lie_{\delta X} z )\right]
\end{equation}
holds for any compactly supported vector field 
$\delta X$ on a slice $\Sigma_t$.   In particular, 
if the conjugate momenta $\Pi_A$, $\Pi_{A_\vdash}$ and $\Pi_z$
are known, then so is $\momJ dV_h$, a manifestation of 
Meta-Theorem \ref{mt:meta-theorem}.

There is an important caveat, however.
Theorem \ref{thm:form-of-J-tensor-edition} assumes that
we are working with fields for which the Lagrangian
is stationary. In particular, any constraints on the
spatial variables implied by the Euler-Lagrange equations
must hold. For the EMCSF fields, gauge freedom leads to the
constraint
\begin{equation}\label{eq:EMCSF-constraint}
\int_{\Sigma_t} \left[-\Pi_{A}(\mathbf d \theta) + \Pi_z( i\chargec z\theta)\right] = 0
\end{equation}
which is essentially Gauss' Law and, as seen below, 
is necessary to establish equation \eqref{eq:EMCSF-JdV}.



We now illustrate Theorem \ref{thm:form-of-J-tensor-edition}
by deriving equation \eqref{eq:EMCSF-JdV} directly.
Recall that the momenta $\Pi_A$, $\Pi_{A_\vdash}$ and
$\Pi_z$ are obtained by varying 
the slice Lagrangian $L_{\Sigma}$ with respect to
$\dot A$, $\dot A_\vdash$ and $\dot z$ respectively.
We find
\begin{align}
\Pi_{A} &= -\frac{1}{2\pi}\left<E,\mathbf d\, \cdot\, \right>_{h} dV_h\label{eq:Pi-A}\\
\Pi_{A_\vdash} &= 0\label{eq:Pi_A_perp}\\
\Pi_{z} &= 2\Re(\overline P\, \cdot \,)  dV_h.\label{eq:Pi-z}
\end{align}
Equation \eqref{eq:def-J-first} implies that $\momJ dV_h$
can be computed by varying $\mathcal L_{\mathrm{EMCSF}}$ with respect
to the shift, and from it we find 
\[
\frac{\partial L_\Sigma}{\partial X}[\delta X] = 2 \momJ(\delta X) dV_h.
\]
Naively, one would like to simply substitute
expressions \eqref{eq:E-via-A} and \eqref{eq:P-def}
for $E$ and $P$ into equation \eqref{eq:EMCSF-LSigma}
and take a derivative with respect to $X$.  Doing so 
we would find that for any compactly supported
spatial vector field $\delta X$
\begin{equation}\label{eq:J-as-Pi-ideal}
-2\momJ(\delta X) dV_h = \Pi_A(\Lie_{\delta X} A) 
+ \Pi_z(\Lie_{\delta X} z)
+\Pi_{A_{\vdash}}(\Lie_{\delta X} A_\vdash),
\end{equation}
which is almost equation \eqref{eq:EMCSF-JdV}.
This cannot be correct, however, because $\Lie_{\delta X} A$ is not
tensorial in $\delta X$.  

To correct this error, and see why
\eqref{eq:EMCSF-JdV} only holds under an integral sign, and only 
under the additional constraint \eqref{eq:EMCSF-constraint}, 
recall that the variation in equation \eqref{eq:def-J-first}
leaves the matter fields $\mathcal A$ and $z$ fixed. 
But we have written $E$ and $P$ in terms of the variable
$A_\vdash = (\partial_t-X)\interior \mathcal A$. 
Although this choice 
is important for obtaining Proposition \ref{prop:X-by-Lie-only},
and for deriving equation \eqref{eq:EMCSF-LSigma}, 
it is not useful to continue using it
here because fixing
$A_\vdash$ while varying $X$ implies that $\mathcal A$
changes.  

For the purpose of computing the variation with respect to $X$,
we introduce $A_0 = \partial_t \interior \mathcal A$,
a variable independent of $X$.  The field $\mathcal A$ 
is determined by $A$, $A_0$ and $\partial_t$  alone.
Since
\[
A_\vdash = A_0 - X\interior \mathcal A = A_0 - X\interior A
\]
and we can rewrite
\begin{align*}
E &= -\frac{1}{N}\left[ \dot A - X\interior \mathbf d A - \mathbf d A_0\right]\\
P &= \frac1N\left[\dot z -X(z) -i\chargec z(X\interior A)+i\chargec z A_0\right].
\end{align*}
The essential point is that this form of $E$ and $P$ is written so that
even if $X$ is changed, the field $\mathcal A$ does not.

We can now vary the Lagrangian \eqref{eq:EMCSF-LSigma} with respect to $X$
while holding $A$ and $A_0$ (and therefore also $\mathcal A$) fixed.
A computation using equation \eqref{eq:def-J-first} 
and equations \eqref{eq:Pi-A}--\eqref{eq:Pi-z}
then implies
\begin{equation}\label{eq:EMCSF-J-true}
-2\momJ(\delta X) dV_h = \Pi_A( \delta X\interior \mathbf d A) + \Pi_z\left( \delta X(z)   + i\chargec z(\delta X\interior A)\right),
\end{equation}
which is the true form of the momentum density.  
It is indeed tensorial in $\delta X$, but 
the clean structure of equation
\eqref{eq:J-as-Pi-ideal} has been lost.  To 
recover it in the integral sense we need to take 
into account a constraint arising from 
equation \eqref{eq:Pi_A_perp}.  

Because the momentum $\Pi_{A_{\vdash}}$ vanishes identically,
a solution of the EMCSF Euler-Lagrange
equations necessarily satisfies on each slice
\[
\int_{\Sigma_t} \frac{\partial L_{\Sigma}}{\partial A_\vdash}[\delta A_\vdash] = 0
\]
for compactly supported functions $\delta A_\vdash$.
As a consequence, a computation leads to
the constraint \eqref{eq:EMCSF-constraint} 
for any compactly 
supported function $\theta$ on a slice $\Sigma_t$.
Integrating equation \eqref{eq:EMCSF-J-true} and applying
the constraint \eqref{eq:EMCSF-constraint} with 
$\theta = \delta X\interior A$, and also
using equation \eqref{eq:Pi_A_perp}, we find
\begin{align}
-2\int_{\Sigma_t} \momJ(\delta X)dV_h
&= \int_{\Sigma_t} \Pi_A(\delta X\interior  \mathbf dA) + \Pi_z(\delta X(z))
+\Pi_z( i\chargec z(\delta X\interior A))\nonumber\\
&=
\int_{\Sigma_t} \Pi_A(\delta X \interior \mathbf dA) + \Pi_z(\delta X(z))
+\Pi_A( \mathbf d( \delta X\interior A) )\nonumber\\
&=\int_{\Sigma_t} \Pi_A(\Lie_{\delta X} A) + \Pi_z(\Lie_{\delta X} z)\nonumber\\
&=\int_{\Sigma_t} \Pi_A(\Lie_{\delta X} A) + \Pi_z(\Lie_{\delta X} z)
+ \Pi_{A_\vdash}(\Lie_{\delta X} A_{\vdash})\label{eq:J-as-Pi-integral}.
\end{align}
Thus we have derived equation \eqref{eq:EMCSF-JdV}.  The
important point is that equation \eqref{eq:J-as-Pi-ideal} only 
holds under an integral sign, and only if the matter fields
satisfy any constraints imposed by the matter Euler-Lagrange equations.

Concerning the constraint \eqref{eq:EMCSF-constraint}, 
Section \ref{secsec:EMCSF-constraint} below outlines
a systematic approach to constructing its solutions.
We show additionally in Section \ref{secsec:ConfMethod}
how core features of the conformal method of solving
the gravitational field constraints are
direct analogs of the EMCSF constraint solving procedure.

\subsection{Field Values and Conformal Changes}
As mentioned at the end of the introduction, 
in applications it may be more convenient to describe
matter in terms of variables different from the ones
used thus far in this section.   For example, 
\[
\Pi_A = -\frac{1}{2\pi}\ip<E,\cdot>_h dV_h = - \frac{1}{2\pi}E^\sharp\tensor dV_h
\]
and hence the electric field expressed either as a covector ($E$)
or a vector ($E^\sharp$) encodes $\Pi_A$, so long as the metric
is also understood.
If $\Pi_A$ remains fixed but $h\mapsto h^* = \phi^{q-2} h$,
we obtain the conformal transformation laws
\begin{align*}
E^* &=  \phi^{-2} E\\ 
(E^*)^\sharp &= \phi^{-q} E^\sharp.
\end{align*}
Similarly, $P$ encodes $\Pi_z$ with the conformal transformation
rule $P^* = \phi^{-q} P$.  On a slice $\Sigma_t$, the 
constraint \eqref{eq:EMCSF-constraint} can be expressed
in metric terms as
\begin{equation}\label{eq:constraint-EMCSF-EP}
\div_h E = -4\pi\chargec \Im(\overline P z),
\end{equation}
where we are using the same notation for spatial fields on $M$
and their their restrictions to $\Sigma_t$.
The fact that the constraint \eqref{eq:EMCSF-constraint}
does not actually involve the metric implies that this
equation must be conformally invariant.  Indeed, 
the invariance of equation \eqref{eq:constraint-EMCSF-EP}
follows from the conformal transformation laws
for $E$ and $P$ along with the rule
$\div_{h^*} \phi^{-2} = \phi^{-q} \div_h$
when acting on covector fields.

\section{The Conformal Method and Phase Space}
\label{sec:constraint-solving}

Theorems \ref{thm:form-of-J-manifold-edition}
and \ref{thm:form-of-J-tensor-edition} concern the
Einstein equations generally and are
not statements about the conformal method. 
As far as the conformal method is concerned, 
we use the following consequence of
these theorems:
when momentum density is written
as a function of the slice metric $h$ along with certain suitably 
decomposed matter fields
$B$ and conjugate momenta $\Pi_B$, then
\begin{equation}\label{eq:J-transform}
\momJ(\phi^{q-2}h,B,\Pi_B) = \phi^{-q} \momJ(h,B,\Pi_B).
\end{equation}
At this point, the hurried reader could 
continue to the applications of Section \ref{sec:apps}
after scanning the equations of the conformal method in its Lagrangian and
Hamiltonian forms
(equations \eqref{eq:Lich-Lag}--\eqref{eq:LCBY-mom-Lag}
and \eqref{eq:Lich-Ham}--\eqref{eq:LCBY-mom-Ham} below)
to verify that, as outlined in the introduction, 
decoupling of the momentum constraint occurs
under the the transformation rule \eqref{eq:J-transform}.

The larger picture of our work, however, concerns the very close
association of the conformal method with the Hamiltonian 
formulation of the evolution problem: initial data for
the matter variables are profitably specified by fields 
and their conjugate momenta,
and (at least for CMC data) an analogous statement is true for
the gravitational variables.  Moreover, the conformal method
itself is a rather direct analog of a natural technique for 
parameterizing solutions of the EMCSF constraint \eqref{eq:EMCSF-constraint} from Section \ref{sec:EMCSF}. 
We therefore outline the EMCSF 
construction in Section \ref{secsec:EMCSF-constraint} 
as a means of motivating the conformal
method, which we summarize in Section \ref{secsec:ConfMethod}.
Section \ref{secsec:CMC} discusses the CMC specialization
of the conformal method, and flags some important 
differences between the vacuum and non-vacuum settings.

\subsection{Solution of the EMCSF constraint}
\label{secsec:EMCSF-constraint}

Recall that the EMCSF field variables are a 1-form $\mathcal A$
and a complex function $z$.
If $(\mathcal A, z)$ is a solution of the Euler-Lagrange
equations for the Lagrangian \eqref{eq:Lag-EMCSF},
then so is $(\mathcal A + d\Xi, e^{-i\chargec\Xi}z)$
for any function $\Xi$; 
it represents the same physical solution but with respect to a 
different frame. If we consider a gauge transformation
with $\Xi\equiv 0$ at some time $t_0$ then 
we find that the spatial variables
$(A,A_\vdash,z)$ corresponding to the spacetime
fields $(\mathcal A,z)$ at $t=t_0$
are unaffected but their
velocities transform as
\begin{align*}
\dot A &\mapsto \dot A + \mathbf d \dot \Xi\\
{\dot A}_\vdash & \mapsto {\dot A}_\vdash + \dot \Xi\\
\dot z &\mapsto \dot z -i\chargec \dot\Xi z.
\end{align*}
Thus, because of gauge freedom,
the state of the physical system is
invariant under velocity transformations
\begin{equation}\label{eq:EMCSF-gauge-v}
(\dot A,{\dot A}_\vdash,\dot z) \mapsto 
(\dot A,{\dot A}_\vdash,\dot z) + (\mathbf d \theta, \theta, -i\chargec z\theta)
\end{equation}
with the arbitrary function $\theta$ playing
the role of $\dot\Xi$.
Field velocities are meaningful only modulo such transformations.
Because $\Pi_{ {A}_\vdash}$ always vanishes for this 
system, the constraint \eqref{eq:EMCSF-constraint} 
at some $t$ can be written
\begin{equation}\label{eq:EMCSF-constraint-full}
\int_{\Sigma_t}\Pi_A( \mathbf d \theta) + \Pi_{A_\vdash}(\theta) +\Pi_z(-i\chargec z\theta) = 0
\end{equation}
and consequently the total system momentum is insensitive to velocity
perturbations of the form \eqref{eq:EMCSF-gauge-v}. 

One approach to building solutions of the constraint is
to specify field values $(A,A_{\vdash},z)$ along with
the system velocity.  Specifically, we provide 
$(\dot A, \dot A_{\vdash},\dot z)$ and the system 
velocity is the equivalence class
\[
(\dot A, \dot A_{\vdash},z)+(\mathbf d\theta,\theta, -i\chargec z\theta)
\]
where $\theta$ is an arbitrary function. Recall that velocity and momentum
are related to each other in classical mechanics by the Legendre
transformation of the system \cite{goldstein_classical_2002}, and the constraint 
\eqref{eq:EMCSF-constraint-full} is really a constraint on the momentum.
Thus the strategy is to write down the momentum in terms
of a velocity with the arbitrary function $\theta$ built into the expression, and then use the constraint to determine 
the choice of $\theta$. The end result is a
system momentum that satisfies the constraint and that is related, via the Legendre transformation, to the specified velocity.

The Legendre transformation of the EMCSF system
is effectively given by equations \eqref{eq:E-via-A}
and \eqref{eq:P-def}.  Indeed, assuming the metric data $(h,N,X)$
is known, then $\dot A$  and $\dot A_{\vdash}$ determine
$E$ and $P$ according to these equations, and $E$ and $P$
determine $\Pi_A$, $\Pi_z$ via equations \eqref{eq:Pi-A} and
\eqref{eq:Pi-z}. Gauge freedom manifests itself in
a degeneracy in the Legendre transformation, and
$\Pi_{A_\vdash}$ is always $0$ and is unaffected
by changes in $\dot A_\vdash$.

Rewriting equations  \eqref{eq:E-via-A}
and \eqref{eq:P-def} and inserting the arbitrary function $\theta$
we find\begin{align*}
\dot A + \mathbf d\theta &= -N E + \Lie_X A + \mathbf d A_\vdash \\
\dot z -i\chargec z\theta &= N P + \Lie_X z -i\chargec z A_\vdash
\end{align*}
Then, introducing 
$\Theta=A_\vdash - \theta$, constraint \eqref{eq:EMCSF-constraint}
in the form of \eqref{eq:constraint-EMCSF-EP} can be written
on $\Sigma_{t}$ as
\begin{equation}\label{eq:EMCSF-constraint-Lag}
-\div_h\left(\frac1N  d \Theta\right) + 4\pi\chargec^2\frac{|z|^2}{N}\Theta
= -\div_h\left(\frac1N (\dot A -\Lie_X A)\right) + \frac{4\pi\chargec}{N} \Im\left(
(\overline {\dot z - \Lie_X z}) z\right).
\end{equation}
As we have done in arriving at \eqref{eq:constraint-EMCSF-EP},
we are using the same notation for spatial tensor fields on $M$
and for their restrictions to $\Sigma_{t}$, and we note
that the `spatial' exterior derivative $\mathbf d$ on $M$ has 
been replaced with the exterior derivative $d$ intrinsic to $\Sigma_{t}$.
Equation \eqref{eq:EMCSF-constraint-Lag} is an elliptic
PDE for $\Theta$ with a favorable sign on the potential 
term $4\pi\chargec^2|z|^2/N$. When supplemented with suitable boundary conditions,
one can find a unique solution $\Theta$ of equation \eqref{eq:EMCSF-constraint-Lag}.  The
remaining momentum  $\Pi_{A_\vdash}$ always vanishes,
and we are therefore able to construct a full set of momenta
from field values $(A,A_{\vdash},z)$ and velocities
$(\dot A,\dot A_{\vdash},\dot z)$ along with the metric data
$(h,N,X)$.  This strategy for solving the EMCSF constraint is
analogous to the Lagrangian formulation of the conformal method
found in the following section.

An alternative, closely related, parameterization involves
starting with momenta $(\tilde \Pi_A,\tilde \Pi_z)$ that do
not satisfy the constraint, but such that $\tilde \Pi_A$
satisfies the vacuum constraint
\[
\int_{\Sigma_{t}} \tilde\Pi_A( d\theta) = 0
\]
for all functions $\theta$ on $\Sigma_{t}$.
We then seek momenta $(\Pi_A,\Pi_z)$ that solve the
constraint and which have a corresponding system velocity (up
to gauge) that is the same as what would be determined by
$(\tilde \Pi_A,\tilde \Pi_z)$ in the vacuum case via the inverse
Legendre transformation.
An argument similar to the one of the previous paragraph
shows that this reduces to seeking 
$(E,P)$ of the form
\begin{align*}
E &= \tilde E -\frac{1}{N} d\theta\\
P &= \tilde P -\frac{1}{N} i\chargec z\theta
\end{align*}
where $(\tilde E, \tilde P)$ are the metric representations
of $(\tilde \Pi_A,\tilde \Pi_z)$, and so $\div_h \tilde E= 0$.
The constraint \eqref{eq:EMCSF-constraint} is satisfied
when $\theta$ solves the PDE
\begin{equation}\label{eq:EMCSF-constraint-Ham}
-\div_h\left(\frac1N  d\theta\right) + 4\pi\chargec^2\frac{|z|^2}{N}\theta
= -4\pi\chargec \Im \left(\overline{\tilde P} z\right).
\end{equation}
This variation of parameterizing solutions of
the EMCSF constraint \eqref{eq:EMCSF-constraint}
corresponds to the Hamiltonian formulation of the
conformal method.

\subsection{Conformal Method}\label{secsec:ConfMethod}

Although we have until this point 
decomposed the spacetime
metric $g$ into spatial variables $(h,N,X)$, 
the conformal method is based on
an alternative,
closely related, 
decomposition $(h,\alpha,X)$ where the spatial
metric $h$ and shift $X$ have the same meaning
as before but where $\alpha$ and the lapse
$N$ are related by
\[
\alpha = dV_h /N.
\]
We call $\alpha$ the \textbf{slice energy density} because
a particle with unit mass and with velocity equal to
the unit surface normal $ \nu = (\partial_t-X)/N$
has energy $1/N$ in the background coordinate system
$(x^i,t)$.
If the metric $h$ conformally transforms to $\phi^{q-2} h$,
and if $\alpha$ remains fixed, then the conformal
transformation law for volume $dV_h \mapsto \phi^q dV_h$ implies
the lapse transforms via $N\mapsto \phi^{q} N$; this
is called a \textbf{densitized lapse}.  We find that there
are conceptual
advantages to working with the conformally
invariant slice energy density $\alpha$ rather
than the more commonly used densitized lapse,
but these objects are equivalent to each other.

Although the reasons for the importance of 
slice energy density remain unclear, a hint can be found in 
 \cite{maxwell_initial_2021} where it is shown
 that the use
of slice energy density as a variable instead of the 
usual lapse
allows for a clean separation
between conformal and volume components
of the kinetic energy part of the 
the ADM gravitational Lagrangian.  This separation
affects the associated Legendre transformation 
in a way that facilitates specifying conformal class 
data (values and momenta) 
independently from volume information.  In the conformal
method, the metric $h$ is then decomposed into two variables,
the conformal class $\mathbf h = [h]$ and the volume form $dV_h$,
with the conformal class becoming the primary gravitational variable,
and the volume form determined by the constraint equations.

In the Lagrangian formulation of the conformal method,
analogous to the EMCSF construction \eqref{eq:EMCSF-constraint-Lag},
we specify the following data:
\begin{itemize}
  \item A conformal class $\mathbf h$
  and a conformal class velocity $\dot {\mathbf h}$ analogous
  to EMCSF data $A$ and $\dot A$,
  \item gauge data $\alpha$ and $X$ analogous to $A_\vdash$,
  \item a mean curvature function $\tau$ with no direct EMCSF analog,
  \item matter field information comparable
  to $z$ and $\Pi_z$ in the EMCSF construction.
\end{itemize}
The anomalous role of the mean curvature in this collection
stands out, and the absence of a classical mechanical interpretation 
is perhaps related to the deficiencies of the
conformal method in the non-CMC setting.  Notably, however,
when $\tau$ is constant it encodes a momentum conjugate
to volume \cite{maxwell_initial_2021}.  

The gauge group for the Einstein gravitational field equations
 consists of spatial diffeomorphisms rather
than EMCSF frame transformations, but the strategy remains the same.
We specify the conformal class velocity $\dot {\mathbf h}$
only up to gauge.  The constraint equations
impose a condition on the conjugate momentum of the conformal class,
and the Legendre transformation of the system connects the
conformal class velocity to the conformal class momentum. The
gauge freedom in the conformal class velocity is then used
to find a conformal class momentum that is related, via the Legendre
transformation, to the conformal class velocity and that
satisfies the constraint.

To make this procedure concrete, and to derive a PDE system
that one
can actually analyze, in practice one specifies the following
seed data:
\begin{itemize}
  \item A metric $h$ determining the conformal class $\mathbf h=[h]$.
  \item A trace-free $(0,2)$ tensor $U$ that, together with $h$,
  encodes the conformal class velocity $\dot {\mathbf h}$ 
  as explained below.
  \item Gauge data $\alpha$ and $X$, from which we define $N = dV_h/\alpha$.
  \item A mean curvature function $\tau$ 
  and matter field information 
as specified for various example cases below.
\end{itemize}
Following \cite{maxwell_conformal_2014}, the conformal class 
velocity is an equivalence class of pairs $(\phi^{q-2}h, \phi^{q-2}U)$
where $\phi$ is an arbitrary conformal factor; the trace-free
condition on $U$ arises because a path of metrics with constant
volume form has a trace-free derivative, and the conformal scaling
on $U$ arises naturally as a consequence of scaling the metric.  
One can think
of $(h,U,X,N,\tau)$ as the gravitational seed data, which specify
the same conformally invariant information as 
$(\phi^{q-2}h, \phi^{q-2}U,\phi^q N, X, \tau)$,
namely
\[
(\mathbf h = [h],\hskip 1ex \dot {\mathbf h} =  [(h,U)],\hskip 1ex \alpha = (1/N)dV_h,\hskip 1ex X,\hskip 1ex \tau).
\]
Diffeomorphism gauge freedom manifests itself in that
rather than specifying a conformal class velocity $[(h,U)]$,
we specify $[(h,U+\ck_h W)]$ where $W$ is an arbitrary vector field
and where $\ck_h W$ is the trace-free part of $\Lie_W h$, namely
\begin{equation}\label{eq:conformal-killing}
(\ck_h W)_{ab}  = \nabla^{(h)}_a W_b + \nabla^{(h)}_b W_a -\frac{2}{n}(\div_h W) h_{ab}.
\end{equation}

We seek a solution $(h^*,K^*)$ of the constraint equations
where $[h^*]=[h]$ and where the conformal class velocity
of the initial data 
equals $[(h,U+\ck_h W)]$ for some vector field $W$ to be determined.
Recall $K^*$ encodes the
normal derivative of any extension of $h^*$ off of the 
initial slice via
\begin{equation}\label{eq:K-def}
2K^* = \frac{1}{N^*}(\partial_t h^*-\Lie_X h^*)
\end{equation}
where $N^* = dV_{h^*}/\alpha$ is the lapse determined by
$h^*$ and $\alpha$. This is essentially the Legendre transformation
at the level of the metric, and its trace-free part encodes
the Legendre transformation at the level of conformal classes.
Let $U^*$ be the trace-free part of $\partial_t h^*$
and let $\xi^*$  be the trace-free part of 
$K^*$. The conformal class velocity of the initial data is 
$[(h^*,U^*)]$ and equation \eqref{eq:K-def} implies
\[
2\xi^* = \frac{1}{N^*}(U^*-\ck_{h^*} W).
\]
By ansatz, $[h^*]=[h]$ and $[(h^*,U^*)]=[(h,U+\ck_h W)]$
and therefore $h^* = \phi^{q-2} h$ and $U^* = \phi^{q-2} (U + \ck_h W)$
for some conformal factor $\phi$. 
Using the relations $N^*=\phi^q N$
and $\ck_{h^*} = \phi^{q-2}\ck_h$ we find
\begin{equation}\label{eq:xi-def}
\xi^* = \frac{1}{2N^*}(U^* -\ck_{h^*} X)
= \phi^{-2} \frac{1}{2N} (U -\ck_{h} X + \ck_h W).
\end{equation}
Inserting $h^*=\phi^{q-2} h$ and $K^* = \xi^* +(\tau/n)h^*$
into the constraint 
equations \eqref{eq:ham-constraint}--\eqref{eq:mom-constraint}, and using equation
\eqref{eq:xi-def} to rewrite all quantities in terms of 
$(h,U,N,X,\tau)$ along with the unknowns $\phi$ and $W$, yields
\begin{align}
-2\kappa q \Delta_h \phi + R_h \phi -\frac{1}{4N^2}|U+\ck_h Z|_h^2\phi^{-q-1} + \kappa \tau^2 \phi^{q-1} &= 16\pi \phi^{q-1}{\hamE}^*
\label{eq:Lich-Lag}\\
\div_h\left(\frac{1}{2N}\ck_h Z\right) &= -\div_h\left(\frac{1}{2N}U\right) + 
\phi^q \kappa d\tau - 8\pi \phi^q{\momJ}^*,\label{eq:LCBY-mom-Lag}
\end{align}
where we have introduced $Z=W-X$ as a new unknown, and
where the matter terms ${\hamE}^*$ and ${\momJ}^*$ are functions of the matter seed variables in addition to, conceivably,
$(h,N,X,\phi,U,Z,\tau)$ and their derivatives.
Equation \eqref{eq:LCBY-mom-Lag} corresponds to equation
\eqref{eq:EMCSF-constraint-Lag} with $Z$ in equation \eqref{eq:LCBY-mom-Lag} to $\Theta$ in equation \eqref{eq:EMCSF-constraint-Lag}.
This presentation of the conformal method is often referred to as the
conformal thin-sandwich method \cite{york_conformal_1999}, but
it is identical, in a way made precise in \cite{maxwell_conformal_2014}, to the historical conformal method developed in the 1970s.

As an alternative to specifying a conformal class velocity $[(h,U)]$, 
one can provide a vacuum conformal class momentum (modulo spatial
diffeomorphisms) and request that the solution of the constraints
has the same conformal class velocity as would be determined by
the vacuum momentum.  This is 
the Hamiltonian formulation of the conformal method from
\cite{pfeiffer_extrinsic_2003} and is an analog
of the technique that leads to equation \eqref{eq:EMCSF-constraint-Ham}
for the EMCSF system. Vacuum conformal class momentum 
can be expressed, with respect to the conformal class 
representative $h$,
by
a symmetric, trace-free, divergence-free (0,2) tensor $\sigma$;
see, e.g., \cite{maxwell_conformal_2014}.
The variable $\sigma$ is analogous to $\tilde E$ in the
technique leading to equation \eqref{eq:EMCSF-constraint-Ham}
and  we seek a second-fundamental form 
\begin{align*}
K^* &= \phi^{-2}\sigma + \frac{1}{2 N^*} \ck_{h^*} W + \frac{\tau}{n}h^*\\
&= \phi^{-2}(\sigma + \frac{1}{2N}\ck_{h}W ) + \phi^{q-2}\frac{\tau}{n}h
\end{align*}
where the vector field $W$ is to be determined. 
The LCBY equations become
\begin{align}
-2\kappa q \Delta_h \phi + R_h \phi -\left|\sigma+\frac{1}{2N}\ck_h W\right|_h^2\phi^{-q-1} + \kappa \tau^2 \phi^{q-1} &= 16\pi \phi^{q-1}\hamE^*
\label{eq:Lich-Ham}
\\
\div_h\left(\frac{1}{2N}\ck_h W\right) &=  
\phi^q \kappa d\tau - 8\pi \phi^q\momJ^*.
\label{eq:LCBY-mom-Ham}
\end{align}
and in this (Hamiltonian) formulation, the momentum constraint 
\eqref{eq:LCBY-mom-Ham}
is an analog of equation \eqref{eq:EMCSF-constraint-Ham}.  

In the same way that $A_\vdash$ does not appear
in either of equations \eqref{eq:EMCSF-constraint-Lag}
or \eqref{eq:EMCSF-constraint-Ham}, the shift $X$
is no longer present in either formulation
\eqref{eq:Lich-Lag}-\eqref{eq:LCBY-mom-Lag} or
\eqref{eq:Lich-Ham}-\eqref{eq:LCBY-mom-Ham}.
It can be freely traded with $W$ in the Lagrangian
formulation and only the difference
$Z=X-W$ matters.  In the Hamiltonian formulation the
shift plays no role in the first place.
 So although 
we have formally included it as seed data to help describe the 
conformal method, it is not truly a parameter and can
be ignored.  By contrast, the role of the slice energy
density $\alpha$ is essential.  It appears in the LCBY equations implicity
via $N= dV_h/\alpha$ and changing $\alpha$ affects the solution
in the same way that changing $N$ in the EMCSF equations
\eqref{eq:EMCSF-constraint-Lag}
and \eqref{eq:EMCSF-constraint-Ham} affects the solution.
Ultimately, both formulations of the conformal method specify
a conformal class velocity (explicitly in the Lagrangian case
and implicitly in the Hamiltonian case), but the Legendre 
transformation linking the
conformal class momentum and then conformal class velocity
requires additional metric
information encoded in $\alpha$.  Without this additional
gauge information the problem posed
by the conformal method is not well specified.

\subsection{CMC Specialization}\label{secsec:CMC}

We are mostly interested in the constant
mean curvature (CMC) case where $\tau$
is constant and where the conformal
method has been maximally successful.  In this case
the LCBY momentum constraint \eqref{eq:LCBY-mom-Ham} in the Hamiltonian formulation 
becomes
\begin{equation}
\div_h\left(\frac{1}{2N}\ck_h W\right) =  
- 8\pi \phi^q  {\momJ}^*.
\end{equation}
Under the scaling law $\momJ^* = \phi^{-q}\momJ$
discussed at the start of Section \ref{sec:constraint-solving}
(and implied by Meta-Theorem \ref{mt:meta-theorem}), this
equation becomes
\begin{equation}\label{eq:LCBY-mom-Ham-2}
\div_h\left(\frac{1}{2N}\ck_h W\right) =  
- 8\pi {\momJ},
\end{equation}
which can potentially be solved for $W$ independent of $\phi$, in which
case analysis of the system reduces to the Lichnerowicz 
equation \eqref{eq:Lich-Ham} alone.  In the CMC Lagragian formulation,
again assuming the same scaling law for $\momJ$,
the momentum constraint is
\begin{equation}\label{eq:LCBY-mom-Lag-2}
\div_h\left(\frac{1}{2N}\ck_h Z\right) = -\div_h\left(\frac{1}{2N}U\right)  - 8\pi {\momJ}
\end{equation}
which is similarly independent of $\phi$.

There are two important differences between the 
vacuum and the non-vacuum initial data in the CMC conformal
method that we wish to highlight here.  In vacuum,
\begin{itemize}
  \item the presence of conformal Killing fields (nontrivial solutions of $\ck_h W =0$) plays no role, and
  \item the choice of slice energy density $\alpha$
  plays no role in the Hamiltonian formulation.
\end{itemize}
Both of these statements are false in the non-vacuum setting, a
fact that is easy to overlook for practitioners accustomed to
working with vacuum data sets.

Concerning conformal Killing fields, consider the vacuum CMC
Lagrangian LCBY momentum constraint
\begin{equation}\label{eq:lcby-mom-2}
\div_h\left(\frac{1}{2N}\ck_h Z\right) =  -\div_h\left(\frac{1}{2N}U\right)
\end{equation}
and for simplicity, consider the case of a compact manifold.
The left-hand side of equation \eqref{eq:lcby-mom-2}
is formally
self-adjoint with its kernel consisting of conformal Killing fields
of the metric $h$, should these exist. Integration by parts shows that 
the right-hand side is $L^2$ orthogonal
to any conformal Killing fields
and hence there exists solutions to equation \eqref{eq:lcby-mom-2}
whether or not conformal Killing fields exist for $h$. 
One readily verifies that multiple solutions to equation \eqref{eq:lcby-mom-2} exist if and only if $h$ admits
conformal Killing fields, since a pair of solutions
$Z'$ and $Z$ differ by a conformal Killing field. 
Since a solution $Z$ only appears in the LCBY Hamiltonian
constraint in the form $\ck_h Z = \ck_h Z'$, the potential multiplicity
of solutions is irrelevant.  On the other hand, for the full
equation \eqref{eq:LCBY-mom-Lag-2} we see that conformal Killing fields
pose a genuine obstruction to solution. The equation is
solvable if and only if $\momJ(h,B,\Pi_B)$
is $L^2$ orthogonal to any conformal Killing fields, and this poses
an additional constraint on the matter fields that needs 
to be satisfied in advance.

Concerning the role of the slice energy density $\alpha$ in
the vacuum CMC Hamiltonian setting (again, on a compact
manifold for simplicity), observe that the
solutions $W$ of equation \eqref{eq:LCBY-mom-Ham-2} 
with $\momJ=0$ are exactly conformal Killing fields.  
Since $\alpha$ appears in the Lichnerowicz equation 
\eqref{eq:Lich-Ham} only via the lapse $N$ in
the form $1/(2N) \ck_h W$, it follows that the solutions of this system,
if any, are independent of the choice of $\alpha$.   But
in the non-vacuum case the choice of $\alpha$ has 
an impact on the resulting solution: changing $\alpha$
changes $N$ in equation \eqref{eq:LCBY-mom-Ham-2} 
and, as can be seen directly in simple examples,
thereby changes solutions of the non-homogeneous equation.
This goes on to impact the term $1/(2N)\ck_h W$ in the Lichnerowicz
equation and ultimately the
resulting solution of the constraint equations.  The significance
of the dependence of the solution on the choice of $\alpha$
is not yet well understood, and deserves further study.

Notably, these differences between the vacuum and non-vacuum
CMC cases also appear in the non-CMC setting, even in vacuum.  
The slice energy density $\alpha$ 
appears throughout the coupled system 
in the form of $N$, and the presence of conformal Killing
fields interferes with finding solutions of the LCBY
vacuum momentum constraint
\[
\div_h\left(\frac{1}{2N}\ck_h W\right) = \phi^q \kappa d\tau
\]
because the right-hand side need not remain in the image
of $\div_h$ as the unknown $\phi$ varies.  Because of these
connections, it may
be that a better understanding of the CMC non-vacuum conformal
method leads to insights regarding 
an improved variation of the conformal
method for non-CMC solutions.

\section{Applications}
\label{sec:apps}
We now apply our construction to derive the right-hand sides 
of the LCBY equations for a selection of concrete matter models.
For definiteness, consider the Hamiltonian formulation
in the CMC setting discussed in Sections \ref{secsec:ConfMethod}
and \ref{secsec:CMC}. Thus the metric seed data 
can be taken to consist of a Riemannian metric $h$, a transverse-traceless tensor
$\sigma$, a constant mean curvature $\tau$ 
and a positive lapse $N$; the LCBY equations then read
\begin{align}
\label{eq:Lich-final}
-2\kappa q \Delta_h \phi + R_h \phi -\left|\sigma+\frac{1}{2N}\ck_h W\right|_h^2\phi^{-q-1} + \kappa \tau^2 \phi^{q-1} &= 16\pi \phi^{q-1}\hamE^*\\
\label{eq:LCBY-mom-final}
\div_h\left(\frac{1}{2N}\ck_h W\right) &=  
- 8\pi \phi^q  {\momJ}^*.
\end{align}
Our aim for each matter model 
is to write down the form of $\hamE^*$ and $\momJ^*$
as functions of $\phi$ along with other seed data
to derive the final equations.

In each case the broad program begins with the following steps:
\begin{enumerate}
  \item Identify the Lagrangian $\mathcal L$ for the matter model along
  with the spacetime field variable(s).
  \item For tensor-valued fields, decompose into spatial variables
  by the procedure of Section \ref{sec:decomp}.
  \item Determine the energy and momentum densities $\hamE$
and $\momJ$ for the model either by computing the
stress-energy tensor and using equations \eqref{eq:E-def-direct}--\eqref{eq:J-def-direct} or by taking derivatives of the Lagrangian with respect to the lapse and shift via equations \eqref{eq:def-J-first}--\eqref{eq:def-E-first}.
\item Letting $B$ denote the (potentially spatially decomposed) matter
fields, identify the conjugate momenta $\Pi_B$ by writing the Lagrangian
in terms of $B$ and its derivatives and computing
the derivative of the Lagrangian with respect to $\dot B$.
\item Identify constraints for the matter fields arising from the Euler-Lagrange equations
for the Lagrangian and determine if they are invariant under conformal changes; a failure of conformal invariance at this step would require 
some other procedure to construct initial data.
\end{enumerate}

At this point, perhaps after a computation, we can write
\begin{align}
\label{eq:E-from-B-h}
\hamE &= \hamE(B,\Pi_B,h,N,X)\\
\label{eq:J-from-B-h}
\momJ &= \momJ(B,\Pi_B,h,N,X).
\end{align}
The quantities $\hamE^*$ and $\momJ^*$
in equations \eqref{eq:Lich-final}--\eqref{eq:LCBY-mom-final}
are then
obtained by replacing $h$ with $h^*=\phi^{q-2}h$ and $N^*=\phi^q N$
in equations \eqref{eq:E-from-B-h}--\eqref{eq:J-from-B-h}:
\begin{align*}
\hamE^* &= \hamE(B,\Pi_B,\phi^{q-2}h,\phi^q N,X)\\
\momJ^* &= \momJ(B,\Pi_B,\phi^{q-2}h,\phi^q N,X),
\end{align*}
which completes the procedure. Because we have fixed
$B$ and $\Pi_B$, Theorems \ref{thm:form-of-J-manifold-edition}
and \ref{thm:form-of-J-tensor-edition} imply that
the momentum constraint with this form of $\momJ^*$
will have decoupled from the Hamiltonian constraint.

In practice, it is frequently cumbersome to write $\hamE$
and $\momJ$ in terms of $B$ and $\Pi_B$ directly, in part
because familiar notation for matter models use quantities 
that encode $B$ and $\Pi_B$ in terms of variables that mix in the metric.
For example, the electric field $E$ 
determines a momentum $-1/(2\pi)\ip<E,\cdot>_h\;dV_h$.  
This momentum can be held fixed when $h$ conformally changes so long
as $E$ conformally changes as well; $h^*=\phi^{q-2} h$ leads to
$E^*=\phi^{-2} E$.  In the case of fluids, there is a conformally invariant $n$-form $\omega$ determining particle number density,
and this is determined by a conformally transforming 
scalar $\mathbf n$ via $\omega= \mathbf n\,dV_h$; 
maintaining $\omega=\mathbf n^*\,dV_{h^*}$ requires $\mathbf n^* = \phi^{-q}\mathbf n$. In general we work with conformally
transforming variables $C$ and transformation rules chosen 
so that $(C,h,N)\mapsto (C^*,h^*,N^*)=(C^*(C,h,\phi,N), \phi^{q-2} h, \phi^q N)$
ensures $B$ and $\Pi_B$ remain fixed.  The right-hand sides
of the LCBY equations are then computed using
\begin{align}
\label{eq:E-conf-variable}
\hamE^* &= \hamE(C^*(C,h,\phi,N),\phi^{q-2}h,\phi^q N,X)\\
\label{eq:J-conf-variable}
\momJ^* &= \momJ(C^*(C,h,\phi,N),\phi^{q-2}h,\phi^q N,X).
\end{align}

\subsection{Perfect Fluids}
\label{sec:fluids}

Perfect fluids admit a Lagrangian description in which the field
can be thought of as taking values in a separate manifold, the
so-called material manifold.  Hence Theorem 
\ref{thm:form-of-J-manifold-edition} applies to these fields.  Because
the Lagrangian formulation for fluids is perhaps less familiar
than the stress-energy tensor it determines, we begin
with a brief summary of it.  For more details
the reader is referred to \cite{brown_action_1993} and
the approachable notes \cite{Volker}.

\subsubsection{Lagrangian Formulation}
The $n$-dimensional \textbf{material manifold} $\overline\Sigma$ is a set of
labels for infinitesimal clusters of fluid particles. We make
the important simplifying hypothesis
\footnote{For example, we may be modeling the interior of a fluid body. A Lagrangian description of a fluid only partly filling space requires
a formulation of field theory more general than the one employed
in Sections \ref{sec:manifold-valued} and \ref{sec:tensor-valued}.} that the fluid fills all of spacetime $M=\Sigma\times I$, in
which case we can take $\overline\Sigma=\Sigma$.
The number density of particles on $\overline \Sigma$
is determined by a fixed positive $n$-form $\overline{\omega}$.
Together, $(\overline \Sigma,\overline\omega)$ can be thought of
as some instantaneous reference configuration of the fluid.

A \textbf{fluid configuration} on $M$ is determined by a submersion
$\Psi:M\to \overline \Sigma$ and the world line of
a particle cluster $q\in\overline\Sigma$ is $\Psi^{-1}(q)$.  
The map $\Psi$ here
corresponds to the map $\secA$ of Section \ref{sec:manifold-valued},
with $\overline \Sigma$ playing the role of $F$ from that section.
The flux of particles through $n$-dimensional subspaces
of $M$ is determined by the (non-vanishing) $n$-form
\[
\omega = \Psi^* \overline \omega,
\]
and the fact that
$d(\Psi^* \omega)= \Psi^*(d\omega)=0$ reflects conservation
of particle number for the fluid. Note that this 
description of the fluid configuration is completely independent 
of any spacetime metric.

If we have a spacetime metric $g$ on hand, it is convenient to express
other tensorial quantities derived from $\omega$ in terms of
$g$.  First, the \textbf{particle flux} $Q$ is the unique
vector satisfying
\begin{equation}\label{eq:Qdef}
Q\interior dV_g = \omega.
\end{equation}
The particle flux is tangent to the worldlines of the fluid particles
as can be seen by observing that $\Psi_* Q = 0$.  
Indeed:
\[
\Psi^*( (\Psi_* Q) \interior \overline{\omega}) = Q\interior\omega = Q\interior Q\interior dV_g = 0,
\]
and the conclusion that $\Psi_* Q=0$ then follows from the observations 
that $\Psi$ is a submersion and that
$\overline\omega$ is a non-vanishing top-level form.
We assume that $Q$ is timelike
and future pointing, which is an open compatibility condition
between $\Psi$ and $g$.  Note that
\[
(\div_g Q) dV_g = d(Q\interior dV_g) = d\omega = 0
\]
and hence $\div_g Q= 0$; this is a metric-biased reflection
of the metric-free fact that particle number is conserved.
The \textbf{rest particle density} is the
scalar $\mathbf n_0>0$ defined by
\begin{equation}\label{eq:n0def}
\mathbf{n}_0^2 = - g(Q,Q)
\end{equation}
and the unit vector $U$ defined by
\[
Q = \mathbf{n}_0 U
\]
is the \textbf{fluid spacetime velocity}.
  
The Lagrangian for an isentropic perfect fluid 
is determined by a function 
$\rho(\mathbf{n}_0)$ that expresses mass-energy density in
terms of rest number density:
\begin{equation}\label{eq:fluid-Lag}
\mathcal L_{\mathrm{Fluid}}(\Psi,\partial\Psi,g) = -2\rho(\mathbf{n}_0(\Psi,\partial\Psi,g)) dV_g.
\end{equation}
Note that $\mathbf{n}_0$ is implicitly 
a function of both the metric (equations \eqref{eq:Qdef} and \eqref{eq:n0def})
as well as the field configuration $\Psi$ and its derivatives
($\mathbf n_0$ depends on $\omega = \Psi^*\overline \omega$).

The stress-energy tensor is obtained by varying the Lagrangian
with respect to the (inverse) metric, and to do this we
need to compute the variation of $\mathbf n_0$ with respect to 
the metric. This computation is simplified by fixing an arbitrary non-vanishing $n+1$-form $\Omega$
on spacetime so that $\Psi^*\overline{\omega} = Y\interior \Omega$
for some vector field $Y$ that does not depend on the metric. 
Observing $Q = Y(\Omega/dV_g)$
we can write
\[
\mathbf{n}^2_0 = -g(Y,Y) \left(\frac{\Omega}{dV_g}\right)^2.
\]
This expression for $\mathbf{n}^2_0$ has the advantage
that the dependence on $g$ is now explicit; $Y$ depends on $\Psi$
and our arbitrary choice of $\Omega$ but is independent of the metric.
Thus
\begin{align*}
\left(\frac{\partial \mathbf{n}_0^2}{\partial g^{-1}}\right)_{ab} &= 
Y_a Y_b \left(\frac{\Omega}{dV_g}\right)^2 -g(Y,Y)
\left(\frac{\Omega}{dV_g}\right)^2 g_{ab}\\
&= Q_a Q_b -g(Q,Q)g_{ab}\\
&= \mathbf{n}_0^2 \left(U_a U_b + g_{ab}\right)
\end{align*}
and as a consequence
\begin{align}
T_{ab}\,dV_g = -\left( \frac{\partial \mathcal L_{\mathrm{Fluid}}}{\partial g^{-1}}\right)_{ab} &=
(-\rho(\mathbf n_0) g_{ab}dV_g + \rho'(\mathbf n_0)\mathbf{n}_0[ U_a U_b + g_{ab}]) \,dV_g\nonumber\\
&= \left[\rho(\mathbf n_0) U_a U_b + (\rho'(\mathbf n_0)\mathbf{n}_0-\rho(\mathbf n_0))[ U_aU_b+g_{ab}]\right]\,dV_g.
\label{eq:fluid-T}
\end{align}
We recognize this as the stress energy tensor of a 
perfect fluid
with energy density $\rho(\mathbf n_0)$ and pressure $p$ determined by
the rest number density:
\begin{equation}\label{eq:pressure-def}
p(\mathbf n_0) = \rho'(\mathbf n_0) \mathbf{n_0}-\rho(\mathbf n_0).
\end{equation}
Noting that we are working with an isentropic fluid,
the equation of state \eqref{eq:pressure-def} is consistent with
the first law of thermodynamics; compare with 
\cite{misner_gravitation_2017} equations (22.6) and (22.7a).
So long as $\rho$ is monotone in $\mathbf n_0$,
equation \eqref{eq:pressure-def} implicitly gives
a more familiar equation of state $p = p(\rho)$.  For example,
linear equations of state $p=(\gamma-1)\rho$ used in cosmology
arise from
\[
\rho(\mathbf n_0) = C \mathbf n_0^\gamma
\]
for some constant $C$. The special case of a pressureless fluid (dust) 
corresponds to $\gamma = 1$ and a stiff fluid with the
speed of sound equal to the speed of light arises at
the upper threshold value $\gamma=2$; in Sections \ref{secsec:dust}
and \ref{secsec:hfluid} we derive the right-hand sides of the LCBY
equations for these two cases.

Before writing down the energy and momentum densities implied by
equation \eqref{eq:fluid-T} it is helpful to decompose
the particle flux:
\begin{equation}\label{eq:Q-def}
Q = \mathbf{n}( \nu + V)
\end{equation}
where $\nu$ is the unit normal to slices of constant $t$,
$V$ is the spatial velocity of the particle cluster measured
by an observer traveling orthogonal to the slices of constant $t$,  and $\mathbf{n}$ is the particle density seen by the same observer. 
Then
\begin{equation}
\ip<U,\nu>_g = \frac{1}{\mathbf n_0} \ip<Q,\nu>_g= 
-\frac{\mathbf n}{\mathbf n_0}
\end{equation}
and consequently
\begin{equation}
\hamE = T(\nu,\nu) = \rho(\mathbf n_0) + \left( \left(\frac{\mathbf n}{\mathbf n_0}\right)^2 -1 \right)\mathbf n_0 \rho'(\mathbf n_0)\label{eq:fluid-E-v1}.
\end{equation}
Since $j_*\ip<\nu,\cdot>_g=0$ and $j_*\ip<V,\cdot>_g=\ip<V,\cdot>_h$
it follows that
\begin{equation}
\momJ = -j_* T(\nu,\cdot) = \rho'(\mathbf n_0)\left(\frac{\mathbf n^2}{\mathbf n_0}\right)\ip<V,\cdot >_h.
\label{eq:fluid-J-v1}
\end{equation}
Equations \eqref{eq:n0def} and \eqref{eq:Q-def} imply
the fundamental relation determining rest particle density
from observed particle density and velocity:
\begin{equation}\label{eq:n0-vs-n}
\mathbf n_0^2 = \mathbf n^2(1-|V|_h^2).
\end{equation}
Hence we have an alternative form of the energy density:
\begin{equation}
\hamE =  \rho(\mathbf n_0) + \frac{\mathbf n^2}{\mathbf n_0}\rho'(\mathbf n_0)|V|_h^2.
\label{eq:fluid-E-v2}
\end{equation}

\subsubsection{Conjugate Momentum for Perfect Fluids}

Because our approach to constructing initial data involves
fixing the field value $\Psi$ and its conjugate momentum
$\Pi_\Psi$,
we need to identify this latter quantity. Setting $L_\Sigma=\partial_t\interior \mathcal L_{\mathrm{Fluid}}$,
 by definition
\begin{equation}\label{eq:Pi-fluid-part1}
\Pi_\Psi = \frac{\partial  L_\Sigma}{\partial \dot \Psi}
= -2\rho'(\mathbf n_0) \frac{\partial \mathbf n_0}{\partial \dot \Psi}\; N dV_h
\end{equation}
Derivatives of $\Psi$ appear in $\mathbf n_0$ 
because it depends on
$\omega=\Psi^*\overline \omega$ and we need to isolate the
dependence on $\dot\Psi$ while keeping $\partial_x \Psi$ fixed.
Since $\mathbf n_0$ is determined from $\mathbf n$ and $V$ along with
the metric (equation \eqref{eq:n0-vs-n}),
it suffices to determine how $\mathbf n$ and $V$ depend on 
$\dot \Psi$.
\begin{lemma}\label{lem:n-V}
The following relations hold:
\begin{align}
\label{eq:n-on-Psi_x}
 \mathbf n\; dV_h &= j_* \Psi^* \overline \omega \\
\label{eq:V-on-Psi_dot}
\dot\Psi &= \Psi_* (-NV +X).
\end{align}
\end{lemma}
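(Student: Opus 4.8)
The plan is to verify both relations by directly unwinding the definitions, treating $\dot\Psi = \Psi_*\partial_t$ as a field of tangent vectors along $\Psi$ (defined pointwise over each $p\in M$, since $\Psi$ is only a submersion and not a diffeomorphism).

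For equation \eqref{eq:n-on-Psi_x}, I would begin from $\omega = Q\interior dV_g$ together with the decomposition $Q = \mathbf n(\nu+V)$ of equation \eqref{eq:Q-def}, so that $j_*\Psi^*\overline\omega = j_*\omega = \mathbf n\,j_*(\nu\interior dV_g) + \mathbf n\,j_*(V\interior dV_g)$. The computation then rests on the identity $dV_g = N\,dt\wedge dV_h$ from Section \ref{sec:slicing}, along with $\nu\interior dt = 1/N$ and $V\interior dt = 0$ (the latter because $V$ is spatial). Expanding the interior products over the wedge gives $\nu\interior dV_g = dV_h - N\,dt\wedge(\nu\interior dV_h)$ and $V\interior dV_g = -N\,dt\wedge(V\interior dV_h)$. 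Applying $j_*$ annihilates every term carrying a factor of $dt$, leaving $j_*(\nu\interior dV_g) = dV_h$ and $j_*(V\interior dV_g) = 0$; this immediately yields $\mathbf n\,dV_h = j_*\omega$.

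For equation \eqref{eq:V-on-Psi_dot}, I would write $\dot\Psi = \Psi_*\partial_t$ and decompose $\partial_t = N\nu + X$ to obtain $\dot\Psi = N\,\Psi_*\nu + \Psi_*X$ pointwise. The fact that $\Psi_* Q = 0$, established just before equation \eqref{eq:n0def}, combined with $Q = \mathbf n(\nu+V)$ and $\mathbf n>0$, gives $\Psi_*\nu = -\Psi_*V$. Substituting produces $\dot\Psi = \Psi_*(-NV + X)$, as claimed.

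Neither step poses a genuine obstacle; both are essentially bookkeeping. The only point deserving care is conceptual rather than computational: $\dot\Psi$ and the pushforwards $\Psi_*\nu$, $\Psi_*V$, $\Psi_*X$ are not vector fields on $\overline\Sigma$ but pointwise images living in $T_{\Psi(p)}\overline\Sigma$, so the manipulations in the second relation must be read fiberwise over each $p$. In the first relation, the analogous discipline is to track which terms contain $dt$ so that $j_*$ is applied correctly.
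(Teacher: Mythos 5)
Your proposal is correct and follows essentially the same route as the paper: for \eqref{eq:n-on-Psi_x} both arguments decompose $Q=\mathbf n(\nu+V)$ and observe that $j_*$ annihilates every term carrying a $dt$ factor (the paper packages this as the identity $j_*(Z\interior dV_g)=0$ for spatial $Z$ via Lemma \ref{lem:t-preserving-diffeo}, whereas you expand $dV_g=N\,dt\wedge dV_h$ by hand), and for \eqref{eq:V-on-Psi_dot} both deduce $\Psi_*\nu=-\Psi_*V$ from $\Psi_*Q=0$ and substitute into $\dot\Psi=\Psi_*\partial_t=\Psi_*(N\nu+X)$. Your closing remark that the pushforwards must be read fiberwise along $\Psi$ is a sound point of care that the paper leaves implicit.
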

\begin{proof}
To establish \eqref{eq:n-on-Psi_x} we compute
\[
\Psi^* \overline\omega = Q\interior dV_g = \mathbf n(\nu+V)\interior dV_g = 
(\mathbf n/N)(\partial_t + Z)\interior dV_g
\]
where $Z$ is an unimportant spatial vector. 
Lemma \ref{lem:t-preserving-diffeo}
part \ref{part:kills-S-interior-omega} 
implies $j_* Z\interior dV_g=0$ and hence
\[
j_*\Psi^* \overline\omega = \frac{\mathbf n}{N} j_* \partial_t\interior dV_g = \mathbf n\;dV_h.
\]

Turning to equation \eqref{eq:V-on-Psi_dot} we recall $\Psi_* Q=0$.
Since $Q=\mathbf n(\nu+V)$ it follows that
\[
\Psi_*\nu = - \Psi_*V
\]
and equation \eqref{eq:V-on-Psi_dot} results from an easy
computation using the identities
$\nu = (\partial_t-X)/N$ and $\dot\Psi=\Psi_*\partial_t$.
\end{proof}

Lemma \ref{lem:n-V} has two implications for the computation
of $\Pi_\Psi$.  First, $\mathbf n$ is computable from $\partial_x\Psi$
and $h$ alone and is independent of $\dot\Psi$.  So equation
\eqref{eq:n0-vs-n} implies
\begin{equation}\label{eq:Pi-fluid-part2}
\frac{\partial \mathbf n_0^2}{\partial \dot\Psi} = -2\mathbf n^2 \ip<V,\partial V/\partial \dot \Psi>_h.
\end{equation}
Second, equation \eqref{eq:V-on-Psi_dot} permits computation of $\partial V/\partial \dot \Psi$.  To do so, define 
\[
\psi(t) = \Psi|_{\Sigma_t}.
\]
Since $\Psi$ is, by hypothesis, a submersion and
since the (assumed timelike) vector $Q$ lies in the kernel of $\Psi_*$, a dimension argument shows that for each $t$, $\psi(t)$ 
is a local diffeomorphism.
Thus $\psi(t)^*$ is well-defined on tangent vectors and equation
\eqref{eq:V-on-Psi_dot} implies
\[
\psi(t)^*\dot\Psi = (-NV + X)|_{\Sigma_t}.
\]
Tangent vectors on $\Sigma_t$ can be identified with tangent
vectors on $M$ and we conclude
\begin{equation}\label{eq:V-vs-dotPsi}
V = -\frac{\psi^* \dot\Psi -X}{N}
\end{equation}
where $\psi^* \dot\Psi$ is defined to be the 
unique spatial vector $Y$ on $M$ with 
$\Psi_*Y = \dot\Psi\in T\overline\Sigma$.
Noting that $\psi^*$ depends on $\partial_x\Psi$ but not on $\dot\Psi$,
we conclude from equations \eqref{eq:Pi-fluid-part1}, 
\eqref{eq:Pi-fluid-part2} and \eqref{eq:V-vs-dotPsi} that
\begin{equation}\label{eq:pi-fluid}
\Pi_{\Psi} = -2\rho'(\mathbf n_0)\frac{\mathbf n^2}{\mathbf n_0} \ip<V,\psi^* \cdot>_h\;dV_h.
\end{equation}
This object is a little unwieldy; it is a $T^*\overline \Sigma$-valued
$n$-form on $M$.  To mitigate this, we 
introduce the related covector field
\begin{equation}\label{eq:P-psi-def}
P_\Psi = \rho'(\mathbf n_0)\frac{\mathbf n}{\mathbf n_0} \ip<V,\cdot>_h
\end{equation}
so that 
\begin{equation}\label{eq:Pi_Psi_vs_P_psi}
\Pi_\Psi = -2(P_\Psi\circ \psi^*)\mathbf n\,dV_h,
\end{equation}
and we interpret $P_\Psi$ as
the fluid momentum density per unit particle.
Equation \eqref{eq:n-on-Psi_x} implies $\mathbf n\,dV_h$ is
metric-independent
and hence $P_\Psi$ is computable from $\Psi$ and $\Pi_\Psi$
alone without using a spacetime metric.

We can now rewrite the energy and momentum densities
from equations \eqref{eq:fluid-J-v1} and \eqref{eq:fluid-E-v2}
 as functions
of $\Psi$, $\Pi_\Psi$ and $h$ as follows:
\begin{align}
\label{eq:fluid-E-v3}
\hamE &= \rho(\mathbf n_0) + \frac{\mathbf n_0}{\rho'(\mathbf n_0)}|P_\Psi|_h^2\\
\label{eq:fluid-J-v3}
\momJ &= \mathbf n\, P_\Psi
\end{align}
where $\mathbf n$ is determined from $\Psi$ and $h$ via
$\mathbf n\;dV_h = j_*\Psi^* \overline \omega$, 
where $P_\Psi$ is computed from $\Psi$ and $\Pi_\Psi$ (but not $h$) as 
discussed above, and where $\mathbf n_0$ is implicitly determined
from $\mathbf n$, $P_\Psi$ and $h$ by
equations \eqref{eq:n0-vs-n} and \eqref{eq:P-psi-def}:
\begin{equation}\label{eq:n0-from-n-final}
\left(\frac{\mathbf n}{\mathbf n_0}\right)^2 = 1 + \frac{1}{(\rho'(\mathbf n_0))^2}|P_\Psi|_h^2.
\end{equation}
Note that taking equation \eqref{eq:Pi_Psi_vs_P_psi} into account,
the momentum constraint is exactly of the form of equation 
\eqref{eq:J-to-Pi-B-manifold-edition} from Theorem \ref{thm:form-of-J-manifold-edition}.

\subsubsection{Application of the Conformal Method}
\label{secsec:conf-fluid}

We work on a single slice $\Sigma_{t_0}$ which we 
identify canonically with $\Sigma$. Spatial tensors
defined along on $\Sigma_{t_0}$ in $M$ in the
discussion above
can then be unambiguously identified with tensors on $\Sigma$, and
we do so without comment.  Assuming that 
the initial fluid configuration $\psi(t_0)=\Psi|_{\Sigma_{t_0}}$ 
is a diffeomorphism, without loss of generality 
we can take $\overline \Sigma=\Sigma$ and $\psi(t_0)=\Id$.
Hence the reference
number density $\overline\omega$ is simply the
initial number density which, in a mild abuse of notation,
we write as $\omega$.\footnote{Strictly speaking, 
in terms of our notation used up to this point
the 
initial number density is the spatial tensor $j_*\omega$}
Having chosen $\omega$, the momentum $\Pi_\Psi$ is determined
by choosing a covector field $P_\Psi$ according to
equations \eqref{eq:n-on-Psi_x} and \eqref{eq:Pi_Psi_vs_P_psi}
with $\psi=\Id$:
\[
\Pi_\Psi = -2P_\Psi\otimes\omega.
\]
That is, the metric-independent matter field and 
the conjugate momentum
are determined from $(\omega,P_\Psi)$, which are taken
as the fixed non-metric seed data for the conformal method.  

Since $\hamE$ and $\momJ$ in equations
\eqref{eq:fluid-E-v3}--\eqref{eq:fluid-J-v3}
are written in terms of $(\mathbf n,\mathbf n_0,P_\Psi,h)$
and do not explicitly involve our fixed variable $\omega$,
we follow the procedure discussed at the start of Section 
\ref{sec:apps} and introduce conformal transformation
rules
\[
(\mathbf n,\mathbf n_0,P_\Psi,h)\mapsto
\left(\mathbf n^*,\mathbf n_0^*,P_\Psi,h^*=\phi^{q-2}h\right)
\]
that are equivalent to the 
rule $(\omega,P_\Psi,h)\mapsto(\omega,P_\Psi,h^*)$.
In place of $\omega$ for the seed data we
specify the function $\mathbf n$ 
and set $\omega =\mathbf n\, dV_h$.  Keeping $\omega$ fixed
we conclude from $\omega=\mathbf n^*dV_{h^*}$ that
$\mathbf n^*=\phi^{-q}\mathbf n$.  At this point, $\mathbf n_0^*(\phi)$ is determined by substituting starred variables into the implicit relation \eqref{eq:n0-from-n-final} and we find
\begin{equation}\label{eq:n_0_star}
\left(\frac{\mathbf n}{\mathbf n_0^*(\phi)}\right)^2\phi^{-2q} = 1 + \frac{1}{(\rho'(\mathbf n_0^*(\phi)))^2}\phi^{2-q}|P_\Psi|_h^2.
\end{equation}

The right-hand side of the momentum constraint is
\begin{equation}
-8\pi\phi^q \momJ^* = -8\pi\phi^q \mathbf n^* P_\Psi = -8\pi \mathbf n \,P_\Psi
\end{equation}
and consequently the momentum constraint reads
\begin{equation}\label{eq:fluid-mom-const}
\div_h\left(\frac{1}{2N}\ck_h W\right) =
-8\pi  \mathbf n\,  P_\Psi.
\end{equation}
As expected, equation \eqref{eq:fluid-mom-const}
does not involve $\phi$ and  has decoupled from the Hamiltonian
constraint. Note that equation \eqref{eq:fluid-mom-const} holds regardless of the fluid's equation of state.  

By contrast, the form of the Hamiltonian constraint is
sensitive to the choice of $\rho(\mathbf n_0)$, and we compute
\[
\hamE^* = \rho(\mathbf n_0^*(\phi)) + 
\frac{\mathbf n_0^*(\phi)}{\rho'(\mathbf n_0^*(\phi))}\phi^{2-q}|P_\Psi|_h^2
\]
with $\mathbf n_0^*(\phi)$ determined implicitly 
by equation \eqref{eq:n_0_star}.  This implicit relationship
complicates the analysis of the conformal method, and a 
full treatment is beyond the scope of this paper.  
We content ourselves with two examples that illustrate 
some of the possibilities, dust and stiff fluids. 

\subsubsection{Dust}\label{secsec:dust}
Dust arises from an equation of state
$\rho(\mathbf n_0) = m \mathbf n_0$ where $m$ is the particle rest 
mass, and a computation shows
\begin{equation}\label{eq:Estar-cosmo}
\hamE = m \frac{\mathbf n^2}{\mathbf n_0}.
\end{equation}
Moreover, because $\rho'(\mathbf n_0)\equiv m$, we can solve equation \eqref{eq:n0-from-n-final} explicitly for $\mathbf n/\mathbf n_0$ to obtain
\[
\hamE = \mathbf n  \sqrt{m^2+|P_\Psi|_h^2}.
\]
Therefore
\[
\hamE^* = \mathbf n^*  \sqrt{m^2+|P_\Psi|_{h^*}^2}
= \mathbf n \phi^{-q} \sqrt{m^2+\phi^{2-q}|P_\Psi|_{h}^2}
\]
and the Lichnerowicz equation \eqref{eq:Lich-final} becomes
\begin{equation}\label{eq:Lich-dust}
-2\kappa q \Delta_h \phi + R_h \phi -\left|\sigma+\frac{1}{2N}\ck_h W\right|_h^2\phi^{-q-1} + \kappa \tau^2 \phi^{q-1} = 
16\pi \mathbf n \sqrt{m^2\phi^{-2} + \phi^{-q}|P_\Psi|_h^2}.
\end{equation}
Because the right-hand side of this equation is monotone decreasing in $\phi$, barrier techniques such as those employed in \cite{Isenberg:1995bi}
can be used to establish existence and uniqueness of $\phi$
in many contexts.

Equation \eqref{eq:Lich-dust} that we have derived here 
is manifestly different from the Lichnerowicz equation 
obtained for fluids 
in the earlier works, e.g. \cite{dain_initial_2002} or \cite{isenberg_gluing_2005}.  
These 
papers treat $\hamE$ and $\momJ$ as conformally
transforming quantities with transformation rules
independent of the fluid equation of state.  For example,
\cite{isenberg_gluing_2005} uses the rule
\begin{align*}
\mathcal{E}^* = \phi^{1-\frac32 q}\hamE\\
\mathcal{J}^* = \phi^{-q}\momJ.
\end{align*}
The transformation law for $\momJ$ is chosen simply for pragmatic
reasons to ensure that the momentum constraint decouples, and the
transformation law for $\hamE$ is chosen to ensure
\[
\frac{|J^*|_{h^*}^2}{(\hamE^*)^2}=\frac{|J|_{h}^2}{\hamE^2}
\]
and hence the dominant energy condition (DEC) holds for 
$(\hamE^*,\momJ^*)$ if it does for
$(\hamE,\momJ)$.  Alternatively,
\cite{dain_initial_2002} uses the same transformation 
law for $\momJ$
but keeps $\hamE$ conformally invariant, motivated by
by important considerations arising at the boundary of a 
compact fluid body not filling spacetime.
In both works, once a physical $\hamE^*$
and $\momJ^*$ have been determined by solving the LCBY equations,
physical fluid variables (energy $\rho$ and velocity $V$) are extracted by inverting the relationship that defines $(\hamE^*,\momJ^*)$
from $(\rho,V)$ via the fluid equation of state; this operation is shown
to be possible under natural physical assumptions about the 
equation of state, assuming $(\hamE^*,\momJ^*)$ satisfies
the DEC. For example, the deconstruction procedure in \cite{isenberg_gluing_2005}
implies the conformal transformation rules
\begin{align}
\rho^* &= \phi^{1-3q/2} \rho\label{eq:old-rho-trans}\\
V^* &= \phi^{q/2-1}V\label{eq:old-Z-trans}.
\end{align}

Although the approach we have described here is somewhat more involved
than these earlier approaches, it enjoys a clearer connection between what is prescribed from matter in the seed data versus what appears in
the finally constructed initial data.  We are specifying a 
distribution 
of particles $\omega$ along with a conjugate momentum that are conformally invariant.  In particular, the particle count $\int_{\Omega} \omega$ is preserved at the end of the construction in any region $\Omega\subseteq\Sigma$, and this better connects the Lichnerowicz equation to the Poisson equation of Newtonian gravity.
By contrast, when $\rho$ and $V$ conformally transform 
with the rules \eqref{eq:old-rho-trans}--\eqref{eq:old-Z-trans}
the final particle count depends on the determined conformal factor; e.g.,
if $V=0$ a computation shows $\omega^* = \phi^{1-q/2}\omega$.

\subsubsection{Stiff Fluids}\label{secsec:hfluid}
In the earlier constructions of \cite{dain_initial_2002} and \cite{isenberg_gluing_2005}
the fluid's equation of state plays a compartmentalized role: it is used initially to determine
$\hamE$ and $\momJ$, and it is used after solving the LCBY
equations 
to deconstruct $\hamE^*$ and $\momJ^*$ into fluid variables.  But it plays no role in the solvability of the LCBY system.  So,
from a certain perspective, with such a strategy
all fluids are equivalent, assuming they satisfy
the hypotheses needed to deconstruct $(\hamE^*,\momJ^*)$.
In our procedure, the equation of state is an important component 
of the construction,
and in this section we show how the LCBY
equations for stiff fluids are closer in nature to those for scalar 
fields than they are to those for dust.

The equation of state of a stiff fluid 
is $\rho(\mathbf n_0) = \mu\mathbf n_0^2$
for some constant $\mu$ and equation \eqref{eq:fluid-E-v1} implies
\[
\hamE = \mu (2 \mathbf n^2-\mathbf n_0^2).
\]
Equation \eqref{eq:n0-from-n-final} for $\mathbf n_0$
can be rewritten
\[
\mathbf n^2 = \mathbf n_0^2 + \frac{1}{\mu^2}|P_\Psi|_{h}^2.
\]
and as a consequence
\begin{align*}
\hamE &= \mu \mathbf n^2 + \frac{1}{\mu}|P_\Psi|_{h}^2.
\end{align*}
Similarly
\begin{align*}
\hamE^* &= \mu \mathbf (\mathbf n^*)^2 + \frac{1}{\mu}|P_\Psi|_{h^*}^2\\
&=  \mu \phi^{-2q}\mathbf n^2 + \frac{1}{\mu} \phi^{2-q}|P_\Psi|_{h}^2.
\end{align*}
and the Hamiltonian constraint becomes
\begin{equation}\label{eq:lich-hard-fluid}
-2\kappa q \Delta_h \phi + R_h \phi -\left|\sigma+\frac{1}{2N}\ck_h W\right|_h^2\phi^{-q-1} + \kappa \tau^2 \phi^{q-1} = 
16\pi \left[  \phi^{-q-1}\mu\mathbf n +  \frac{1}{\mu} \phi |P_\Psi|_{h}^2\right].
\end{equation}
The powers of $\phi$ appearing in the terms on the right-hand side of equation \eqref{eq:lich-hard-fluid}
strikingly correspond to those on the left-hand side.  
The $\phi^{-q-1}$ term is
monotone decreasing in $\phi$ and therefore poses no difficulty for solvability. By contrast, the $\phi^1$ term is monotone increasing and needs to be accounted for by techniques such those found in 
\cite{hebey_variational_2008} for working with scalar field matter sources. 
Indeed, equation \eqref{eq:lich-hard-fluid} has
a structure identical to that 
of the Lichnerowicz equation for a scalar
field, equation \eqref{eq:scalar-field-Ham}.  The correspondence
found here between stiff fluids and scalar fields echoes
well-known connections between the two models 
in the associated evolution problem 
(\cite{christodoulou_formation_2007}, e.g.) and
is absent in previous initial data constructions.

\subsection{Proca}
\label{secsec:proca}

The Proca field describes a massive connection on a $U(1)$ bundle,
generalizing the connection of Maxwell's equations.  
The field is a 1-form $\mathcal A$ on spacetime and
the Lagrangian  is
the sum $\mathcal L_{\mathrm{Proca}}=\mathcal L_{\mathrm{EM}}+\mathcal L_{m}$ of the standard electromagnetism Lagrangian
\[
\mathcal L_{\mathrm{EM}} = -\frac{1}{8\pi} |d\mathcal A|_g^2\;dV_g
\]
and a term involving the constant mass $m$ of the field
\[
\mathcal L_m = -\frac{m}{4\pi}|\mathcal A|_g^2\;dV_g.
\]
As in Section \ref{sec:EMCSF} we use the following the notation:
\begin{align*}
E &= -j_* (\nu \interior d \mathcal A)\\
A &= j_* \mathcal A\\
A_0 &= \partial_t \interior \mathcal A\\
A_\perp &= \nu\interior \mathcal A\\
A_\vdash &= (\partial_t-X)\interior \mathcal A = A_0 - X\interior A\\
\dot A &= \partial_t A.
\end{align*}
These are all spatial tensors, and we can identify them with 
tensors on an initial slice $\Sigma_{t_0}$, which we
additionally identify with $\Sigma$.
The computation of Section \ref{sec:EMCSF} shows that on $\Sigma$
\begin{align}\label{eq:E-Proca-def}
E &= -\frac{1}{N}\left(\dot A - d A_\vdash - \Lie_X A\right)\\
\label{eq:E-Proca-def-A0}
&=-\frac1N\left(\dot A -  d A_0 - X\interior d A\right)
\end{align}
where $d$ is the exterior derivative intrinsic to $\Sigma$.
Using these variables, the slice Lagrangian $L_\Sigma = \partial_t\interior \mathcal L_{\mathrm{Proca}}$ can be written
\begin{equation}\label{eq:proca-slice-lag}
L_{\Sigma} = \left[ \frac{1}{4\pi}|E|_h^2 -\frac{1}{8\pi}| dA|_h^2 + \frac{m^2}{4\pi} \left[ A_{\perp}^2 -|A|_h^2\right]\right]NdV_h.
\end{equation}
In light of equation \eqref{eq:E-Proca-def} 
and the identity $A_\perp = A_\vdash/N$ one 
can think of $L_\Sigma$ as a function of metric variables $(h,N,X)$
and field variables $(A,A_\vdash,\dot A, \dot A_{\vdash})$.
Alternatively, using equation \eqref{eq:E-Proca-def-A0} and
the formula $A_\perp = (A_0-X\interior A)/N$ we can 
treat $L_\Sigma$ as a function of $(g,N,X)$ and $(A,A_0,\dot A, \dot A_0)$.

The energy and momentum densities can be found via the equations
\begin{align*}
\hamE dV_h&= -\frac{1}{2} \frac{\partial L_\Sigma}{\partial N}\\
\momJ dV_h&=  \frac{1}{2} \frac{\partial L_\Sigma}{\partial X}
\end{align*}
taking care in the computation of $\momJ$ to write $L_\Sigma$
using $A_0$ instead of $A_\vdash$. We find
\begin{align}\label{eq:proca-E}
\hamE &= \frac{1}{8\pi}\left[|E|_h^2 + \frac12|dA|_h^2
+ m^2 A_\perp^2 + m^2|A|_h^2\right]\\
\label{eq:proca-J}
\momJ &= \frac{1}{4\pi}\left[\ip<E,\cdot \interior d A>_h -m^2 A_\perp A\right].
\end{align}

Because the additional mass term $\mathcal L_m$
does not involve derivatives of $\secA$, the 
momenta of the system  are the same as for standard
electromagnetism:
\begin{align*}
\Pi_A &= -\frac{1}{2\pi}\ip<E,\cdot>_h dV_h \\
\Pi_{A_\vdash} &= 0.
\end{align*}
The vanishing momentum $\Pi_{A_\vdash}$ implies a constraint:
$L_\Sigma$ must be stationary with respect to $A_\vdash$ and we find
\begin{equation}\label{eq:proca-constraint-v1}
\int_{\Sigma} 
\left[ \left<E,d\, \delta A_\vdash\right>_h + m^2 (A_\vdash/N) \delta A_\vdash\right]\;dV_h = 0
\end{equation}
for any compactly supported function $\delta A_{\vdash}$.  Hence
\[
\div_h E = m^2 A_\perp.
\]
Substituting $m^2 A_\perp = \div_h E$ in the momentum constraint
\eqref{eq:proca-J} we verify by an argument analogous 
to that of equation \eqref{eq:J-as-Pi-integral}
that for any compactly supported vector field $\delta X$,
\begin{align*}
\int_\Sigma  \momJ(\delta X)\,dV_h &=
\frac{1}{4\pi} \int_{\Sigma} \left[\ip<E,\delta X\interior d A>
-(\div_h E) A(\delta X)\right]dV_h\\
&=
\frac{1}{4\pi} \int_{\Sigma} \left[\ip<E,\delta X\interior d A>
+\ip<E, d(A\interior X)>_h\right] dV_h\\
&= -\frac 12 \int_{\Sigma} [\Pi_A(\Lie_{\delta X} A)  + \Pi_{A_\vdash}(\Lie_{\delta X}A_\vdash)].
\end{align*}
Hence the conclusion of Theorem \ref{thm:form-of-J-tensor-edition}
indeed holds for this model.

To see how the constraint \eqref{eq:proca-constraint-v1} is conformally invariant, equation \eqref{eq:proca-constraint-v1} can be rewritten
in terms of the momentum as
\[
\int_{\Sigma} \left[-2\pi \Pi_A(d\delta A_\vdash) + m^2 (A_\vdash/N) \delta A_{\vdash}\;dV_h\right] = 0. 
\]
Let $P_A$ be the unique $n-1$-form such that
$P_A\wedge \eta = \Pi_A(\eta)$ for any 1-form $\eta$; it is easy
to see that $\Pi_A$ and $P_A$ uniquely determine each other independent
of the metric.  Integration by parts then implies
\begin{equation}\label{eq:proca-constraint-v2}
 d P_A + \frac{m^2}{2\pi} A_\vdash \frac{dV_h}{N} = 0.
\end{equation}
At first glance, this constraint involves the spacetime metric 
via $h$ and $N$ and cannot be solved in advance in the context of the conformal method. Remarkably, however, the term $dV_h/N$ already
has a prominent role in the conformal method: it is the 
reference slice energy density $\alpha$ from Section \ref{secsec:ConfMethod}
and is the conformally fixed object that leads to the densitized lapse.
Hence $P_A$ satisfies
$d P_A + m^2 A_\vdash \alpha = 0$,
and solutions of this constraint can be found \textit{a-priori} by the same techniques
as Section \ref{secsec:EMCSF-constraint}.  The densitized lapse of the conformal method is perhaps its least-well understood feature, and 
its natural appearance here for the Proca model may provide an avenue
for clarifying its role.

We finish by deriving the right-hand sides of the LCBY equations.
Rather than write equations \eqref{eq:proca-E}--\eqref{eq:proca-J}
explicitly in terms of $(A,A_\vdash,\Pi_A,\Pi_\vdash)$, 
we follow the procedure outlined at the start of Section
\ref{sec:apps}: we keep the variables appearing
in equations \eqref{eq:proca-E}--\eqref{eq:proca-J} 
but introduce conformal transformation rules
\[
(A,A_\perp,E,h,N)\mapsto (A,A_\perp^*,E^*,h^*=\phi^{q-2}h,N^*=\phi^q N)
\]
that are equivalent to the desired transformation
\[
(A,A_\vdash,\Pi_A,\Pi_\vdash,h,N)\mapsto
(A,A_\vdash,\Pi_A,\Pi_\vdash,h^*,N^*).
\]
We take $(A,A_\perp,E)$ as the matter seed data and 
define $\Pi_A = -1/(2\pi)\ip<E,\cdot>_h$ and $A_\vdash = N A_\perp$.
Fixing $\Pi_A$ and $A_\vdash$ we require
$\Pi_A=-1/(2\pi)\ip<E^*,\cdot>_{h^*}$ and
$A_\vdash = N^* A_\perp^*$ which leads to
\begin{align*}
E^* &= \phi^{-2}E\\
A_\perp^* &= \phi^{-q} A_{\perp}.
\end{align*}
The right-hand sides of the LCBY equations \eqref{eq:Lich-final}--\eqref{eq:LCBY-mom-final} are then
\begin{align*}
16\pi\phi^{q-1}\hamE^* &= 
2\phi^{q-1}\left[|E^*|_{h^*}^2 + \frac12|dA|_{h^*}^2
+ m^2 (A_\perp^*)^2 + m^2|A|_{h^*}^2\right]\\
&=2\left[\phi^{-3}|E|_{h} + \frac12\phi^{3-q}|dA|_{h}^2
+ m^2 \phi^{-q-1}(A_\perp)^2 + m^2\phi|A|_{h}^2\right]\\
-8\pi \phi^q \momJ^* 
&= 
-2\phi^q\left[\ip<E^*,\cdot \interior d A>_{h^*} -m^2 A_\perp^* A\right]\\
&=-2\left[\ip<E,\cdot \interior d A>_{h} -m^2 A_\perp A\right].
\end{align*}
As implied by Theorem \ref{thm:form-of-J-tensor-edition},
$-8\pi \phi^q \momJ^*$ does not involve $\phi$ and
the momentum constraint has decoupled from the Hamiltonian
constraint. The mass term $\mathcal L_m$ in the Lagrangian 
leads to two terms on the right-hand side
of the Hamiltonian constraint that are structurally identical
to those of a scalar field.

The Proca field is treated in \cite{isenberg_extension_1977} in a more complex setting that also
allows torsion, and it arrives at analogous LCBY equations.
That work predates our current understanding of the densitized
lapse as a part of the conformal method (\cite{york_conformal_1999}\cite{pfeiffer_extrinsic_2003}\cite{maxwell_conformal_2014}) and it 
is perhaps curious that the role of the densitized lapse
is not observed there as it is here.
In \cite{isenberg_extension_1977} the constraint \eqref{eq:proca-constraint-v2} 
is written in the form
$dP_A + m^2/(2\pi)A_\perp dV_h$ which is then used to eliminate
$A_\perp$ from the momentum constraint.  Declaring this constraint
to hold by fiat in fact
implies the conformal transformation law
$A_\perp^*=\phi^{-q} A_\perp$ employed here but is arrived at by 
alternative means via an additional assumption (do whatever is
necessary to $A_\perp$ to maintain the constraint while
leaving $P_A$ fixed). 
A novel feature of our work is
the decomposition of Section \ref{sec:decomp} that leads to
$A_\vdash$ rather than $A_\perp$ or $A_0$ as a variable that 
should be kept fixed.  Our approach is parsimonious, inasmuch
as the conformal transformation rule
for $A_\perp$ then arises naturally; we
simply fix $A_\vdash$ and conformally transform the lapse
without any additional ansatzes.

\subsection{Electromagnetism-Charged Dust (EMCD)}
\label{secsec:EMCD}
Our results can be generalized to matter fields
having both tensor-valued and manifold-valued components,
and we present an example of this here,
a fluid consisting of dust with constant charge $\chargec$ per unit particle coupled to electromagnetism. 
A notable feature of our construction is that
the particle density of the solution is directly prescribed and the ratio
of charge to rest mass is spatially constant. 

The Lagrangian is a sum 
$\mathcal L_{\mathrm{EMCD}}=
 \mathcal L_{\mathrm{EM}} + \mathcal L_{\mathrm{Dust}}+
 \mathcal L_{\mathrm{I}}$
of the standard 
Lagrangians for electromagnetism and dust together with a third interaction term:
\begin{align*}
\mathcal L_{\mathrm{EM}} &= -\frac{1}{8\pi} |d\mathcal A|_g^2\;dV_g\\
\mathcal L_{\mathrm{Dust}} &= -2m\mathbf n_0\;dV_g\\
\mathcal L_{\mathrm I} &= 2\chargec \mathcal A(Q)dV_g.
\end{align*}
The fluid variables $\mathbf n_0$ and $Q$ (the rest particle density and the particle flux respectively) are defined in Section \ref{sec:fluids} and $\mathcal A$, as usual, is the electromagnetism 
connection 1-form.
More fundamentally, the field variables consist of $\mathcal A$ and
the fluid configuration map $\Psi:M\to \overline \Sigma$ 
defined at the start of Section \ref{sec:fluids}.

Recall that the particle flux $Q$ is defined by
\[
Q\interior dV_g = \Psi^* \overline \omega
\]
where $\overline \omega$ is the reference particle density on the
material manifold $\overline \Sigma$.  A computation then shows
we can alternatively write
\[
\mathcal L_{I} = 2\chargec\, \mathcal A\wedge \Psi^*\overline \omega,
\]
an expression that does not involve the metric and which 
shows that the charge density of the fluid is proportional
to the number density. Alternatively, 
a separate reference charge density 
$\overline{\omega}_{\chargec}$ could be specified on the material manifold
and the interaction term would be $2\mathcal A\wedge \Psi^*\overline\omega_{\chargec}$.

Because the interaction term does not involve the metric,
it has no impact on the stress-energy tensor, which is
simply a sum of the stress-energy tensors for electromagnetism and dust.
Consequently on a slice $\Sigma_{t_0}$, now identified with $\Sigma$,
\begin{align}
\label{eq:E-EMCD}
\hamE &= \frac{1}{8\pi}\left[|E|_h^2+\frac{1}{2}|dA|_h^2\right] 
+\mathbf n \sqrt{m^2+|P_\Psi|^2}\\
\label{eq:J-EMCD}
\momJ &= \frac{1}{4\pi}\ip<E,\cdot\interior dA> +\mathbf n P_\Psi;
\end{align}
the notation here follows that of Sections \ref{secsec:dust} and 
\ref{secsec:proca}.  In particular, $\mathcal A$ has been decomposed into
$(A,A_\vdash)$, the electric field has the form of equation
\eqref{eq:proca-E}, the material manifold is $\Sigma$ itself,
and the fluid variables $\mathbf n$, which encodes
the initial number density $\omega$, and $P_\psi$,
which represents the fluid momentum density,
are discussed in detail in Section \ref{secsec:conf-fluid}.  As
in Section \ref{secsec:conf-fluid} we have assumed, as we are
free to do, that the initial fluid configuration 
satisfies $\psi(t_0)=\Id$.
 
The interaction term $\mathcal L_I$ does not involve
derivatives of $\mathcal A$ and therefore does not impact
the EM momenta.  We have
\begin{align*}
\Pi_A &= -\frac{1}{2\pi}\ip<E,\cdot>_h dV_h \\
\Pi_{A_\vdash} &= 0.
\end{align*}
On the other hand, $\mathcal L_I$ does contain derivatives of $\Psi$, and this affects the fluid
momentum.  Using equation \eqref{eq:V-vs-dotPsi} we compute
\[
\mathcal A(Q) = \frac{\mathbf n}{N}\left[A_\vdash 
-A(\psi^* \dot\Psi-X)\right]
\]
and therefore
\begin{equation}\label{eq:L_I}
\partial_t\interior \mathcal L_I = 2\chargec\mathbf n( A_\vdash 
-A(\psi^* \dot\Psi-X))dV_h.
\end{equation}
Taking a derivative with respect to $\dot\Psi$
we conclude that $\Pi_\Psi$ picks up an additional term
\[
 -2\chargec\mathbf n\, A\circ\psi^* \,dV_h.
\]
The contribution to $\Pi_\Psi$ from $\mathcal L_{\mathrm{Dust}}$
is the same as in Section \ref{secsec:conf-fluid}
and, recalling that on the initial slice we have assumed $\psi=\Id$,
we conclude
\[
\Pi_{\Psi} = -2(P_\Psi + \chargec A)\, \mathbf n\, dV_h.
\]
As argued in Section \ref{secsec:conf-fluid}, 
$\mathbf n\, dV_h$ is metric independent; it is
the initial number density $\omega$ and therefore
$\Pi_{\Psi}$ can be specified without using 
the metric by $P_\Psi$, $\omega$ and $A$.  

Because the momentum conjugate to $A_\vdash$ vanishes there is
a constraint; for any compactly supported function $\delta A_\vdash$,
\[
\int_{\Sigma} \frac{\partial L_\Sigma}{\partial A_\vdash}[\delta A_{\vdash}] = 0
\]
where $L_\Sigma$ is the slice Lagrangian $\partial_t\interior \mathcal L_{\mathrm{EMCD}}$.  Using equation \eqref{eq:L_I},
a computation analogous to that leading to
\eqref{eq:proca-constraint-v1} implies
\begin{equation}\label{eq:emcd-constraint-v1}
\int_{\Sigma} \left[ \frac{1}{2\pi} \ip<E, d\delta A_\vdash> + 2\chargec\mathbf n \delta A_\vdash\right]\;dV_h
\end{equation}
which implies Gauss' Law
\[
\div_h E = 4\pi \chargec\, \mathbf n.
\]
The constraint \eqref{eq:emcd-constraint-v1} can be alternatively
written
\[
\int_{\Sigma} \left[ -\Pi_\Sigma(d\,\delta A_\vdash)
  + 2\chargec\, \delta A_\vdash \omega
 \right]
\]
which is a metric-free statement and hence a conformally invariant
condition relating $\Pi_A$ and $\omega$.

Although neither Theorem \ref{thm:form-of-J-manifold-edition}
nor Theorem \ref{thm:form-of-J-tensor-edition} apply directly
to this model, we can still verify that a natural generalization
of equations \eqref{eq:J-to-Pi-B-manifold-edition}
and \eqref{eq:J-to-Pi-B-tensor-edition} holds here.
Using equation \eqref{eq:J-EMCD} and the definitions
of $\Pi_A$ and $\Pi_\Psi$ we find that for any
compactly supported vector
field $\delta X$
\[
-2\int_{\Sigma} \momJ(\delta X)\,dV_h = 
\int_{\Sigma} \left[\Pi_A(\delta X\interior dA) + \Pi_\Psi(\delta X)\right]+\int_{\Sigma}  + 2\chargec\int_\Sigma A(\delta X)\mathbf n\, dV_h.
\]
Gauss' Law and integration by parts implies
\begin{align*}
2\chargec\int_\Sigma A(\delta X)\mathbf n\,dV_h &=
\frac{1}{2\pi}\int_\Sigma  (\div_h E) A(\delta X)\, dV_h\\
&= -\frac{1}{2\pi}\int_\Sigma \ip<E,d (\delta X \interior A)> \, dV_h\\
&= \int_\Sigma \Pi_A(d (\delta X \interior A)).
\end{align*}
Finally, these last two equations along with
the vanishing of $\Pi_{A_\vdash}$, the formula $\Lie_{\delta X} = 
d(\delta X\interior A) + \delta X\interior dA$
and the fact that the initial fluid configuration is the identity
yields the desired relation
\[
-2\int_{\Sigma} \momJ(\delta X)\,dV_h = 
\int_{\Sigma} \left[ 
\Pi_A(\Lie_{\delta X} A) + 
\Pi_{A_\vdash}(\Lie_{\delta X} A_\vdash) +
\Pi_\Psi(\Psi_*\delta X)\right].
\]

Because the interaction term does not impact the energy
and momentum densities, the right-hand sides of the LCBY
equations are straightforward combinations of those for
electrovac and dust.  We can take as seed data for
the matter fields $E$, $A$, $\mathbf n$ and $P_\Psi$.
Fixing $\Pi_A$ is equivalent to the transformation rule
$E^*=\phi^{-2}E$ and fixing $\omega$ is implied by the
transformation rule $\mathbf n^* = \phi^{-q}\mathbf n$.
As remarked above, fixing $\Pi_\Psi$ is equivalent
to fixing $\omega$, $P_\Psi$ and additionally $A$.
Thus the variables and transformation rules are identical
to those found in Sections \ref{secsec:conf-fluid},
\ref{secsec:dust} and \ref{secsec:proca} and we find by
the same computation as in those sections that
the right-hand side of the LCBY equations 
\eqref{eq:Lich-final}--\eqref{eq:LCBY-mom-final} are
 \begin{align*}
16\pi\phi^{q-1}\hamE^* &= 
2\phi^{-3}|E|_{h} + \phi^{3-q}|dA|_{h}^2
+16\pi\mathbf n\sqrt{m^2\phi^{-2}+\phi^{-q}|P_\Psi|_h^2}\\
-8\pi \phi^q \momJ^* 
&=-2\ip<E,\cdot \interior d A>_{h} -8\pi\mathbf n P_\Psi
 \end{align*}
The right-hand side of the momentum constraint has decoupled,
and the right-hand side of the Lichnerowicz equation is
monotone decreasing in $\phi$ and can be handled by 
barrier techniques.

The number density is fixed throughout
this construction: it is $\mathbf n\, dV_{h}=\mathbf n^* dV_{h^*}$.
Hence so are the mass and charge densities, which are proportional
to the number density by the constants $m$ and $\chargec$, and the fluid has
a constant charge to mass ratio.  
The
final electric field satisfies $\div_{h^*} E^* = 4\pi\chargec\, n^*$,
and arranging for this along with the  constant
charge to mass ratio requires compatibility between two different
scaling rules.  The standard scaling $E^*=\phi^{-2}E$ for the electric field implies, since $\div_{h^*} (\phi^{-2} E) = \phi^{-q} \div_{h} E$,
that the (scalar) charge density must scale like $\phi^{-q}$.  
But the constant charge to mass ratio then implies that the scalar
mass density must also scale like $\phi^{-q}$, which fortuitously 
agrees with the scaling we used for uncharged fluids 
in Section \ref{secsec:conf-fluid}: $\mathbf n^* = \phi^{-q}\mathbf n$.  If the approaches of \cite{dain_initial_2002} or \cite{isenberg_gluing_2005}  
had been applied to the fluid variables, the scalar charge density would
have scaled like $\phi^{-q}$, but the scalar mass density would not have,
and the resulting solution of the constraints would have had a mass density unrelated to the charge density.  As remarked above, our
construction can alternatively arrange for an 
arbitrary charge to mass ratio as 
a function of space by specifying a separate reference charge density
on the material manifold.

\section*{Acknowledgment}
This work was supported by NSF grants DMS-1263544, DMS-1263431 and 
PHY-1707427. 
We thank Paul Allen and Iva Stavrov for their comments on an early draft of this work,
Volker Schlue for pointing out stiff fluids as a
interesting fluid application, and our colleagues Rafe Mazzeo and Michael
Holst for their partnership with us in the constraint equations
FRG research group that incubated this work.

\bibliographystyle{amsalpha-abbrv}
\bibliography{conformalmatter,ConformalMatter-DAM}
\end{document}